\documentclass[11pt]{article} 
\pdfoutput=1
\usepackage[utf8]{inputenc}
\usepackage[T1]{fontenc}
\usepackage[english]{babel}

\usepackage{amsmath,amsfonts,amssymb,amsthm}
\usepackage{enumitem}
\usepackage{mathtools}
\usepackage{booktabs}

\usepackage[mono=false]{libertine}
\usepackage[cmintegrals,libertine]{newtxmath}
\usepackage[cal=euler, scr=boondoxo]{mathalfa}

\useosf
\linespread{1.1}

\usepackage[a4paper,vmargin={3.5cm,3.5cm},hmargin={2.5cm,2.5cm}]{geometry}
\usepackage[font={small,sf}, labelfont={sf,bf}, margin=1cm]{caption}
\captionsetup{width=0.8\textwidth}

\usepackage[colorlinks=true]{hyperref}
\usepackage{color,graphicx}

\usepackage{stackrel}
\usepackage{soul}

\theoremstyle{plain}
\newtheorem{theorem}{Theorem}
\newtheorem{corollary}[theorem]{Corollary}
\newtheorem{proposition}[theorem]{Proposition}
\newtheorem{lemma}[theorem]{Lemma}
\theoremstyle{definition}

\newcommand{\ind}{\mathbf{1}}
\newcommand{\R}{\mathbb{R}}

\newcommand{\Z}{\mathbb{Z}}
\newcommand{\Q}{\mathbb{Q}}
\newcommand{\rmd}{\mathrm{d}}
\newcommand{\map}{\mathfrak{m}}
%%%----------------------------------------------------------------------------------------------------------------------- 

\begin{document}
	
	\title{\bf Irreducible metric maps and Weil--Petersson volumes}
	\author{\textsc{Timothy Budd}\footnote{Radboud University, Nijmegen, The Netherlands. Email: \href{mailto:t.budd@science.ru.nl}{t.budd@science.ru.nl}}}
	\date{\today}
	\maketitle
	\begin{abstract}
		We consider maps on a surface of genus $g$ with all vertices of degree at least three and positive real lengths assigned to the edges.
		In particular, we study the family of such metric maps with fixed genus $g$ and fixed number $n$ of faces with circumferences $\alpha_1,\ldots,\alpha_n$ and a $\beta$-irreducibility constraint, which roughly requires that all contractible cycles have length at least $\beta$.
		Using recent results on the enumeration of discrete maps with an irreducibility constraint, we compute the volume $V_{g,n}^{(\beta)}(\alpha_1,\ldots,\alpha_n)$ of this family of maps that arises naturally from the Lebesgue measure on the edge lengths.
		It is shown to be a homogeneous polynomial in $\beta, \alpha_1,\ldots, \alpha_n$ of degree $6g-6+2n$ and to satisfy string and dilaton equations.
		Surprisingly, for $g=0,1$ and $\beta=2\pi$ the volume $V_{g,n}^{(2\pi)}$ is identical, up to powers of two, to the Weil--Petersson volume $V_{g,n}^{\mathrm{WP}}$ of hyperbolic surfaces of genus $g$ and $n$ geodesic boundary components of length $L_i = \sqrt{\alpha_i^2 - 4\pi^2}$, $i=1,\ldots,n$.
		For genus $g\geq 2$ the identity between the volumes fails, but we provide explicit generating functions for both types of volumes, demonstrating that they are closely related.
		Finally we discuss the possibility of bijective interpretations via hyperbolic polyhedra. 
	\end{abstract}
	\vspace{1cm}

	\begin{figure}[h]
		\centering
		\includegraphics[width=.95\linewidth]{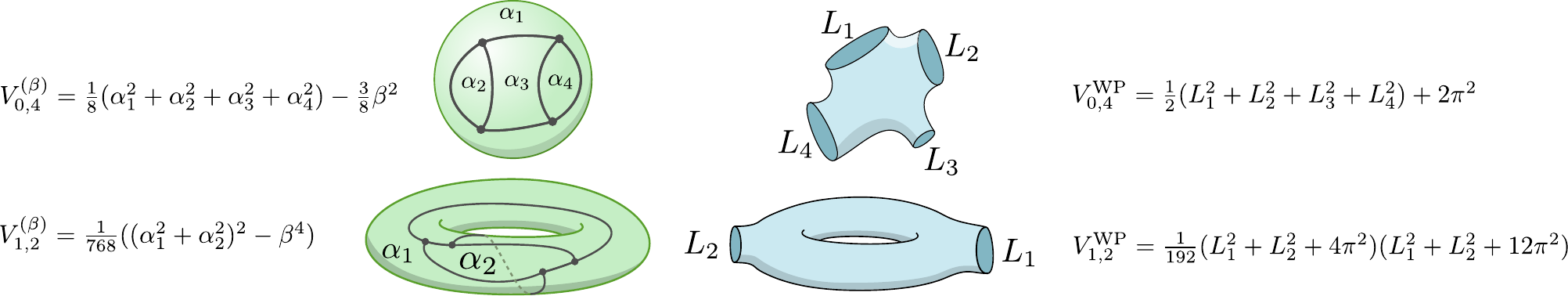}
		\caption{Examples of the volume polynomials $V_{g,n}^{(\beta)}(\alpha_1,\ldots,\alpha_n)$ of essentially $\beta$-irreducible metric maps of genus $g$ with $n$ labeled faces of circumferences $\alpha_1,\ldots,\alpha_n$, compared to the Weil--Petersson volumes $V_{g,n}^{\mathrm{WP}}(L_1,\ldots,L_n)$ of hyperbolic surfaces of genus $g$ with $n$ labeled geodesic boundaries of lengths $L_1,\ldots,L_n$.
		In the case $g=0,1$ and $\beta = 2\pi$ they satisfy $V_{g,n}^{(2\pi)} = 2^{2-2g-n} V_{g,n}^{\mathrm{WP}}$ under the identification $L_i^2 = \alpha_i^2 - 4\pi^2$ (see Corollary \ref{thm:metricwpequivalence}).
		\label{fig:volumecomp}}
	\end{figure}
	\clearpage

\section{Introduction}

How large is the space of metric structures one can put on a surface of given topology?
As stated it is not a well-posed question, but it can be turned into one by a suitable restriction on the class of metrics and/or choice of measure.
This can be done in various ways.
One can take the discrete approach, in which one counts the number of combinatorial maps (or ribbon graphs) one can draw on a surface with, say, fixed number of faces of specified degrees.
Roughly this can be understood as restricting the metric spaces to those of fixed topology that can be assembled from gluing a given collection of regular polygons of unit side length.
Another approach is to reduce the space of metric structures by imposing a constant curvature restriction.
It is classical that such surfaces are equivalently characterized by their complex structure, so one is really dealing with moduli spaces of Riemann surfaces.
These moduli spaces are naturally equipped with the Weil--Petersson symplectic structure that gives rise to a non-degenerate volume form, so one may quantify the size of the space of hyperbolic metric structures by computing the Weil--Petersson volumes of moduli spaces.
Both approaches have a long history in physics (e.g.\ two-dimensional quantum gravity, string theory) and mathematics (e.g.\ combinatorics, algebraic geometry, geometric topology) and share many features.

For instance, both are integrable in the sense that appropriately defined generating functions satisfy infinite hierarchies of partial differential equations (see e.g. \cite{Lando_Graphs_2004} and references).
Weil--Petersson volumes appear in solutions to the KdV hierarchy
\cite{Witten_Two_1991,Kontsevich1992,Manin_Invertible_2000,Mirzakhani2007a,Mulase_Mirzakhanis_2008,Liu_Recursion_2009},
while map enumeration problems are related to solutions of the KP and 2-Toda hierarchies \cite{Okounkov_Toda_2000,Goulden_KP_2008,Louf_new_2019}.
Both satisfy topological recursion equations (see \cite{Eynard2016} for an overview) in the form of Tutte's equation for maps \cite{Tutte_census_1963,Ambjorn_Matrix_1993,Eynard_Formal_2011} and Mirzakhani's recursion for Weil--Petersson volumes \cite{Mirzakhani2007}.
These are succinctly summarized in the data of a spectral curve \cite{Eynard_Invariants_2007,Eynard_Weil_2007}.
Also at the level of statistical properties of random surfaces there is some overlap, though the bulk of work in random maps has focussed on the planar case \cite{LeGall_Brownian_2019,LeGall_Scaling_2012} while random hyperbolic surfaces sampled proportionally to the Weil--Petersson volume have mainly been investigated in the large-genus regime \cite{Mirzakhani_Growth_2013}.
For recent progress on properties of random maps surfaces of large genus see \cite{Budzinski_Local_2020,Louf_Planarity_2020}. 

Despite the similarities there are still a lot of powerful combinatorial and probabilistic techniques that have not yet found their way from one approach to the other.
To facilitate the transfer of methods it would be of significant help to understand the relation between maps and hyperbolic surfaces at a bijective level, something that is currently lacking for general hyperbolic surfaces.
This work can be viewed as a first step in this direction, by identifying a class of maps for which in certain situations the associated volumes agree with the Weil--Petersson volumes.

Let us summarize how this rather unexpected coincidence arises from the combinatorics of maps when we disallow short cycles.  
In the planar case the general problem of enumerating maps with prescribed degrees and girth, i.e.\ the length of the shortest cycle, was made accessible by Bernardi \& Fusy \cite{Bernardi_bijection_2012,Bernardi_Unified_2012} who introduced a tree encoding for such maps. 
Planar maps with the slightly stronger constraint of $d$-irreducibility, which requires that the girth is at least $d$ and that the faces of degree $d$ are the only cycles of length $d$, were enumerated by Bouttier \& Guitter \cite{Bouttier2014,Bouttier2014a}.
In \cite{Budd2020} we used their methods to show that for each $n\geq 3$ and $b\geq 1$ the number of $2b$-irreducible planar maps with $n$ faces of prescribed evens degrees and with no vertices of degree one admits a simple formula, namely a polynomial in $b$ and the face degrees. 
This can be extended to higher genus if one imposes the irreducibility constraint on the universal cover of the maps, leading to the notion of \emph{essentially $2b$-irreducible} maps on surfaces of arbitrary genus.
Having the enumeration encoded in polynomials makes it very easy to study the limit in which both $b$ and the face degrees become large, meaning that we are counting maps in which all cycles are required to be long.
In this limit one is naturally led to consider continuous analogues of discrete maps in the form of \emph{metric maps}, which are maps with real lengths assigned to the edges.
Such maps admit an analogous criterion of irreducibility in which the discrete length $2b$ is replaced by a real minimal length $\beta$.
Since there is a continuum of (essentially) $\beta$-irreducible maps the question of enumeration becomes one of the computation of a volume with respect to a natural measure, just like in the case of hyperbolic surfaces.
In this case the measure arises in a simple fashion from the Lebesgue measure on the real length assignments to the edges.
As we will see these volumes are closely related to the Weil--Petersson volumes of hyperbolic surfaces with geodesic boundary components of prescribed lengths, whose general computation was first solved by Mirzakhani \cite{Mirzakhani2007}. 

Whether this relation has a natural bijective interpretation is still very much an open question in the general case.
There are two regimes in which a bijective interpretation is known.
The first of these is the limit in which the boundary lengths of the hyperbolic surfaces become large, in which case the underlying geometries literally approach (in a Gromov--Hausdorff sense) those of  metric maps \cite{Do_asymptotic_2010,Andersen_Kontsevich_2020} without irreducibility constraint.
These metric maps, obtained from the $\beta$-irreducible metric maps by taking $\beta$ to zero, are precisely the ones appearing in Kontsevich's proof \cite{Kontsevich1992} of Witten's conjecture.
The other regime corresponds to taking all boundary lengths to zero, resulting in hyperbolic surfaces with cusps.
In the planar case such surfaces can be related to $2\pi$-irreducible metric maps with faces of circumference $2\pi$ via two bijections of Rivin \cite{Rivin1992,Rivin1996} involving ideal polyhedra in three-dimensional hyperbolic space.
Computations by Charbonnier, David \& Eynard in \cite{David_Planar_2014,Charbonnier2017} show that this relation indeed identifies the corresponding metric map volumes with the Weil--Petersson volumes.
This connection will be summarized in Section~\ref{sec:polyhedra} and prospects for generalizations to different topologies and non-zero boundary lengths will be discussed.
However, we start by giving precise definitions and statements of our main enumerative results.

\subsection{Definition of metric map volumes}\label{sec:metricmapdef}
A \emph{genus-$g$ map} is a (multi)graph that is properly embedded in a surface of genus $g$, viewed up to orientation-preserving homeomorphisms of the surface.
Here \emph{properly embedded} means that edges only meet at their endpoints and that the complement of the graph is a disjoint union of topological disks.
We denote the set of vertices, edges, and faces of a map $\map$ by $\mathcal{V}(\map)$, $\mathcal{E}(\map)$ and $\mathcal{F}(\map)$ respectively.
A \emph{cubic} map is a map with all vertices of degree three.
A map is \emph{rooted} if it is equipped with a distinguished oriented edge, the \emph{root edge}.

A \emph{genus-$g$ metric map} is a genus-$g$ map with all vertices of degree at least three and a positive real number associated to each edge, which we interpret as the length of the edge.
A planar metric map is said to be \emph{$\beta$-irreducible} for some $\beta \geq 0$ if there is no simple cycle in the map that has length smaller than $\beta$ and each cycle of length $\beta$ corresponds to the contour of a face (of circumference $\beta$).
For $g\geq 1$, a genus-$g$ metric map is \emph{essentially $\beta$-irreducible} if its universal cover, seen as an infinite planar metric map, is $\beta$-irreducible.
We denote the set of such maps with $n$ labeled faces by $\mathcal{R}^{(\beta)}_{g,n}$, and the corresponding set of rooted maps by $\vec{\mathcal{R}}^{(\beta)}_{g,n}$.

The main quantity of interest is the volume $V_{g,n}^{(\beta)}(\alpha_1,\ldots,\alpha_n)$ of the maps in $\vec{\mathcal{R}}^{(\beta)}_{g,n}$ whose $i$th face has circumference exactly $\alpha_i$, where  the volume measure arises roughly from the Lebesgue measure on the edge lengths.
To avoid ambiguities in the normalization of the volume measure and peculiarities that can occur for special choices of $\alpha_1,\ldots,\alpha_n$, we take some care in its definition.
The set $\vec{\mathcal{R}}^{(\beta)}_{g,n}$ is naturally partitioned into subsets sharing the same underlying rooted map (forgetting the edge lengths).
A special role is played by the rooted maps that are \emph{cubic}, i.e.\ each vertex has degree three, which have a maximal number of edges equal to $6g-6+3n$.
We denote the set of these rooted genus-$g$ cubic maps by $\vec{\mathcal{C}}_{g,n}$.
For $\mathfrak{s}\in\vec{\mathcal{C}}_{g,n}$ the corresponding subset of $\vec{\mathcal{R}}^{(\beta)}_{g,n}$ is described by a convex open polytope in the space $\R_{>0}^{6g-6+3n}$ of possible length assignments $\theta_1,\ldots,\theta_{6g-g+3n}$ to its edges (see Section~\ref{sec:discrete} for details).
Let $\mu_{\mathfrak{s}}$ be the push-forward of the $(6g-6+3n)$-dimensional Lebesgue measure $\rmd\theta_1\cdots\rmd\theta_{6g-6+3n}$ on this polytope to $\vec{\mathcal{R}}^{(\beta)}_{g,n}$ and define the measure
\begin{equation}\label{eq:measuremu}
	\mu_{g,n} = \sum_{\mathfrak{s}\in\vec{\mathcal{C}}_{g,n}} \frac{\mu_{\mathfrak{s}}}{2|\mathcal{E}(\mathfrak{s})|}
\end{equation}
on $\vec{\mathcal{R}}^{(\beta)}_{g,n}$, which thus assigns zero measure to the non-cubic maps.
Let $\mathsf{Circ} : \vec{\mathcal{R}}^{(\beta)}_{g,n} \to [\beta,\infty)^{n}$ denote the assignment of circumferences to the $n$ labeled faces.
Then the push forward of $\mu_{g,n}$ along $\mathsf{Circ}$ defines a measure on $[\beta,\infty)^{n}$ that has a density with respect to the Lebesgue measure $\rmd\alpha_1\cdots\rmd\alpha_n$ on $[\beta,\infty)^{n}$.
We define the \emph{volume of essentially $\beta$-irreducible genus-$g$ metric maps with $n$ faces of circumferences $\alpha_1,\ldots,\alpha_n$} to be
\begin{align*}
V_{g,n}^{(\beta)}(\alpha_1,\ldots,\alpha_n) &=   \frac{\mathsf{Circ}_*\mu_{g,n}}{\rmd\alpha_1\cdots\rmd\alpha_n},
\end{align*}
where the second fraction represents the Radon--Nikodym derivatives. 
See Figure \ref{fig:maps3faces} and Figure \ref{fig:torusmap} for the simple examples of $(g,n)=(0,3)$ and $(g,n)=(1,1)$, illustrating that 
\begin{equation}\label{eq:initialvolumes}
	V_{0,3}^{(\beta)}(\alpha_1,\alpha_2,\alpha_3) = \frac{1}{2}, \qquad V_{1,1}^{(\beta)}(\alpha_1) = \frac{\alpha_1^2}{96}.
\end{equation}
These examples are special in the sense that the irreducibility constraint is vacuous. 
In general $V_{g,n}^{(\beta)}$ will of course depend on $\beta$.

\begin{figure}
	\centering
	\includegraphics[width=.95\linewidth]{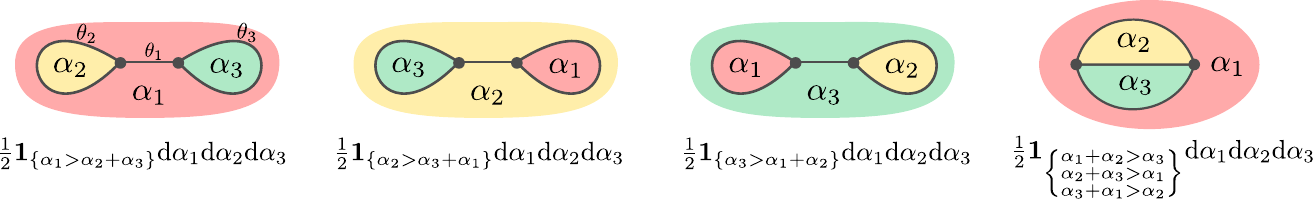}
	\caption{There are four cubic planar maps with three labeled faces (which can be rooted in six ways each). 
	The formula below each map $\mathfrak{s}$ gives the measure $\mathsf{Circ}_*\mu_{\mathfrak{s}}$ on $[\beta,\infty)^n$.
	For example, the measure $\tfrac{1}{2} \mathbf {1}_{\{\alpha_1>\alpha_2+\alpha_3\}}\mathrm{d}\alpha_1\mathrm{d}\alpha_2\mathrm{d}\alpha_3$ for the left map is obtained as the push-forward of the Lebesgue measure $\rmd\theta_1\rmd\theta_2\rmd\theta_3$ along $(\theta_1,\theta_2,\theta_3) \mapsto (2\theta_1+\theta_2+\theta_3,\theta_2,\theta_3)$.
	Adding up all four contributions gives $\mu_{0,3} = \tfrac{1}{2} \mathrm{d}\alpha_1\mathrm{d}\alpha_2\mathrm{d}\alpha_3$, hence $V_{0,3}^{(\beta)}(\alpha_1,\alpha_2,\alpha_3) = \tfrac12$.
	\label{fig:maps3faces}}
\end{figure}
\begin{figure}
	\centering
	\includegraphics[width=.2\linewidth]{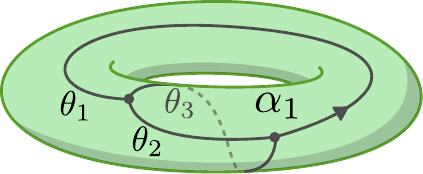}
	\caption{There exists a single rooted cubic genus-$1$ map $\mathfrak{s} \in \vec{\mathcal{C}}_{1,1}$.
	The push-forward of the Lebesgue measure $\rmd\theta_1\rmd\theta_2\rmd\theta_3$ along $(\theta_1,\theta_2,\theta_3) \mapsto \alpha_1=2\theta_1+2\theta_2+2\theta_3$ is $\frac{\alpha_1^2}{16}\rmd\alpha_1$. Hence $\mu_{1,1} = \frac{\alpha_1^2}{96} \rmd\alpha_1$ and $V_{1,1}^{(\beta)}(\alpha_1) = \frac{\alpha_1^2}{96}$.
	\label{fig:torusmap}}
\end{figure}

As a remark, we mention that we could have replaced the property of essential $\beta$-irreducibility by that of having essential girth at least $\beta$, where the \emph{essential girth} of a metric map is the minimal metric length of a simple cycle on the universal cover of the map.
Note that any essentially $\beta$-irreducible metric map necessarily has essential girth at least $\beta$.
In fact, the set $\vec{\mathcal{R}}^{(\beta)}_{g,n}$ of essentially $\beta$-irreducible metric maps is an open subset of full Lebesgue measure in the set of metric maps with essential girth at least $\beta$.
Indeed, the only maps that are missing in the former are those metric maps that have a simple cycle of length $\beta$ in the universal cover that does not bound a face.
These maps comprise a subset of positive codimension and thus have zero Lebesgue measure.
It follows that $V_{g,n}^{(\beta)}(\alpha_1,\ldots,\alpha_n)$ also measures the volume of metric maps with essential girth at least $\beta$, while $-\frac{\partial}{\partial \beta}V_{g,n}^{(\beta)}(\alpha_1,\ldots,\alpha_n)$ with $\alpha_i > \beta$, $i=1,\ldots,n$, is the volume of metric maps with essential girth exactly equal to $\beta$.

\subsection{Main results}

For the determination of $V_{g,n}^{(\beta)}$ we rely on the previous work \cite{Budd2020}, in which we studied the enumeration of discrete maps with an irreducibility constraint.
In particular, we considered genus-$g$ maps with $n$ faces of even degrees $2\ell_1,\ldots,2\ell_n$ with no vertices of degree one and that are essentially $2b$-irreducible for a positive integer $b$.
This discrete notion of irreducibility is very similar to the one described above: it requires that all simple cycles in the universal cover have (discrete) length at least $2b$ and they are only allowed to have length exactly $2b$ if they bound a face of degree $2b$. 
According to \cite[Theorem 1]{Budd2020} these maps are enumerated by polynomials $\hat{N}_{g,n}^{(b)}(\ell_1,\ldots,\ell_n)$ of degree $6g-6+2n$ in $b,\ell_1,\ldots,\ell_n$.
See Section~\ref{sec:discrete} for details and precise statements.
In Proposition \ref{thm:polynomiallimit} we show that the set of essentially $\beta$-irreducible metric maps naturally arises in the limit of large $b$ and $\ell$ and that the volume $V_{g,n}^{(\beta)}$ can be extracted from the leading order of the polynomials $\hat{N}_{g,n}^{(b)}$ via
\begin{equation*}
V_{g,n}^{(\beta)}(\alpha_1,\ldots,\alpha_n) = \frac{1}{2}\,[t^{6g-6+2n}]\, 
\hat{N}_{g,n}^{(\tfrac12\beta t)}(\tfrac12t\alpha_1,\ldots,\tfrac12t\alpha_n),
\end{equation*}
where $[t^k]f(t)$ denotes the coefficient of $t^k$ in $f(t)$. 
This allows us to derive the first main result describing several properties about the volume polynomials, which are analogous to the discrete results of \cite[Theorem 1]{Budd2020}.

\begin{theorem}\label{thm:realmaps}
	For any $\beta \in[ 0,\infty)$, $g\geq 0$ and $n\geq 1$ (provided $n \geq 3$ if $g=0$), $V^{(\beta)}_{g,n}(\alpha_1,\ldots,\alpha_n)$ is a symmetric polynomial of degree $3g-3+n$ in $\alpha_1^2,\ldots,\alpha_n^2$ and a homogeneous polynomial of degree $3g-3+n$ in $\beta^2,\alpha^2_1,\ldots,\alpha^2_n$.
	The polynomials satisfy the ``string'' and ``dilaton'' equations
	\begin{align}
	V^{(\beta)}_{g,n+1}(\alpha_1,\ldots,\alpha_n,0) &= \frac{1}{2}\sum_{j=1}^n \int_{\beta}^{\alpha_j} \alpha_j V^{(\beta)}_{g,n}(\alpha_1,\ldots,\alpha_n)\,\rmd\alpha_j, \label{eq:string} \\
	\frac{\partial^2V^{(\beta)}_{g,n+1}}{\partial\alpha_{n+1}^2}(\alpha_1,\ldots,\alpha_n,0) &= \frac{1}{2}(2g-2+n) V^{(\beta)}_{g,n}(\alpha_1,\ldots,\alpha_n).\label{eq:dilaton}
	\end{align}
	In the case $g\geq 1$ and $n=1$, $V^{(\beta)}_{g,1}(\alpha_1)$ is independent of $\beta$ and is given by a rational multiple of $\alpha_1^{6g-4}$.		
\end{theorem}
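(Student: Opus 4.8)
The plan is to deduce all four assertions from the discrete enumeration of \cite{Budd2020} through the limit formula of Proposition~\ref{thm:polynomiallimit}, rather than working with the volume measure directly. Write $N^{\mathrm{top}}_{g,n}(b,\ell_1,\ldots,\ell_n)$ for the homogeneous component of $\hat N_{g,n}^{(b)}(\ell_1,\ldots,\ell_n)$ of top degree $6g-6+2n$. Substituting $b=\tfrac12\beta t$ and $\ell_i=\tfrac12 t\alpha_i$ multiplies every monomial of degree $d$ by $t^{d}$, so that the extraction $[t^{6g-6+2n}]$ in Proposition~\ref{thm:polynomiallimit} simply evaluates $N^{\mathrm{top}}_{g,n}$ at $(\tfrac12\beta,\tfrac12\alpha_1,\ldots,\tfrac12\alpha_n)$. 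Thus $V^{(\beta)}_{g,n}$ equals $\tfrac12 N^{\mathrm{top}}_{g,n}$ in these rescaled variables, which is manifestly homogeneous of degree $6g-6+2n$ in $(\beta,\alpha_1,\ldots,\alpha_n)$, and symmetric in the $\alpha_i$ because $\hat N_{g,n}^{(b)}$ is symmetric in its face arguments. To promote homogeneity to ``polynomial of degree $3g-3+n$ in the squared variables'' I would invoke the parity statement of \cite[Theorem~1]{Budd2020} recalled in Section~\ref{sec:discrete}, namely that $\hat N_{g,n}^{(b)}$ involves only even powers of each of $b,\ell_1,\ldots,\ell_n$; this forces $V^{(\beta)}_{g,n}$ to be a polynomial in $\beta^2,\alpha_1^2,\ldots,\alpha_n^2$, homogeneous of degree $3g-3+n$.

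For the string and dilaton equations I would transport the corresponding discrete identities for $\hat N_{g,n}^{(b)}$ across the same substitution and coefficient extraction. On the discrete side, the evaluation $\alpha_{n+1}=0$ corresponds to a face of minimal degree whose removal produces a summation over the degree of a neighbouring face; under the scaling $\ell_j=\tfrac12 t\alpha_j$ and extraction of $[t^{6g-6+2n}]$ this sum becomes the integral $\int_\beta^{\alpha_j}$ of \eqref{eq:string}, with lower limit inherited from the threshold $b=\tfrac12\beta t$. The dilaton equation \eqref{eq:dilaton} arises from the operation $\tfrac12\partial^2_{\alpha_{n+1}}|_{\alpha_{n+1}=0}$, i.e.\ from isolating the $\alpha_{n+1}^2$-coefficient, applied to the discrete dilaton identity, the weight $2g-2+n$ being the Euler-characteristic factor already present discretely. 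The point requiring care is that the passage from a discrete sum to an integral is \emph{exact} at order $t^{6g-6+2n}$: by the Faulhaber/Euler--Maclaurin formula the top-degree part of a polynomial sum coincides with the integral of its top-degree part, so polynomiality turns the limit into an algebraic identity between the top homogeneous parts, and no subleading corrections survive.

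The final assertion about $V^{(\beta)}_{g,1}$ is the most delicate, and I expect it to be the main obstacle. By the homogeneity already established, $V^{(\beta)}_{g,1}(\alpha_1)=\sum_{k=0}^{3g-2}c_k\,\beta^{2k}\alpha_1^{6g-4-2k}$ with $c_k\in\Q$, so the claim is equivalent to $c_k=0$ for every $k\ge1$, that is, to the vanishing of all $b$-dependence in $N^{\mathrm{top}}_{g,1}(b,\ell_1)$. Geometrically this should reflect that for a one-face metric map every non-facial simple cycle in the universal cover already has length at least the face circumference $\alpha_1\ge\beta$, so that the irreducibility constraint is vacuous; this is transparent for $(g,n)=(1,1)$, consistent with \eqref{eq:initialvolumes}. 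The difficulty is that this decoupling does not seem to follow from a soft geometric estimate uniformly in $g$, since for larger genus one can exhibit individual cubic maps carrying short non-facial cycles, and the $\beta$-dependence only cancels after summing over all cubic maps with the weights $1/(2|\mathcal{E}(\mathfrak{s})|)$ of \eqref{eq:measuremu}. I would therefore derive $c_k=0$ $(k\ge1)$ from the explicit $n=1$ form of the discrete result in \cite{Budd2020}, and read off rationality of the surviving coefficient $c_0$ from the rationality of the coefficients of $\hat N_{g,1}^{(b)}$.
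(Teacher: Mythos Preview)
Your proposal is correct and follows essentially the same route as the paper: Proposition~\ref{thm:polynomiallimit} for the polynomiality and homogeneity, extraction of the top-degree part of the discrete string and dilaton equations \eqref{eq:discretestring}--\eqref{eq:discretedilaton} from \cite[Theorem~2]{Budd2020} (with the Faulhaber/Euler--Maclaurin observation turning sums into integrals exactly at top order), and the $n=1$ statement imported directly from the $b$-independence of $\hat N_{g,1}^{(b)}$ in \cite[Theorem~2]{Budd2020}.

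One small slip: what is recalled in Section~\ref{sec:discrete} from \cite[Theorem~1]{Budd2020} is that $\hat N_{g,n}^{(b)}$ is even in each $\ell_i$, \emph{not} that it is even in $b$. The evenness in $\beta$ of $V_{g,n}^{(\beta)}$ is obtained (as in the proof of Proposition~\ref{thm:polynomiallimit}) by combining evenness in the $\ell_i$ with the fact that the extracted homogeneous degree $6g-6+2n$ is even, which forces the remaining $b$-exponent to be even as well. Your conclusion is unaffected, but the justification should be phrased this way. Also, your worry that the $n=1$ case is ``the most delicate'' is misplaced here: the paper simply quotes the discrete $b$-independence from \cite{Budd2020} and the cancellation you describe is already absorbed there.
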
	

\noindent
The proof of this theorem is the subject of Section~\ref{sec:discrete}.
Several polynomials for small $g$ and $n$ are listed in Table \ref{tab:polynomials}.
By homogeneously scaling all edge lengths and $\beta$ it is easy to see that for $\beta > 0$,
\begin{equation}\label{eq:Vscaling}
V_{g,n}^{(\beta)}(\alpha_1,\ldots,\alpha_n) = \left(\tfrac{\beta}{2\pi}\right)^{6g-6+2n} V_{g,n}^{(2\pi)}(\tfrac{2\pi}{\beta}\alpha_1,\ldots,\tfrac{2\pi}{\beta}\alpha_n).
\end{equation}
The reason for choosing $\beta = 2\pi$ as a natural reference value will become clear soon.

\begin{table}[t]
	\begin{tabular*}{\textwidth}{@{}ccl@{}} \toprule
		$g$ & $n$ & $V_{g,n}^{(\beta)}(\alpha_1, \ldots, \alpha_n)$ \\ \midrule
		$0$ & $3$ & $
		\frac{1}{2}$ \\
		& $4$ & $\frac{1}{8} m_{(1)}-\frac{3}{8} \beta ^2$ \\
		& $5$ & $\frac{1}{64} m_{(2)}+\frac{1}{16} m_{(1,1)}-\frac{3}{16} \beta ^2 \
		m_{(1)}+\frac{5}{16} \beta ^4$ \\
		& $6$ & $\frac{1}{768} m_{(3)}+\frac{3}{256} m_{(2,1)}+\frac{3}{64} \
		m_{(1,1,1)}-\frac{5}{128} \beta ^2 m_{(2)}-\frac{9}{64} \beta ^2 \
		m_{(1,1)}+\frac{15}{64} \beta ^4 m_{(1)}-\frac{215}{768} \beta ^6$\\ \midrule
		$1$ & $1$ & $\frac{1}{96} m_{(1)}$ \\
		& $2$ & $\frac{1}{768} m_{(2)}+\frac{1}{384} m_{(1,1)}-\frac{1}{768} \beta ^4 $\\
		& $3$ & $\frac{m_{(3)}}{9216}+\frac{m_{(2,1)}}{1536}+\frac{1}{768} m_{(1,1,1)}-\frac{\beta ^2 m_{(2)}}{3072}-\frac{\beta ^4 m_{(1)}}{1536}+\frac{\beta ^6 }{2304}$\\ 
		& $4$ & $\frac{m_{(4)}}{147456}+\frac{m_{(3,1)}}{12288}+\frac{m_{(2,2)}}{6144}+\frac{m_{(2,1,1)}}{2048}+\frac{m_{(1,1,1,1)}}{1024}-\frac{\beta ^2 m_{(3)}}{12288}-\frac{\beta ^2 m_{(2,1)}}{4096}$\\
		& & $\quad-\frac{\beta ^4 m_{(2)}}{12288}-\frac{\beta ^4 m_{(1,1)}}{2048}+\frac{\beta ^6 m_{(1)}}{3072}-\frac{\beta ^8 }{49152}$ \\ \midrule
		$2$ & $1$ & $\frac{m_{(4)}}{3538944}$ \\
		& $2$ & $\frac{m_{(5)}}{70778880}+\frac{m_{(4,1)}}{4718592}+\frac{29 m_{(3,2)}}{35389440}-\frac{\beta ^{10} }{70778880}$\\
		& $3$ & $\frac{m_{(6)}}{1698693120}+\frac{m_{(5,1)}}{70778880}+\frac{11 m_{(4,2)}}{141557760}+\frac{m_{(4,1,1)}}{4718592}+\frac{29 m_{(3,3)}}{212336640}+\frac{29 m_{(3,2,1)}}{35389440}+\frac{7 m_{(2,2,2)}}{3932160}$\\
		& & $\quad-\frac{\beta ^2 m_{(5)}}{283115520}-\frac{\beta ^4 m_{(4)}}{37748736}-\frac{29 \beta ^6 m_{(3)}}{424673280}-\frac{29 \beta ^8 m_{(2)}}{566231040}-\frac{\beta ^{10} m_{(1)}}{70778880}+\frac{\beta ^{12} }{169869312}$\\
		\midrule
		$3$ & $1$ & $\frac{m_{(7)}}{1712282664960}$ \\
		\bottomrule
	\end{tabular*}
	\caption{The first few polynomials $V_{g,n}^{(\beta)}(\alpha_1,\ldots,\alpha_n)$. They are expressed in terms of the basis $m_{(q_1,\ldots,q_k)}(\alpha_1,\ldots,\alpha_n)=\sum_{(p_1,\ldots,p_n)} \alpha_1^{2p_1}\alpha_2^{2p_2}\cdots\alpha_n^{2p_n}$ of symmetric even polynomials, where the sum runs over permutations $(p_1,\ldots,p_n)$ of $(q_1,\ldots,q_k,0,\ldots,0)$. For example, $m_{(2)}(\alpha_1,\alpha_2) = \alpha_1^4+\alpha_2^4$ and $m_{(3,1)}(\alpha_1,\alpha_2,\alpha_3) = \alpha_1^6\alpha_2^2+\alpha_1^6\alpha_3^2+\alpha_2^6\alpha_1^2+\alpha_2^6\alpha_3^2+\alpha_3^6\alpha_1^2+\alpha_3^6\alpha_2^2$.
	Analogous tables for $\hat{N}_{g,n}^{(b)}$ and $V_{g,n}^{\mathrm{WP}}$ can be found in \cite[Table 1]{Budd2020} and \cite[Appendix B]{Do2013} respectively.
	\label{tab:polynomials}}
\end{table}

The occurrence of polynomials in the volumes of metric maps is reminiscent of Weil--Petersson volumes and we will shortly see that they are closely related.
We start by recalling the definition of the Weil--Petersson volumes.
Let $\mathcal{M}_{g,n}(L_1,\ldots,L_n)$ be the moduli space of genus-$g$ hyperbolic surfaces with $n$ labeled geodesic boundary components of lengths $L_1,\ldots,L_n \geq 0$.
It comes with the Weil--Petersson symplectic structure $\omega$, which defines a natural volume measure $\omega^{3g-3+n} / (3g-3+n)!$ on $\mathcal{M}_{g,n}(L_1,\ldots,L_n)$. 
In a celebrated work \cite{Mirzakhani2007} Mirzakhani proved that the total Weil--Petersson volume $V^{\mathrm{WP}}_{g,n}(L_1,\ldots,L_n)$ of $\mathcal{M}_{g,n}(L_1,\ldots,L_n)$ is a symmetric polynomial in $L_1^2,\ldots,L_n^2$ of degree $3g-3+n$. 
Moreover, she obtained a recursion formula that determines all polynomials starting from $V_{0,3}^{\mathrm{WP}}(L_1,L_2,L_3) = 1$ and $V_{1,1}^{\mathrm{WP}}(L_1) = \frac{1}{48}(L_1^2+4\pi^2)$.
See \cite[Appendix B]{Do2013} for a table of Weil--Petersson volumes for small $g$ and $n$ in a format similar to Table \ref{tab:polynomials}.
The similarities should be apparent, especially when looking at the coefficients of the basis polynomials of top degree.

In \cite{Do2009} Do and Norbury proved that the Weil--Petersson volumes satisfy the string and dilaton equations
\begin{align}
	V^{\mathrm{WP}}_{g,n+1}(L_1,\ldots,L_n,2\pi i) &= \sum_{j=1}^n \int_{0}^{L_j} L_j V^{\mathrm{WP}}_{g,n}(L_1,\ldots,L_n)\,\rmd L_j, \label{eq:stringwp} \\
\frac{\partial V^{\mathrm{WP}}_{g,n+1}}{\partial L_{n+1}}(L_1,\ldots,L_n,2\pi i) &= 2\pi i(2g-2+n) V^{\mathrm{WP}}_{g,n}(L_1,\ldots,L_n).\label{eq:dilatonwp}
\end{align}
It is easily seen that these equations are identical to \eqref{eq:string} and \eqref{eq:dilaton}, up to powers of two, after the substitution $L_i = \sqrt{\alpha_i^2 - \beta^2}$ and using the fact that they are even polynomials.
As was shown in \cite[Theorem 4]{Do2009}, in the case $g=0$ and $g=1$ the string and dilaton equations uniquely determine the symmetric polynomials for arbitrary $n$ in terms of the base cases $V_{0,3}^{\mathrm{WP}} = 1$ and $V_{1,1}^{\mathrm{WP}}=\frac{1}{48}(L_1^2+4\pi^2)$.
Comparing to \eqref{eq:initialvolumes}, we immediately deduce the following identity (stated for $\beta=2\pi$, but the general relation is easily deduced from \eqref{eq:Vscaling}).

\begin{corollary}\label{thm:metricwpequivalence}
	For $g=0$ and $n\geq 3$, or $g=1$ and $n\geq 1$, we have the identity 
	\begin{equation}\label{eq:metricmapvsWP}
	V^{(2\pi)}_{g,n}(\alpha_1,\ldots,\alpha_n) = 2^{2-2g-n}\, V_{g,n}^{\mathrm{WP}}\left(\sqrt{\alpha_1^2-4\pi^2},\ldots,\sqrt{\alpha_1^2-4\pi^2}\right).
	\end{equation}  
\end{corollary}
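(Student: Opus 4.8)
The plan is to leverage the uniqueness result of Do and Norbury \cite[Theorem 4]{Do2009}, which states that for $g=0$ (with $n\geq 3$) and $g=1$ (with $n\geq 1$) the Weil--Petersson volume polynomials are uniquely determined by the string equation \eqref{eq:stringwp}, the dilaton equation \eqref{eq:dilatonwp}, and the base cases $V_{0,3}^{\mathrm{WP}}=1$ and $V_{1,1}^{\mathrm{WP}}=\tfrac{1}{48}(L_1^2+4\pi^2)$. The strategy is to show that the rescaled metric-map volumes, written in the variables $L_i^2 = \alpha_i^2 - 4\pi^2$, satisfy the \emph{same} recursion and the \emph{same} base cases, so that the two families of polynomials must coincide term by term.

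First I would set $\beta = 2\pi$ throughout (the general statement then follows from the scaling relation \eqref{eq:Vscaling}, as noted) and define auxiliary polynomials $\tilde{V}_{g,n}(L_1,\ldots,L_n) := 2^{2g-2+n} V_{g,n}^{(2\pi)}(\alpha_1,\ldots,\alpha_n)$ under the substitution $\alpha_i^2 = L_i^2 + 4\pi^2$. Since $V_{g,n}^{(2\pi)}$ is by Theorem \ref{thm:realmaps} an even polynomial in the $\alpha_i$, this substitution produces a genuine polynomial in the $L_i^2$, so $\tilde{V}_{g,n}$ is a well-defined symmetric polynomial of degree $3g-3+n$ in $L_1^2,\ldots,L_n^2$. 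I would then translate the metric-map string and dilaton equations \eqref{eq:string}--\eqref{eq:dilaton} into this variable. The key observations are that setting $\alpha_{n+1}=0$ corresponds to $\alpha_{n+1}^2 = 0$, i.e.\ $L_{n+1}^2 = -4\pi^2$, hence $L_{n+1} = 2\pi i$, matching the evaluation point in \eqref{eq:stringwp}--\eqref{eq:dilatonwp}; that the integration limits $\int_\beta^{\alpha_j}$ become $\int_0^{L_j}$ after the change of variables (since $\alpha_j=\beta=2\pi$ corresponds to $L_j=0$); and that the factors of $2$ in front of the right-hand sides are precisely absorbed by the prefactor $2^{2g-2+n}$, taking into account that the face count increases by one from $n$ to $n+1$.

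Concretely, I would verify that after these substitutions equation \eqref{eq:string} becomes exactly \eqref{eq:stringwp} with $V^{\mathrm{WP}}$ replaced by $\tilde{V}$, and similarly \eqref{eq:dilaton} becomes \eqref{eq:dilatonwp}: the $\tfrac12$ prefactors, the relation $\partial/\partial\alpha_{n+1}^2 = \tfrac{1}{2\alpha_{n+1}}\,\partial/\partial\alpha_{n+1} = \tfrac{1}{L_{n+1}}\,\partial/\partial L_{n+1}$ evaluated at $L_{n+1}=2\pi i$, and the extra power of $2$ from the change in $n$ should all conspire correctly. Finally I would check the base cases: \eqref{eq:initialvolumes} gives $V_{0,3}^{(2\pi)}=\tfrac12$, so $\tilde{V}_{0,3} = 2^{1}\cdot\tfrac12 = 1 = V_{0,3}^{\mathrm{WP}}$, and $V_{1,1}^{(2\pi)}(\alpha_1)=\tfrac{\alpha_1^2}{96}$, so $\tilde{V}_{1,1} = 2^{1}\cdot\tfrac{\alpha_1^2}{96} = \tfrac{L_1^2+4\pi^2}{48} = V_{1,1}^{\mathrm{WP}}$. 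Since $\tilde{V}_{g,n}$ and $V_{g,n}^{\mathrm{WP}}$ satisfy the same recursion with the same initial data, the Do--Norbury uniqueness theorem forces $\tilde{V}_{g,n} = V_{g,n}^{\mathrm{WP}}$, which rearranges to \eqref{eq:metricmapvsWP}.

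The main obstacle I anticipate is bookkeeping the powers of two and the Jacobian factors consistently through the change of variables, particularly ensuring that the derivative $\partial^2/\partial\alpha_{n+1}^2$ in \eqref{eq:dilaton} translates correctly to the single derivative $\partial/\partial L_{n+1}$ in \eqref{eq:dilatonwp} at the point $L_{n+1}=2\pi i$; one must be careful that both sides are evaluated as polynomial identities in the $L_i^2$ rather than appealing to any analytic continuation beyond what the polynomial structure guarantees. I expect this to be purely mechanical once the dictionary $\alpha_i^2 \leftrightarrow L_i^2+4\pi^2$ and the prefactor convention are fixed, but it is the step where sign and factor errors are easiest to introduce.
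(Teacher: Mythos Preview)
Your proposal is correct and follows essentially the same approach as the paper: the paper also invokes the Do--Norbury uniqueness result \cite[Theorem 4]{Do2009}, observes that the string and dilaton equations \eqref{eq:string}--\eqref{eq:dilaton} coincide with \eqref{eq:stringwp}--\eqref{eq:dilatonwp} under the substitution $L_i=\sqrt{\alpha_i^2-\beta^2}$ (up to the powers of two), and checks the base cases \eqref{eq:initialvolumes} against $V_{0,3}^{\mathrm{WP}}=1$ and $V_{1,1}^{\mathrm{WP}}=\tfrac{1}{48}(L_1^2+4\pi^2)$. Your version is somewhat more explicit about the bookkeeping, but the argument is the same.
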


\noindent 
For higher genus the identity \eqref{eq:metricmapvsWP} certainly does not hold. 
In particular, for $g\geq 2$ and $n=1$ we know from Theorem~\ref{thm:realmaps} that $V^{(2\pi)}_{g,1}(\alpha_1) = V^{(0)}_{g,1}(\alpha_1)$ is a monomial in $\alpha_1$ of degree $6g-6+2n$, while $V^{\mathrm{WP}}_{g,1}(\sqrt{\alpha_1^2 - 4\pi^2})$ is certainly not.
Nevertheless one may find a close connection between the two volumes at the level of generating functions.
To this end we fix an integer $d\geq 0$ and real numbers $\alpha_1, \ldots,\alpha_d \geq \alpha_0=2\pi$ and make the implicit identification $L_i = \sqrt{\alpha_i^2 - 4\pi^2}$, such that in particular $L_0=0$, and introduce the formal power series
\begin{align*}
F^{(2\pi)}_g(x_0,\ldots,x_d) &= \sum_{n=1}^\infty\frac{1}{n!} \sum_{i_1=0}^d x_{i_1} \cdots\! \sum_{i_n=0}^d x_{i_n} V^{(2\pi)}_{g,n}(\alpha_{i_1}\ldots,\alpha_{i_n}),\\
F_g^{\mathrm{WP}}(x_0,\ldots,x_d) &=  \sum_{n=0}^\infty\frac{1}{n!} \sum_{i_1=0}^d x_{i_1} \cdots\! \sum_{i_n=0}^d x_{i_n} 2^{2-2g-n}\,V^{\mathrm{WP}}_{g,n}(L_{i_1},\ldots,L_{i_n}).
\end{align*}
By convention $V_{0,0}^{\mathrm{WP}} = V_{0,1}^{\mathrm{WP}} =V_{0,2}^{\mathrm{WP}} = V_{1,0}^{\mathrm{WP}}=0$ because they fall outside of the stable range. Take note of the factor $2^{2-2g-n}$ in the second generating function!
By Corollary \ref{thm:metricwpequivalence} these generating functions are identical for $g\leq 1$.

In order to state expressions for these generating functions, we need to introduce several formal power series.
A central role is played by the formal power series $R(x_0,\ldots,x_d) \equiv \sum_{i=0}^d x_i + \cdots$ defined to be the unique solution to
\begin{equation}\label{eq:Rdef}
	Z(R) = 0,\qquad Z(r)\coloneqq \frac{\sqrt{r}}{\pi}J_1(2\pi\sqrt{r}) - \sum_{i=0}^d x_i \,I_0(L_i\sqrt{r})
\end{equation}
where $I_0$ and $J_1$ are (modified) Bessel functions.
Let also $M^{\mathrm{WP}}_k(x_0,\ldots,x_d)$ and $M^{(2\pi)}_k(x_0,\ldots,x_d)$ be the power series defined recursively via
\begin{equation}\label{eq:Mrecurrence}
	M^{\mathrm{WP}}_0 = M^{(2\pi)}_0 = \frac{1}{\partial_{x_0}R},\quad M^{\mathrm{WP}}_{k} = M^{\mathrm{WP}}_0 \,\partial_{x_0} M^{\mathrm{WP}}_{k-1}, \quad M_{k}^{(2\pi)} = M_0^{(2\pi)}\left(\tfrac{\pi^{2k}}{k!}+\partial_{x_0}M^{(2\pi)}_{k-1}\right),\quad k\geq 1.
\end{equation}
Finally, for $g\geq 2$ we introduce the polynomial
\begin{equation}\label{eq:Ppol}
	\mathcal{P}_g(m_1,\ldots,m_{3g-3}) = \sum_{\substack{d_2,d_3,\ldots\geq 0\\ \sum_{k\geq 2} (k-1)d_k = 3g-3}}\!\!\!\! \langle \tau_2^{d_2}\tau_3^{d_3}\cdots \rangle_g \prod_{k\geq 2} \frac{(-m_{k-1})^{d_k}}{d_k!},
\end{equation}
where $\langle \tau_2^{d_2}\tau_3^{d_3}\cdots \rangle_g$ are the $\psi$-class intersection numbers on the moduli space $\mathcal{M}_{g,n}$ with $n = \sum_k d_k \leq 3g-3$ marked points (see Section~\ref{sec:intersection} for details).
For instance, for $g=2,3$ the polynomials read (see e.g. \cite[(5.27)-(5.30)]{Itzykson_Combinatorics_1992} where $m_k = - I_{k+1}/ (1-I_1)$)
\begin{align*}
	\mathcal{P}_2 &= -\frac{7 m_1^3}{1440}+\frac{29 m_2 m_1}{5760}-\frac{m_3}{1152},\\
	\mathcal{P}_3 &= \frac{245 m_1^6}{20736}-\frac{193 m_2 m_1^4}{6912}+\frac{53 m_3 m_1^3}{6912}+\frac{205 m_2^2 m_1^2}{13824}-\frac{17 m_4 m_1^2}{11520}-\frac{1121 m_2 m_3 m_1}{241920}\\
	&\quad +\frac{77 m_5 m_1}{414720}-\frac{583 m_2^3}{580608}+\frac{607 m_3^2}{2903040}+\frac{503 m_2 m_4}{1451520}-\frac{m_6}{82944}.
\end{align*}

\begin{theorem}[Generating functions for metric maps and WP volumes]\label{thm:genfun}
	The generating functions $F_g^{(2\pi)}$ of essentially $2\pi$-irreducible metric maps are given by
	\begin{align}
	F^{(2\pi)}_0 &= \frac{1}{4}\int_{0}^{R} \!\!\!\rmd r\, Z(r)^2,\label{eq:genfun0}\\
	F^{(2\pi)}_1 &= - \frac{1}{24} \log M^{(2\pi)}_0,\label{eq:genfun1}\\
	F^{(2\pi)}_g &= \frac{2^{g-1}}{\left(M^{(2\pi)}_0\right)^{2g-2}}\,\mathcal{P}_g\left(\frac{M^{(2\pi)}_1}{M^{(2\pi)}_0},\ldots, \frac{M^{(2\pi)}_{3g-3}}{M^{(2\pi)}_0}\right) \quad \text{for }g\geq 2.\label{eq:genfun2}
	\end{align}
	The generating functions $F^{\mathrm{WP}}_g$ of Weil--Petersson volumes are given by the same formulas but with $M^{(2\pi)}_k$ replaced by $M^{\mathrm{WP}}_k$.
\end{theorem}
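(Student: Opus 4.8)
The plan is to recognize both families of volumes as the genus-$g$ free energies of a one-cut spectral curve and to read off \eqref{eq:genfun0}--\eqref{eq:genfun2} from the classical expression of such free energies in terms of the moments of the curve at its branch point. The function $Z(r)$ of \eqref{eq:Rdef} is to be read as the disk amplitude of the relevant curve: expanding $\frac{\sqrt r}{\pi}J_1(2\pi\sqrt r)=\sum_{m\ge0}\frac{(-\pi^2)^m}{m!(m+1)!}r^{m+1}$ and $I_0(L_i\sqrt r)=\sum_{m\ge0}\frac{(L_i^2/4)^m}{(m!)^2}r^m$ exhibits $Z$ as a series whose unique zero $R$ is the branch point and whose Taylor data at $R$ are the moments $M_k$ of \eqref{eq:Mrecurrence}. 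Under this reading the free energies are expressed through the $\psi$-class intersection numbers packaged in $\mathcal P_g$, with the metric-map/WP distinction carried entirely by the moments. The first task is thus to establish this intersection-theoretic description on both sides: for Weil--Petersson volumes through Mirzakhani's theorem \cite{Mirzakhani2007} and the standard reduction of $\exp(2\pi^2\kappa_1)$ to pure $\psi$-classes, and for metric maps through Proposition \ref{thm:polynomiallimit} applied to the discrete counting polynomials $\hat N^{(b)}_{g,n}$ of \cite{Budd2020}.

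By Corollary \ref{thm:metricwpequivalence} the two generating functions agree for $g\le1$, so there it is enough to verify \eqref{eq:genfun0} and \eqref{eq:genfun1} once. Differentiating $F_0=\frac14\int_0^R Z(r)^2\,\rmd r$, the boundary term vanishes because $Z(R)=0$, and with $\partial_{x_i}Z(r)=-I_0(L_i\sqrt r)$ one finds $\partial_{x_i}\partial_{x_j}F_0=\frac12\int_0^R I_0(L_i\sqrt r)I_0(L_j\sqrt r)\,\rmd r$ and hence, using $\partial_{x_k}R=I_0(L_k\sqrt R)/M_0$, the three-point function $\partial_{x_i}\partial_{x_j}\partial_{x_k}F_0=\frac{1}{2M_0}\prod_{a\in\{i,j,k\}}I_0(L_a\sqrt R)$. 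At $x=0$ this equals $\frac12=V^{(2\pi)}_{0,3}$, and since the string and dilaton equations \eqref{eq:string}--\eqref{eq:dilaton} determine every genus-$0$ polynomial from this base case (cf.\ \cite{Do2009}), \eqref{eq:genfun0} follows. For \eqref{eq:genfun1}, differentiating $Z(R)=0$ in $x_0$ yields $M_0=Z'(R)$, and $-\frac{1}{24}\log M_0$ is the universal genus-$1$ free energy, checked against $V^{(2\pi)}_{1,1}=\alpha_1^2/96$.

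For $g\ge2$ the content is the Itzykson--Zuber formula \cite{Itzykson_Combinatorics_1992}: the genus-$g$ free energy of a one-cut curve equals a fixed constant times $M_0^{2-2g}\,\mathcal P_g(M_1/M_0,\ldots,M_{3g-3}/M_0)$, their moments matching as $M_0=1-I_1$, $M_k=-I_{k+1}$, so that $m_k=M_k/M_0$. The prefactor $2^{g-1}$ is a single normalization shared by both theories---consistent at $g=1$, where $2^{g-1}=1$ agrees with the bare $-\frac{1}{24}\log M_0$, and at $g\le1$ generally via Corollary \ref{thm:metricwpequivalence}. It remains to confirm that \eqref{eq:Mrecurrence} produces the Itzykson--Zuber moments of each curve: starting from $M_0=1/\partial_{x_0}R$ and differentiating $Z(R)=0$ repeatedly in $x_0$, the chain rule gives $M_k^{\mathrm{WP}}=M_0\,\partial_{x_0}M^{\mathrm{WP}}_{k-1}$, while the extra $\frac{\pi^{2k}}{k!}$ on the metric-map side is the $\kappa_1$-type shift of times distinguishing the theories; tellingly it modifies only the $M_k$ with $k\ge1$, hence only the $F_g$ with $g\ge2$, exactly as forced by Corollary \ref{thm:metricwpequivalence}.

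The main obstacle is not these formal manipulations but the very first identification: proving that the essentially $2\pi$-irreducible metric-map volumes are governed by $\psi$-class intersection numbers with times encoded in $Z(r)$. Concretely this requires extracting the spectral curve from the discrete enumeration of \cite{Budd2020} and controlling the large-$b$, large-$\ell$ limit of Proposition \ref{thm:polynomiallimit} at the level of the full generating functions; once both spectral curves and their moments are secured, \eqref{eq:genfun0}--\eqref{eq:genfun2} and their WP analogues follow from the universal structure above.
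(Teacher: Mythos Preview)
Your proposal is a plan rather than a proof: you correctly identify the target structure and, in your final paragraph, openly flag the main missing step---showing that the $2\pi$-irreducible metric-map volumes are governed by a one-cut spectral curve with disk amplitude $Z(r)$. That is exactly where the work lies. The paper does not attack this via spectral-curve topological recursion as you suggest; instead it lifts the discrete generating functions $\hat F_g^{(b)}$ of \cite{Budd2020}, which already have the moment form \eqref{eq:Fhat0part}--\eqref{eq:Fhatgpart}, and extracts their leading homogeneous parts under the scaling of Proposition~\ref{thm:polynomiallimit} (this is Proposition~\ref{thm:metricmapgf}). That yields \eqref{eq:genfun0beta}--\eqref{eq:genfun2beta} with an as-yet-unidentified polynomial $\bar{\mathcal P}_g$. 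Separately, the Weil--Petersson formulas are derived from Mirzakhani's intersection-theoretic expression via the Manin--Zograf $\kappa_1\!\to\!\psi$ substitution \eqref{eq:wpsubstitution} and the string/dilaton reductions \eqref{eq:stringcons}--\eqref{eq:dilatoncons}, producing $\mathcal P_g$. Finally $\bar{\mathcal P}_g=\mathcal P_g$ is pinned down by specializing the metric-map side to $\beta=0$ (Kontsevich volumes), where the moments $M_p^{(0)}$ are algebraically independent.

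Two concrete issues with what you did write. First, your genus-$0$ argument is incomplete: matching the three-point value at $x=0$ and invoking uniqueness under string/dilaton only works once you have verified that the candidate $\tfrac14\int_0^R Z^2$ itself yields coefficients obeying \eqref{eq:string}--\eqref{eq:dilaton}, which you do not address. Second, your reading of the extra $\pi^{2k}/k!$ in \eqref{eq:Mrecurrence} as ``the $\kappa_1$-type shift of times'' is at best inverted: it is the Weil--Petersson side that carries $\kappa_1$, yet its moments obey the \emph{simpler} recurrence; the extra term on the metric-map side originates from the asymptotics of the discrete polynomials $Q_p$ in \eqref{eq:qdef}--\eqref{eq:mhat}, not from any $\kappa_1$-class.
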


\noindent
The proof of this theorem is split into two parts. 
In Section~\ref{sec:genfunirrmaps}, more precisely Proposition~\ref{thm:metricmapgf}, we prove the formulas for metric maps in the slightly more general essentially $\beta$-irreducible case, except that the polynomials $\mathcal{P}_g$ remain unidentified. 
In Section~\ref{sec:wpintersection} we prove the analogous formulas for the Weil--Petersson volumes and identify the polynomials $\mathcal{P}_g$ through relations with intersection numbers.
Formulas like these for the generating functions of Weil--Petersson volumes have appeared before in various forms of generality and various degrees of rigour \cite{Itzykson_Combinatorics_1992,Zograf_Weil_1998,Dubrovin_Normal_2001,Okuyama_JT_2020}, but for convenience we have included more or less self-contained proof (in Section~\ref{sec:proofgenfun}).
Let us make several remarks regarding the expressions. \\[1mm]
\textbf{One face/boundary. }As mentioned before $V^{(2\pi)}_{g,1}(\alpha_1)$ is monomial in $\alpha_1$ while $V^{(\mathrm{WP})}_{g,1}(\sqrt{\alpha_1^2-4\pi^2})$ is not for $g\geq 2$. 
	Let us see how this comes about from the point of view of the generating functions.
	The only difference is the term $\pi^{2k} / k!$ in the recurrence \eqref{eq:Mrecurrence}, which ensures that (see Lemma \ref{lem:Mexpansion})
	\begin{equation*}
		M^{(2\pi)}_k(0,\ldots,0) = \ind_{k=0}, \qquad\text{whereas} \qquad M^{\mathrm{WP}}_k(0,\ldots,0) = \frac{(-\pi^2)^k}{k!}.
	\end{equation*}
	The first consequence is that 
	\begin{equation*}
		F_g^{(2\pi)}\big|_{x_i=0} = 0, \qquad\text{while}\qquad F_g^{\mathrm{WP}} \big|_{x_i=0} = 2^{2-2g}\, V_{g,0}^{\mathrm{WP}} = 2^{g-2}  \mathcal{P}_g\left( \frac{-\pi^2}{1!}, \frac{\pi^4}{2!},\frac{-\pi^6}{3!},\ldots \right) \neq 0
	\end{equation*}
	as expected. 
	On the other hand, for a single face of perimeter $\alpha_1$ we get 
	\begin{equation*}
		V_{g,1}^{(2\pi)}(\alpha_1) = \partial_{x_1}F_g^{(2\pi)}\Big|_{x_i=0} = - 2^{g-1} \langle \tau_{3g-2}\rangle_g \,\partial_{x_1} M_{3g-3}^{(2\pi)}\Big|_{x_i=0} = \frac{2^{3-5g}}{(3g-2)!} \langle \tau_{3g-2}\rangle_g \,\alpha_1^{6g-4},
	\end{equation*}
	where in the last equality we used that Lemma \ref{lem:Mexpansion} implies $\partial_{x_1} M_{p}^{(2\pi)}\big|_{x_i=0} = -4^{-p-1} \alpha_1^{2p+2} / (p+1)!$.
	We see that it is indeed a monomial in the face perimeter $\alpha_1$.
	In case of the Weil--Petersson volume $V_{g,1}^{\mathrm{WP}}(L_1)$, however, all terms in the polynomial $\mathcal{P}_g$ contribute, so the answer is not as simple (in particular not monomial in $\alpha_1$).\\[1mm]
\textbf{Explicit formulae in genus $0$.} Let us list some more explicit formulas for genus $g=0$ when one or more faces (or boundaries in the Weil--Petersson case) are distinguished. 
Taking derivatives of \eqref{eq:genfun0} while noting that $Z(R)=0$ we easily find
\begin{align*}
	\frac{\partial}{\partial x_{1}}F^{\mathrm{WP}}_0 &= -\frac{1}{2}\int_0^R I_0(L_1\sqrt{r})Z(r)\rmd r, \\
	\frac{\partial}{\partial x_{1}}\frac{\partial}{\partial x_{2}}F^{\mathrm{WP}}_0 &= \frac{1}{2} \int_0^R I_0(L_1\sqrt{r})I_0(L_2\sqrt{r})\rmd r,\\
	\frac{\partial}{\partial x_{1}}\frac{\partial}{\partial x_{2}}\frac{\partial}{\partial x_{3}}F^{\mathrm{WP}}_0 &= \frac{1}{2} \,I_0(L_1\sqrt{R})I_0(L_2\sqrt{R})I_0(L_3\sqrt{R}) \partial_{x_0}R.
\end{align*}
In each case the integral can be performed explicitly using integral identities like \cite[Equation 5.11(8)]{Watson_Treatise_1995}. 
For instance, 
\begin{align*}
	\frac{\partial}{\partial x_{1}}\frac{\partial}{\partial x_{2}}F^{\mathrm{WP}}_0 &= \frac{L_1\sqrt{R}\, I_1(L_1\sqrt{R})\,I_0(L_2\sqrt{R})-L_2\sqrt{R} \,I_1(L_2\sqrt{R})\,I_0(L_1\sqrt{R})}{L_1^2-L_2^2},\\
	\frac{\partial}{\partial x_{1}}F^{\mathrm{WP}}_0 &= -4\frac{2 \pi  \sqrt{R} J_1\left(2 \pi  \sqrt{R}\right) I_0\left(L_1\sqrt{R} \right)+L_1 \sqrt{R} J_0\left(2 \pi  \sqrt{R}\right) I_1\left(L_1\sqrt{R}
	\right)}{\left(L_1^2+4 \pi ^2\right)^2}\\
	&\quad+\frac{2 \pi  R J_0\left(2 \pi 
	\sqrt{R}\right) I_0\left(L_1\sqrt{R} \right)-L_1 R J_1\left(2 \pi  \sqrt{R}\right) I_1\left(L_1\sqrt{R} \right)}{\pi(L_1^2+4 \pi^2)}  + \sum_{i=0}^d x_i\, \frac{\partial}{\partial x_{1}}\frac{\partial}{\partial x_{i}}F^{\mathrm{WP}}_0.
\end{align*}
In particular, if the second boundary is a cusp ($L_2=0$) we obtain
\begin{equation*}
	\frac{\partial}{\partial x_{0}}\frac{\partial}{\partial x_{1}}F^{\mathrm{WP}}_0 = \frac{\sqrt{R}}{L_1}I_1(L_1\sqrt{R}),
\end{equation*}
while if both are cusps then it is simply $\frac{\partial^2F^{\mathrm{WP}}_0 }{\partial x_{0}^2}=R/2$, providing a simple interpretation to the generating function $R$.

\subsection{Interpretation via hyperbolic polyhedra}\label{sec:polyhedra}

\textbf{Convex ideal polyhedra.\,}
Let us concentrate on the special case of essentially $2\pi$-irreducible metric maps with $n$ faces of circumference $\alpha_0 = 2\pi$.
In this case it is easy to see that every edge has length in $(0,\pi)$.
Indeed, if an edge $e$ would have length $\pi$ or larger, one could find a contractible cycle of length at most $2\pi$ by concatenating the contours (minus $e$ itself) of the two faces adjacent to $e$.

In the case of genus $g=0$ these metric maps are known since work of Rivin \cite{Rivin1996} to have an interpretation in hyperbolic geometry.
To understand this consider the three-dimensional hyperbolic space $\mathbb{H}^3$ in the Poincar\'e ball model, in which the open unit ball in $\R^3$ is equipped with a constant negative curvature metric such that geodesics correspond to circle segments that are orthogonal to the boundary.
An \emph{ideal polyhedron} is a non-compact polyhedron in $\mathbb{H}^3$ whose vertices all lie on the boundary sphere.
Convex ideal polyhedra are characterized by the positions of these vertices in the unit $2$-sphere, which are uniquely determined up to M\"obius transformation of the $2$-sphere. 
See Figure \ref{fig:surface}b for an example.

\begin{figure}
	\centering
	\includegraphics[width=\linewidth]{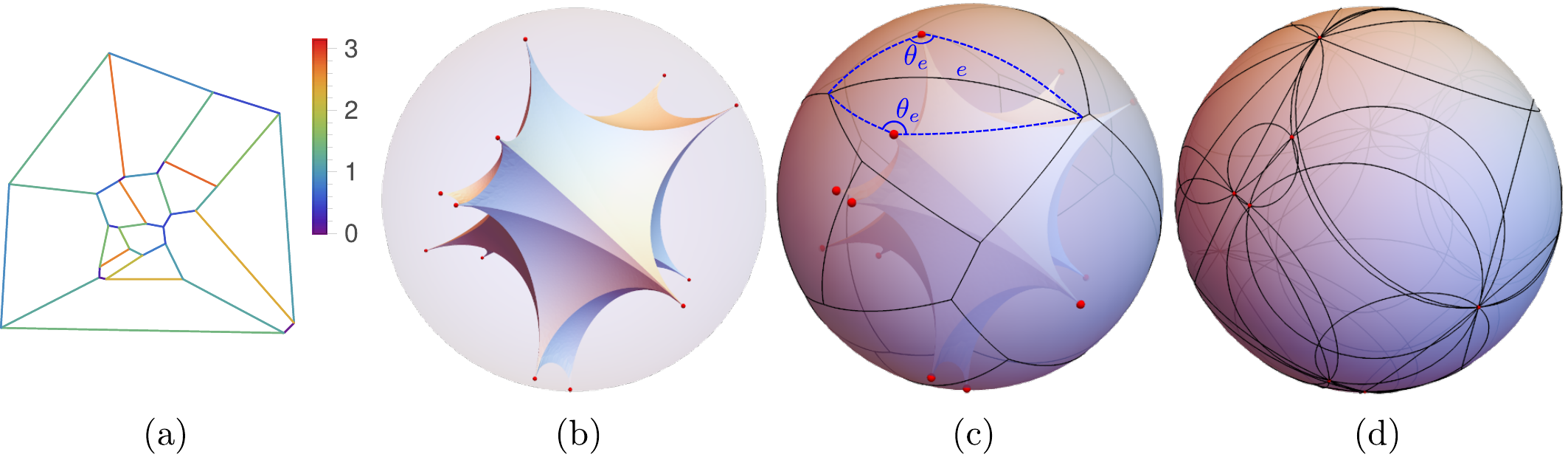}
	\caption{(a) a $2\pi$-irreducible metric map with all face of circumference $2\pi$ and edge lengths indicated by color shading; (b) the corresponding convex ideal polyhedron; (c) the Vorono\"i diagram of the vertices of the polyhedron, which can be turned into a metric map by assigning length $\theta_e$ to an edge $e$ where $\theta_e$ is the indicated angle as seen from a neighbouring vertex; (d) the corresponding circle pattern on the sphere.\label{fig:surface}}
\end{figure}

One may naturally associate a metric map to a convex ideal polyhedron, by considering the \emph{Vorono\"i diagram} of the $n$ vertices in the sphere (Figure \ref{fig:surface}c), i.e. the planar map whose underlying set consists of those points in the sphere that have more than one closest vertex among the $n$ (with respect to the standard round metric on the sphere).
An edge $e$ of the Vorono\"i diagram is given a length $\theta_e$ equal to the angle as seen from the vertex in the center of either one of its neigbouring Vorono\"i cells.
This angle is also precisely the external dihedral angle of the edge of the convex ideal polyhedron dual to $e$.
It is easy to see that the resulting metric map is invariant under M\"obius transformations of the vertex set in the sphere and that each face has circumference $2\pi$.
If one examines the angle sums of the spherical polygons corresponding to simple cycles in the map, one will note that the metric map is also $2\pi$-irreducible.
It is a non-trivial theorem of Rivin \cite{Rivin1996} that this determines a bijection between convex ideal polyhedra with $n$ vertices and $2\pi$-irreducible metric maps with $n$ faces of circumference $2\pi$.

To see that there is a connection with the moduli space of $\mathcal{M}_{0,n}(0,\ldots)$ of genus-$0$ hyperbolic surfaces with $n$ cusps (boundaries of length $0$), one need only look at Figure \ref{fig:surface}b and observe that the boundary faces of the polyhedron correspond to ideal polygons in a totally geodesic plane that carries the metric of the hyperbolic plane. 
Hence the boundary of a convex ideal polyhedron naturally has the structure of a genus-$0$ hyperbolic surface with $n$ cusps.
An earlier theorem of Rivin \cite{Rivin1992} shows that any element of $\mathcal{M}_{0,n}(0,\ldots)$ is uniquely obtained in this way.
The combination of both of Rivin's results therefore provides a bijection between $2\pi$-irreducible metric maps with $n$ faces of perimeter $2\pi$ and the moduli space $\mathcal{M}_{0,n}(0,\ldots)$.
A computation by Charbonnier, David \& Eynard in \cite{Charbonnier2017} (building on \cite{David_Planar_2014}) shows that the natural Lebesgue measure on metric maps is mapped via this bijection to the Weil--Petersson measure (up to a power of $2$), thus giving a bijective interpretation to the special case 
\begin{equation*}
	V_{0,n}^{(2\pi)}(2\pi,2\pi,\ldots) = 2^{2-n} V_{0,n}^{\mathrm{WP}}(0,0,\ldots)
\end{equation*}
of Corollary \ref{thm:metricwpequivalence}.

As an aside let us record here that even the enumeration of discrete irreducible maps has an interpretation in terms of compact ideal polyhedra.
The short proof based on \cite{Bouttier2014,Budd2020} is contained in the appendix.

\begin{proposition}\label{thm:discretepolyhedra}
	For $b\geq 2$ the exponential generating function for the number $P_{n}^{(b)}$ of ideal convex polyhedra with $n\geq 4$ labeled vertices and dihedral angles in $\frac{\pi}{b}\Z$ is given by 
	\begin{align*}
		\sum_{n=4}^\infty P_n^{(b)} \frac{z^{n-2}}{(n-2)!} &= \frac{1}{2(1+z)^2}- \frac{1}{2} + \frac{1}{2b}-\frac{1}{2b} \left(1+J^{-1}(b;z) \right)^{-2b}\\
		&= \frac{1}{2}(b-1)(b-2) z^2 + \frac{5}{12}(b+1)(b-1)(b-2)^2 z^3 + \cdots
	\end{align*}
	where $z\mapsto J^{-1}(b;z)$ is the formal power series inverse to $r\mapsto J(b;r) = r\,{_2F_1}(1-b,-b;2;-r)$ and ${_2F_1}$ is the hypergeometric function ${_2F_1}(a,b;c;z) = \sum_{n=0}^\infty \frac{(a)_n(b)_n}{(c)_n}\frac{z^n}{n!}$.
	In particular, $P_n^{(b)}$ is a polynomial in $b$ of degree $2n-6$ for any $n\geq 4$.
\end{proposition}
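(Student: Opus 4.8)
The plan is to reduce $P_n^{(b)}$ to a discrete map count and then feed it into the enumerative machinery behind \eqref{eq:Rdef} and \cite[Theorem~1]{Budd2020}. First I would invoke Rivin's bijection recalled in Section~\ref{sec:polyhedra}: a convex ideal polyhedron with $n$ vertices corresponds to a $2\pi$-irreducible planar metric map with $n$ faces of circumference $2\pi$, the length of an edge being the external dihedral angle of the dual edge of the polyhedron. Restricting the dihedral angles to $\tfrac{\pi}{b}\Z$ is the same as restricting all edge lengths to $\tfrac{\pi}{b}\Z$, and rescaling lengths by $b/\pi$ turns these into integer-length metric maps with vertices of degree $\geq 3$, every face of circumference exactly $2b$, and the $2b$-irreducibility constraint. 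Subdividing each length-$k$ edge into $k$ unit edges (smoothing degree-two vertices being the inverse, which is well defined once $n\geq 4$ since the map then necessarily carries vertices of degree $\geq 3$) identifies these with essentially $2b$-irreducible planar maps having $n$ labelled faces of degree $2b$ and no vertices of degree one. Hence $P_n^{(b)} = \hat{N}_{0,n}^{(b)}(b,\ldots,b)$, the all-faces-degree-$2b$ specialization of the polynomials of \cite{Budd2020}.

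Next I would pass to generating functions. Writing $\mathcal{F}(z) = \sum_{n\geq 4} P_n^{(b)}\, z^n/n!$, where $z$ marks a face (equivalently a polyhedron vertex), the series in the statement is exactly $\mathcal{F}''(z)$, so it suffices to compute the genus-$0$ two-marked-face generating function. I would extract this from the spectral-curve description of $2b$-irreducible maps in \cite{Bouttier2014,Budd2020}, specialized to the single face type of circumference $2b$. In that description the role of the series $R$ of \eqref{eq:Rdef} is played by a series obeying a defining equation whose left-hand side is the discrete counterpart of $\tfrac{\sqrt r}{\pi}J_1(2\pi\sqrt r)$, namely $J(b;r)=r\,{}_2F_1(1-b,-b;2;-r)$; the hypergeometric appears because a face of minimal circumference $2b$ behaves like a cusp ($L=0$), so that the $I_0$-weight degenerates to $1$ and the source term reduces to $z$. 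Thus the relevant series solves $J(b;r)=z$, i.e.\ $r = J^{-1}(b;z)$, which accounts for the functional inverse in the statement.

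It then remains to differentiate the genus-$0$ free energy twice and re-express the answer in the uniformizing variable $r=J^{-1}(b;z)$. I expect the interacting contribution to produce the factor $-\tfrac{1}{2b}(1+r)^{-2b}$, while the remaining terms $\tfrac{1}{2(1+z)^2}-\tfrac12+\tfrac{1}{2b}$ collect the explicit $z$-dependence together with the subtraction of the unstable ($n<4$) contributions, so that the expansion begins at order $z^2$. I would confirm the result against the first coefficients: the leading one gives $P_4^{(b)}=(b-1)(b-2)$, checkable directly from Rivin's vertex conditions $k_{e_1}+k_{e_2}+k_{e_3}=2b$ on the tetrahedron (the chirality factor of two relating the $\tfrac12(b-1)(b-2)$ angle assignments on a single oriented tetrahedron to $P_4^{(b)}$ is a useful normalization check), and the vanishing $P_4^{(2)}=P_5^{(2)}=0$ correctly reflects the nonexistence of small right-angled ideal polyhedra. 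Polynomiality in $b$ of degree $2n-6$ is then immediate: $\hat{N}_{0,n}^{(b)}(\ell_1,\ldots,\ell_n)$ is a polynomial of degree $2n-6$ in $(b,\ell_1,\ldots,\ell_n)$, so the specialization $\ell_i=b$ has degree at most $2n-6$ in $b$, with top coefficient proportional to the nonzero metric-map volume $V_{0,n}^{(2\pi)}$.

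The hard part will be the closed-form computation of $\mathcal{F}''$ in the second and third steps: pinning down that $J(b;r)$ is the correct uniformizing series for the single-face-type specialization with the exact normalization (source term $z$, not $z$ times a nontrivial face weight), and then performing the inversion that yields the $(1+r)^{-2b}$ factor together with the precise additive constants. A secondary, more bookkeeping-level difficulty in the reduction is the careful treatment of labelled-face versus rooted counts and of map automorphisms, which is precisely what the chirality factor above tests.
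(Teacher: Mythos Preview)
Your overall strategy---Rivin's bijection followed by edge subdivision to reach $2b$-irreducible planar maps, then the genus-$0$ two-point function from \cite{Budd2020}---matches the paper's proof. However, there is a genuine gap at the step where you assert $P_n^{(b)} = \hat{N}_{0,n}^{(b)}(b,\ldots,b)$. The polynomial $\hat{N}_{0,n}^{(b)}(\ell_1,\ldots,\ell_n)$ agrees with the map count $\|\hat{\mathcal{M}}_{0,n}^{(b)}(\ell_1,\ldots,\ell_n)\|$ only when $\ell_i > b$ for all $i$; at the boundary value $\ell_1=\cdots=\ell_n=b$ there is a correction (this is the caveat flagged in the paragraph after the definition of $\hat{N}_{g,n}^{(b)}$, pointing to \cite[Equation~(1)]{Budd2020}). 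The correct relation is
\[
P_n^{(b)} = \hat{N}_{0,n}^{(b)}(b,\ldots,b) + \tfrac{(n-1)!}{2}(-1)^n,
\]
and it is precisely this correction that, summed over $n\geq 3$, yields the term $\tfrac{1}{2(1+z)^2}-\tfrac12$ in the statement. Your attempt to account for that term as ``explicit $z$-dependence together with the subtraction of the unstable ($n<4$) contributions'' cannot work: the only unstable contribution in your setup is the single $n=3$ term $\hat{N}_{0,3}^{(b)}(b,b,b)\,z = z$, which is linear and does not produce $\tfrac{1}{2(1+z)^2}$.

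Once the correction is in place, the computation of the $\hat{N}$-part is exactly as you anticipate: with $I(b,b;r)=1$ the two-point formula \eqref{eq:Fhat0part} specialises to $\int_0^{J^{-1}(b;z)} (1+r)^{-2b-1}\,\rmd r = \tfrac{1}{2b} - \tfrac{1}{2b}(1+J^{-1}(b;z))^{-2b}$, which accounts for the remaining terms. Your argument for polynomiality of degree $2n-6$ is fine, since the correction $\tfrac{(n-1)!}{2}(-1)^n$ is constant in $b$ and does not affect the leading order.
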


\begin{figure}[h]
	\centering
	\includegraphics[width=.6\linewidth]{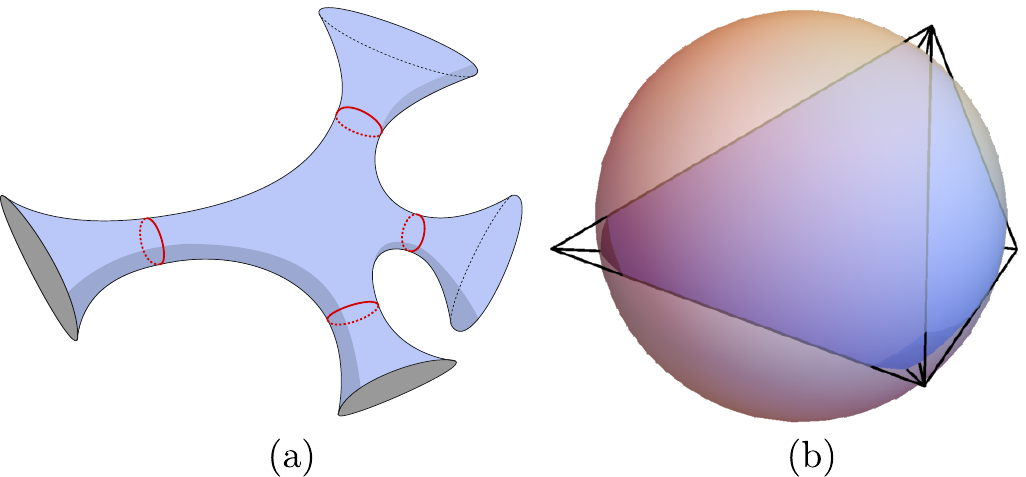}
	\caption{(a) A hyperbolic surface with $4$ geodesic boundaries extended to a complete hyperbolic surface; (b) a hyperideal convex polyhedron represented in the Klein model by a Euclidean polyhedron with all vertices outside the sphere and all edges meeting the interior of the sphere.\label{fig:hyperideal}}
\end{figure}

\noindent
\textbf{Hyperbolic surfaces with geodesic boundaries.}
It is natural to ask whether this bijective explanation extends to higher genus or faces of larger circumference.
Let us first discuss the case of genus-$0$ hyperbolic surfaces with non-zero boundary lengths.
Instead of hyperbolic surfaces with geodesic boundaries we may consider the complete hyperbolic surfaces obtained by gluing to each boundary component an infinite hyperbolic cylinder (Figure \ref{fig:hyperideal}a).
It was shown by Schlenker \cite{Schlenker_Metriques_1998} (see also \cite[Theorem 1.1]{Fillastre_Polyhedral_2008}) that any such surface can be uniquely realized as the boundary of a \emph{hyperideal convex polyhedron}, which is a polyhedron in 3-dimensional hyperbolic space whose vertices all lie beyond infinity. 
These polyhedra are most easily understood in the Klein model of three-dimensional hyperbolic space in the open unit ball in which geodesics are straight line segments. 
In this model a convex hyperideal polyhedron corresponds to the intersection of the open ball with a convex Euclidean polyhedron that has all  vertices strictly outside the unit sphere but such that all its edges meet the interior of the ball (Figure \ref{fig:hyperideal}b).
Just like in the ideal case, a convex hyperideal polyhedron with $n$ vertices is characterized by a metric map with $n$ faces with the edge lengths given by the external dihedral angles. 
Bao \& Bonahon \cite[Theorem 1]{Bao_Hyperideal_2002} identified the set of metric maps arising in this way (see also \cite{Springborn_variational_2008} for an argument based on a variational principle).
They are precisely the $2\pi$-irreducible metric maps with faces of arbitrary circumference, but with two additional requirements: all edge lengths should be shorter than $\pi$ (which is obviously necessary if they are to be external dihedral angles of a convex polyhedron) and any simple path starting and ending at a vertex on the same face but not contained in the contour of that face should have length larger than $\pi$.
Combining with the result of Schlenker this set of $2\pi$-irreducible metric maps is in bijection with hyperbolic surfaces with $n$ boundaries of arbitrary length.
Because this set of metric maps has additional restrictions beyond $2\pi$-irreducibility it is unlikely that this bijection can explain the general phenomenon of Corollary \ref{thm:metricwpequivalence} for $g=0$.
Moreover, the relation between the face circumference $\alpha_i$ and the boundary length $L_i$ will not be nearly as simple as $L_i=\sqrt{\alpha_i^2-4\pi^2}$ and it appears that the Weil--Petersson measure is not mapped to a measure as simple as the Lebesgue measure on the edge lengths.
We thus leave it as an open question to find a bijective interpretation of Corollary \ref{thm:metricwpequivalence}.\\[1mm]
\textbf{Higher genus.\,}
Let us finally consider the case of essentially $2\pi$-irreducible metric maps of arbitrary genus $g\geq 0$ and $n$ faces of circumference $2\pi$.
Known results are conveniently stated in terms of circle patterns.
A \emph{circle pattern} on a constant-curvature surface of genus $g$ is an embedded genus-$g$ map with all faces of degree at least $3$ and such that for each face the vertices lie on a circle.
See Figure \ref{fig:surface}d for an example in genus $0$ where only the circles and vertices are drawn.
A circle pattern naturally has a dual essentially $2\pi$-irreducible metric map with faces of circumference $2\pi$, in which the vertices correspond to the circles and the length of an edge is the (exterior) intersection angle between the neighbouring circles.
According to \cite[Theorem 4]{Bobenko_Variational_2004} this determines a bijection between essentially $2\pi$-irreducible metric map with faces of circumference $2\pi$ and circle patterns with $n$ vertices on genus-$g$ surfaces of constant curvature (equal to $1$ for $g=0$, $0$ for $g=1$, and $-1$ for $g\geq 2$).
The latter are viewed up to M\"obius transformations in the case $g=0$ and up to similarity transformations in the case $g=1$. 
When these circle patterns are equipped with a measure arising from the Lebesgue measure on the intersection angles of the circles, their total volumes is thus given by $V^{(2\pi)}_{g,n}(2\pi,2\pi,\ldots)$.

\begin{figure}
	\centering
	\includegraphics[width=.9\linewidth]{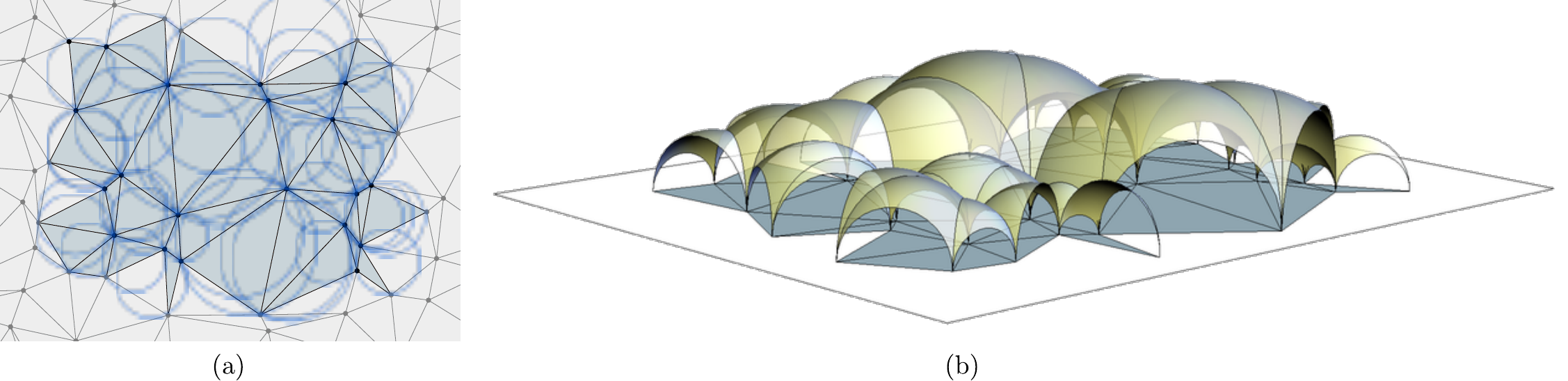}
	\caption{(a) A bi-periodic circle pattern in the plane. Only the circles corresponding to a fundamental domain of the torus are shown. (b) The corresponding boundary of the bi-periodic convex ideal polyhedron in the upper-half space model of $\mathbb{H}^3$. \label{fig:toruspattern}}
\end{figure}
In the case $g=1$, the circle pattern may be interpreted as an infinite biperiodic circle pattern on the Euclidean plane (Figure \ref{fig:toruspattern}a).
Viewing the Euclidean plane as the boundary of 3-dimensional hyperbolic space (in the Poincar\'e upper-half space model), we obtain a unique infinite biperiodic ideal convex polyhedron with vertices corresponding to the nodes of the circle pattern (Figure \ref{fig:toruspattern}b).
The boundary of this polyhedron is a $\Z^2$-covering of a unique genus-$1$ hyperbolic surface with $n$ cusps in $\mathcal{M}_{1,n}(0,0,\ldots)$.
It seems likely, but not proven, that this gives a bijective explanation for the special case
\begin{equation*}
	V_{1,n}^{(2\pi)}(2\pi,2\pi,\ldots) = 2^{-n} V_{1,n}^{(\mathrm{WP})}(0,0,\ldots)
\end{equation*}
of Corollary \ref{thm:metricwpequivalence}.
For genus $g\geq 2$, according to \cite[Lemma 4.23 \& Theorem 4.25]{Schlenker_Hyperbolic_2001} $2\pi$-irreducible metric maps with $n$ faces of circumference $2\pi$ describe the external dihedral angles of \emph{fuchsian ideal polyhedra} of genus $g$ (see \cite[Section 1.4]{Schlenker_Hyperbolic_2001} for definitions), whose boundaries correspond to coverings of genus-$g$ hyperbolic surfaces with $n$ cusps.
Here it is not quite clear what this means at the level of the measures.
Note that by Theorem~\ref{thm:metricmapgf}, 
\begin{align*}
	V_{g,n}^{(2\pi)}(2\pi,2\pi,\ldots) \neq 2^{2-2g-n} V_{g,n}^{\mathrm{WP}}(0,0,\ldots)\qquad\text{for }g\geq 2
\end{align*}
but their generating functions are closely related.
It is natural to ask whether one can understand their relation from the perspective of the fuchsian ideal polyhedra.

\subsection*{Acknowledgments}
This work is part of the START-UP 2018 programme with project number 740.018.017, which is financed by the Dutch Research Council (NWO).

\section{From discrete to metric maps}\label{sec:discrete}

We start by recalling some definitions about irreducibility of (discrete) maps from \cite{Bouttier2014,Budd2020} .
A planar map $\map$ is said to be \emph{$d$-irreducible} for an integer $d\geq 0$ if all simple cycles have length at least $d$, i.e.\ $\map$ has girth at least $d$, and the only simple cycles of length $d$ are the contours of the faces of degree $d$.
A genus-$g$ map $\map$ for $g\geq 1$ is \emph{essentially $d$-irreducible} if its universal cover, seen as an infinite planar map, is $d$-irreducible.
We will only be dealing with \emph{even maps}, meaning that all faces have even degree.
In this case one may restrict attention to the criterion of essential $2b$-irreducibility for an integer $b\geq 0$.
Note that every map is essentially $0$-irreducible.
See Figure \ref{fig:irreducibleexamples} for examples.

\begin{figure}
	\centering
	\includegraphics[width=.6\linewidth]{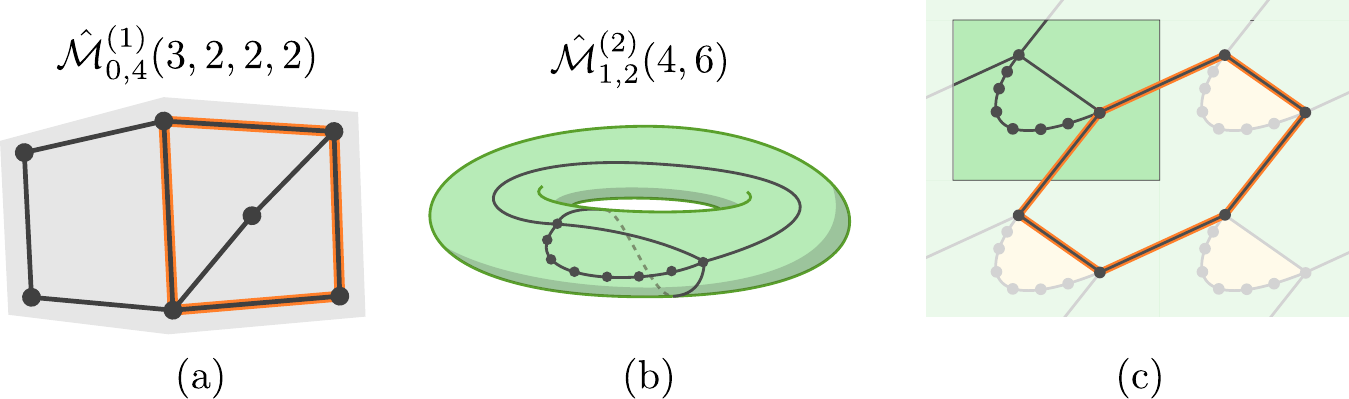}
	\caption{(a) A $2$-irreducible planar map that is not $4$-irreducible. (b) An essentially $4$-irreducible genus-$1$ map that is not $6$-irreducible, because it contains an essentially simple cycle of length $6$ that surrounds two faces, as illustrated in the universal cover (c). \label{fig:irreducibleexamples} }
\end{figure}

\subsection{Relation between polynomials}

For $g\geq 0$, $b\geq 1$ and $n\geq 1$ (provided $n\geq 3$ if $g=0$) and $\ell_1,\ldots,\ell_n \geq b$ we denote by $\hat{\mathcal{M}}_{g,n}^{(b)}(\ell_1,\ldots,\ell_n)$ the set of essentially $2b$-irreducible genus-$g$ maps with $n$ labeled faces of degrees $2\ell_1,\ldots,2\ell_n$ and no vertices of degree one.
Let $\hat{\vec{\mathcal{M}}}_{g,n}^{(b)}(\ell_1,\ldots,\ell_n)$ be the corresponding set of rooted maps.
We enumerate this set via
\begin{equation*}
	\|\hat{\mathcal{M}}_{g,n}^{(b)}(\ell_1,\ldots,\ell_n)\| \coloneqq \sum_{\map\in\hat{\mathcal{M}}_{g,n}^{(b)}(\ell_1,\ldots,\ell_n) } \frac{1}{|\operatorname{Aut}(\map)|} = \sum_{\map\in\hat{\vec{\mathcal{M}}}_{g,n}^{(b)}(\ell_1,\ldots,\ell_n) } \frac{1}{2|\mathcal{E}(\map)|}
\end{equation*}
where $\operatorname{Aut}(\map)$ is the group of orientation-preserving automorphisms of the map $\map$ that preserve the face labels.
According to \cite[Theorem 1]{Budd2020}, for each $g\geq0$ and $n\geq 1$ (and $n\geq 3$ if $g=0$) there exists a polynomial $\hat{N}^{(b)}_{g,n}(\ell_1,\ldots,\ell_n)$ of degree $2n+6g-6$ in $b,\ell_1,\ldots,\ell_n$, which is even and symmetric in $\ell_1,\ldots,\ell_n$, such that
\begin{equation*}
	\|\hat{\mathcal{M}}_{g,n}^{(b)}(\ell_1,\ldots,\ell_n)\| = \hat{N}^{(b)}_{g,n}(\ell_1,\ldots,\ell_n)
\end{equation*}
for all $\ell_1,\ldots,\ell_n > b \geq 1$.
For convenience we are restricting our attention to the situation where the face degrees are strictly larger than $b$, so that we do not have to deal with the small correction in the case $\ell_1=\cdots=\ell_n=b$ and $g=0$ (see \cite[Equation (1)]{Budd2020}).
With this restriction $\hat{N}^{(b)}_{g,n}(\ell_1,\ldots,\ell_n)$ equivalently counts genus-$g$ maps with essential girth at least $2b+2$. 
In the limit that we will consider below in Lemma \ref{lem:vague} the distinction between essential $2b$-irreducibility and having essential girth at least $2b+2$ does not survive, and as remarked at the end of Section~\ref{lem:vague} the continuum analogues of these notions give rise to the same volumes.

Let $\map$ be a genus-$g$ map with $n\geq 1$ labeled faces ($n\geq 3$ in case $g=0$) of degrees at least $2b$ and no vertices of degree one. 
A closed path on $\map$ is said to be an \emph{essentially simple cycle} if it lifts to a simple cycle in $\map^\infty$ that encloses at most one lift of each face in $\map$.
In \cite[Section 2]{Budd2020}  we showed that $\map$ is essentially $2b$-irreducible if and only if each essentially simple cycle of $\map$ that encloses at least two faces (on both sides in the case $g=0$) has length strictly larger than $2b$.
This is a practical criterion since a map contains only finitely many essentially simple cycles (as opposed to the infinitely many simple cycles in its universal cover).

To understand how discrete maps and metric maps are related, we employ a method akin to that of \cite[Section 3]{Ambjorn_Multi_2016}.
The \emph{skeleton} of a genus-$g$ map $\map$ without vertices of degree one is the map $\mathsf{Skel}(\map)$ obtained by deleting each vertex of degree two and merging its incident edges.
If $\map$ is rooted, we take the result to be rooted on the edge into which the root edge of $\map$ was merged (with the same orientation).
For $a \in (0,\infty)$ we let $\mathsf{Skel}_{a}(\map)$ be the metric map obtained from $\mathsf{Skel}(\map)$ by assigning length $\tfrac{a}{2}k$ to an edge of the skeleton that corresponds to a chain of $k$ edges in $\map$.
Then for $b \geq 1$ and $\beta \in(0,\infty)$ the metric map $\mathsf{Skel}_{\beta/b}(\map)$ is essentially $\beta$-irreducible if and only if $\map$ is essentially $2b$-irreducible.
Moreover,
\begin{equation*}
	\hat{\vec{\mathcal{M}}}_{g,n}^{\star(b)} \coloneqq \{ \map\in \hat{\vec{\mathcal{M}}}_{g,n}^{(b)}: \text{ root degree at least }3\} \xrightarrow{\mathsf{Skel}_{\beta/b}} \vec{\mathcal{R}}_{g,n}^{(\beta)}
\end{equation*}
is injective, where the \emph{root degree} of $\map$ is the degree of the vertex at the origin of the root edge.
We may use it to introduce a discrete measure $\mu_b$ on $\vec{\mathcal{R}}_{g,n}^{(\beta)}$ that assigns to each metric map $\mathfrak{s}$ in the image a weight $1/(2|\mathcal{E}(\mathfrak{s})|)$,
\begin{equation*}
	\mu_b = \sum_{\map\in \hat{\vec{\mathcal{M}}}_{g,n}^{\star(b)}} \frac{\delta_{\mathsf{Skel}_{\beta/b}(\map)}}{2|\mathcal{E}(\mathsf{Skel}(\map))|}.
\end{equation*}

\begin{lemma}\label{lem:vague}
As $b\to\infty$ the measure $\mu_b$ converges after rescaling to the measure $\mu_{g,n}$ on $\vec{\mathcal{R}}_{g,n}^{(\beta)}$ defined in \eqref{eq:measuremu},
\begin{equation*}
	2^{n-1}\left(\frac{\beta}{2b}\right)^{6g-6+3n} \, \mu_b \xrightarrow[b\to\infty]{\text{\quad vague\quad}} \mu_{g,n}.
\end{equation*}
\end{lemma}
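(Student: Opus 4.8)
The plan is to test the claimed vague convergence against an arbitrary $f\in C_c(\vec{\mathcal{R}}^{(\beta)}_{g,n})$ and to evaluate $\int f\,\rmd\mu_b$ one cell at a time. Since $g$ and $n$ are fixed there are only finitely many underlying rooted maps, hence finitely many cells, and both $\mu_b$ and $\mu_{g,n}$ respect this decomposition, so it suffices to match the contribution of each rooted skeleton $\mathfrak{s}$, with $E\coloneqq|\mathcal{E}(\mathfrak{s})|$. Sending a subdivided map to its tuple of chain lengths is a bijection between rooted maps with skeleton $\mathfrak{s}$ and $\Z_{>0}^E$; the requirement that the root have degree at least three places the root edge at the start of its chain, making this compatible with the rooting of $\mathfrak{s}$. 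Under the bijection the maps in $\hat{\vec{\mathcal{M}}}^{\star(b)}_{g,n}$ with skeleton $\mathfrak{s}$ correspond to the admissible tuples $(k_1,\dots,k_E)$, and by injectivity of $\mathsf{Skel}_{\beta/b}$ the contribution of $\mathfrak{s}$ to $\int f\,\rmd\mu_b$ is $\tfrac1{2E}\sum_{(k_i)}f(\mathfrak{s};\tfrac{\beta}{2b}k_1,\dots,\tfrac{\beta}{2b}k_E)$, a sum over grid points of spacing $\tfrac{\beta}{2b}$. By the stated equivalence that $\mathsf{Skel}_{\beta/b}(\map)$ is essentially $\beta$-irreducible iff $\map$ is essentially $2b$-irreducible, the irreducibility part of admissibility says exactly that $\tfrac{\beta}{2b}(k_i)$ lies in the fixed open polytope $P_{\mathfrak{s}}\subset\R_{>0}^{E}$ of $\beta$-irreducible length assignments to $\mathfrak{s}$, with no $b$-dependence in $P_{\mathfrak{s}}$.

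Next I would isolate the dominant cells. A skeleton has no vertex of degree below three, so $2E=\sum_v\deg v\ge 3|\mathcal{V}|$ together with $2-2g=|\mathcal{V}|-E+n$ forces $E\le 6g-6+3n$, with equality exactly for the cubic maps $\mathfrak{s}\in\vec{\mathcal{C}}_{g,n}$. Because the number of admissible points in the (compact) support of $f$ grows like $(\tfrac{2b}{\beta})^{E}$, multiplying by the prefactor $(\tfrac{\beta}{2b})^{6g-6+3n}$ suppresses every non-cubic skeleton by $(\tfrac{\beta}{2b})^{6g-6+3n-E}\to 0$, so only the cubic cells survive in the limit.

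The one subtlety, and the feature the factor $2^{n-1}$ is designed to cancel, is that the discrete maps are even. The degree of face $j$ equals $\sum_i m_{ij}k_i$, where $m_{ij}\in\{0,1,2\}$ counts the sides of chain $i$ on face $j$, and evenness forces the admissible tuples into the sublattice $\Lambda\subset\Z^E$ cut out by the $n$ congruences $\sum_{i:\,m_{ij}\text{ odd}}k_i\equiv 0\pmod 2$. I would show these have $\mathbb{F}_2$-rank exactly $n-1$: their sum is trivial, since each chain $i$ enters it with coefficient $|\{j:m_{ij}=1\}|\in\{0,2\}$, which is even; and conversely a vanishing subsum indexed by $S\subseteq\{1,\dots,n\}$ forces the two faces flanking each chain that separates distinct faces to lie together in, or together outside, $S$, so that connectedness of the dual graph of $\mathfrak{s}$ (whose loops, from chains bordering one face twice, are irrelevant to connectivity) leaves only $S=\varnothing$ and $S=\{1,\dots,n\}$. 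Hence $[\Z^E:\Lambda]=2^{n-1}$, and the admissible points equidistribute in $P_{\mathfrak{s}}$ with density $2^{-(n-1)}(\tfrac{2b}{\beta})^{E}$.

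Assembling the pieces, for each cubic $\mathfrak{s}$ the sublattice Riemann sum gives $\tfrac1{2E}\sum_{(k_i)}f(\mathfrak{s};\tfrac{\beta}{2b}k_i)\sim\tfrac1{2E}\,2^{-(n-1)}(\tfrac{2b}{\beta})^{E}\int_{P_{\mathfrak{s}}}f$, and multiplying by $2^{n-1}(\tfrac{\beta}{2b})^{6g-6+3n}$ cancels both $2^{-(n-1)}$ and $(\tfrac{2b}{\beta})^{E}$ (using $E=6g-6+3n$), leaving precisely $\tfrac1{2|\mathcal{E}(\mathfrak{s})|}\int f\,\rmd\mu_{\mathfrak{s}}$; summing over $\vec{\mathcal{C}}_{g,n}$ reproduces $\int f\,\rmd\mu_{g,n}$ by \eqref{eq:measuremu}. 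The main obstacle is thus the parity computation of the third paragraph—identifying the even-degree sublattice and pinning its index to $2^{n-1}$ through the dual-graph connectivity argument; the analytic ingredients (compact support confining each sum to finitely many terms, and equidistribution of $\Lambda$ in the convex polytope $P_{\mathfrak{s}}$, whose boundary is Lebesgue-null) are routine.
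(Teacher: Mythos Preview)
Your proof is correct and follows essentially the same route as the paper: decompose by skeleton, identify the admissible length assignments with lattice points in the fixed polytope $P_{\mathfrak{s}}$, use Euler's formula to show only cubic skeletons survive, and handle the evenness constraints as a sublattice of index $2^{n-1}$. Your dual-graph connectivity argument for why the $\mathbb{F}_2$-rank of the parity conditions is exactly $n-1$ is actually more thorough than the paper's, which simply asserts ``this amounts to $2^{n-1}$ independent parity conditions'' after observing $\sum_j \mathbf{f}^j\in(2\Z)^k$ without verifying that no proper subsum vanishes.
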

\begin{proof}
	Fix a rooted genus-$g$ map $\mathfrak{s}$ with $n$ labeled faces and $|\mathcal{E}(\mathfrak{s})|=k$ edges and all vertices of degree at least three.
	Comparing $\mu_b$ to the definition \eqref{eq:measuremu} of $\mu_{g,n}$ it is sufficient to prove that 
	\begin{equation}\label{eq:skelconv}
		2^{n-1}\left(\frac{\beta}{2b}\right)^{6g-6+3n} \, \sum_{\map} \delta_{\mathsf{Skel}_{\beta/b}(\map)} \xrightarrow[b\to\infty]{\text{\quad vague\quad}} \begin{cases}
			\mu_{\mathfrak{s}} &\text{if }\mathfrak{s}\text{ is cubic} \\ 0 & \text{otherwise}
		\end{cases},
	\end{equation}
	where the sum ranges over the maps $\map\in \hat{\vec{\mathcal{M}}}_{g,n}^{\star(b)}$ that have $\mathsf{Skel}(\map) = \mathfrak{s}$.
	
	Denote the edges of $\mathfrak{s}$ by $e_1,\ldots,e_k$.
	Let $c^1,\ldots,c^p$ be a complete list of essentially simple cycles of $\mathfrak{s}$ and for each $i=1,\ldots,p$ let $\mathbf{c}^i \in \{0,1,2\}^k$ be the vector counting the number of occurrences of each edge $e_1,\ldots,e_k$ in $c^i$.
	Let $P_{\mathfrak{s}} \subset \R^k$ be the convex open polytope defined by 
	\begin{equation*}
		P_{\mathfrak{s}} = \left\{ \mathbf{x}\in\R_{>0}^k : \mathbf{c}^i\cdot \mathbf{x} > \beta \text{ for }i=1,\ldots,p\right\}.
	\end{equation*}
	This is the polytope of length assignments mentioned in the introduction.
	If $\mathfrak{s}$ is cubic then $\mu_{\mathfrak{s}}$ is the push-forward of the Lebesgue measure on $P_{\mathfrak{s}}$ to $\vec{\mathcal{R}}_{g,n}^{(\beta)}$.
	
	It should be clear that the set of all rooted genus-$g$ maps $\map$ with root degree at least three and skeleton $\mathsf{Skel}(\map) = \mathfrak{s}$ is in bijection with positive integer vectors $\mathbf{x}=(x_1,\ldots,x_k)\in \Z_{>0}^k$ by taking $x_i$ to be the number of edges of $\map$ that are merged into the edge $e_i$ (see \cite[Section 2]{Budd2020} for a more thorough discussion).
	Denote by $\mathbf{f}^j\in \{0,1,2\}^k$ for each $j=1,\ldots,n$ the number of occurrences of the edges $e_1,\ldots,e_k$ in the contour of the face labeled $j$.
	Then the set of metric maps
	\begin{equation*}\mathsf{Skel}_{\beta/b}\Big(\Big\{\map\in \hat{\vec{\mathcal{M}}}_{g,n}^{\star(b)} : \mathsf{Skel}(\map) = \mathfrak{s} \Big\}\Big)
	\end{equation*}
	corresponds to the set of length assignments
	\begin{equation*}
		P_{\mathfrak{s}} \cap 
		\left\{ \tfrac{\beta}{2b}\mathbf{x}: \mathbf{x}\in\Z_{>0}^k\text{ and }\mathbf{f}^j\cdot \mathbf{x}\in 2\Z\text{ for }j=1,\ldots,n \right\}.
	\end{equation*}
	Since we always have $\sum_{j=1}^n \mathbf{f}^j\in (2\Z)^k$, this amounts to $2^{n-1}$ independent parity conditions on $\mathbf{x}$.
	In particular, any hypercube $\mathbf{y} + [0,\tfrac{\beta}{b})^{k}$, $\mathbf{y} \in \R^k$, that is fully contained in $P_{\mathfrak{s}}$ contains exactly $2^{k+1-n}$ elements.
	Hence, as $b\to\infty$ the counting measure multiplied by $2^{n-1}\left(\frac{\beta}{2b}\right)^{6g-6+3n}$ converges to the Lebesgue measure if $k = 6g-6+3n$ and to $0$ if $k < 6g-g+3n$.
	These two cases correspond precisely to $\mathfrak{s}$ being cubic or not, thus verifying \eqref{eq:skelconv}.
\end{proof}

\noindent
Recall that $\mathsf{Circ} : \vec{\mathcal{R}}^{(\beta)}_{n,g} \to [\beta, \infty)^n$ is the mapping that extracts the circumferences of the $n$ labeled faces.

\begin{lemma}\label{lem:discretemeasure}
	The push-forward measure $\mathsf{Circ}_*\mu_b$ on $[\beta, \infty)^n$ is given by the discrete measure
	\begin{align*}
		\mathsf{Circ}_*\mu_b = \sum_{\ell_1, \ldots, \ell_n > b} \hat{N}_{g,n}^{(b)}(\ell_1,\ldots,\ell_n)\,\,\delta_{\left(\frac{\beta}{b} \ell_1, \ldots, \frac{\beta}{b} \ell_n\right)}
	\end{align*}
\end{lemma}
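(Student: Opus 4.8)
The plan is to read off $\mathsf{Circ}_*\mu_b$ directly from the definition of $\mu_b$ and then reorganize the resulting weighted sum over rooted maps into the automorphism-weighted count defining $\hat N^{(b)}_{g,n}$. Since $\mu_b = \sum_{\map\in\hat{\vec{\mathcal{M}}}_{g,n}^{\star(b)}} \delta_{\mathsf{Skel}_{\beta/b}(\map)}/(2|\mathcal{E}(\mathsf{Skel}(\map))|)$, pushing forward along $\mathsf{Circ}$ gives
\[
\mathsf{Circ}_*\mu_b = \sum_{\map\in\hat{\vec{\mathcal{M}}}_{g,n}^{\star(b)}} \frac{1}{2|\mathcal{E}(\mathsf{Skel}(\map))|}\,\delta_{\mathsf{Circ}(\mathsf{Skel}_{\beta/b}(\map))}.
\]
First I would locate the atoms. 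If the face of $\map$ labeled $j$ has degree $2\ell_j$, its boundary walk uses $2\ell_j$ edge-sides of $\map$, and since $\mathsf{Skel}_{\beta/b}$ assigns length $\tfrac{\beta}{2b}k$ to a chain of $k$ edges, the circumference of face $j$ in $\mathsf{Skel}_{\beta/b}(\map)$ is $\tfrac{\beta}{2b}\cdot 2\ell_j = \tfrac{\beta}{b}\ell_j$. Hence every $\map$ with face degrees $(2\ell_1,\ldots,2\ell_n)$ contributes to the single atom at $(\tfrac{\beta}{b}\ell_1,\ldots,\tfrac{\beta}{b}\ell_n)$, and the claim reduces to proving
\[
\sum_{\substack{\map\in\hat{\vec{\mathcal{M}}}_{g,n}^{\star(b)}\\ \text{faces of degree }2\ell_1,\ldots,2\ell_n}} \frac{1}{2|\mathcal{E}(\mathsf{Skel}(\map))|} = \hat N^{(b)}_{g,n}(\ell_1,\ldots,\ell_n)
\]
for all $\ell_1,\ldots,\ell_n>b$.

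Next I would pass to unrooted maps. Grouping the rooted maps on the left by their underlying unrooted isomorphism class $M$, the weight depends only on $M$, so the contribution of $M$ equals $\tfrac{1}{2|\mathcal{E}(\mathsf{Skel}(M))|}$ times the number of its rooted representatives with root degree at least three. Because orientation-preserving automorphisms act freely on the darts (oriented edges) of a map, this number is $D_{\geq 3}(M)/|\operatorname{Aut}(M)|$, where $D_{\geq 3}(M)$ counts the darts of $M$ whose origin has degree at least three. The crux is the identity $2|\mathcal{E}(\mathsf{Skel}(M))| = D_{\geq 3}(M)$: deleting degree-two vertices leaves the degree of every surviving vertex unchanged, so the darts of $\mathsf{Skel}(M)$ correspond bijectively to the darts of $M$ issuing from vertices of degree at least three. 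Consequently the contribution of $M$ collapses to $1/|\operatorname{Aut}(M)|$, and summing over all such $M$ gives $\sum_M 1/|\operatorname{Aut}(M)| = \|\hat{\mathcal{M}}_{g,n}^{(b)}(\ell_1,\ldots,\ell_n)\| = \hat N^{(b)}_{g,n}(\ell_1,\ldots,\ell_n)$ by the defining property recalled above, the hypothesis $\ell_j>b$ ensuring the index set is exactly $\hat{\mathcal{M}}_{g,n}^{(b)}(\ell_1,\ldots,\ell_n)$.

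The hard part will be the bookkeeping that reconciles the two edge counts: $\mu_b$ weights by $|\mathcal{E}(\mathsf{Skel}(\map))|$ while $\hat N^{(b)}_{g,n}$ is phrased through $1/|\mathcal{E}(\map)|$, equivalently $1/|\operatorname{Aut}(\map)|$. Everything rests on the two facts that automorphisms act freely on darts and that $2|\mathcal{E}(\mathsf{Skel}(M))| = D_{\geq 3}(M)$; together they make the restriction to root degree at least three and the skeleton weighting conspire to reproduce precisely the automorphism-weighted enumeration, leaving no spurious combinatorial factor.
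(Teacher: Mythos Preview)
Your proof is correct and follows essentially the same route as the paper: compute the mass at the atom $(\tfrac{\beta}{b}\ell_1,\ldots,\tfrac{\beta}{b}\ell_n)$ and identify it with $\hat N^{(b)}_{g,n}(\ell_1,\ldots,\ell_n)$ by passing from rooted to unrooted maps. The paper compresses your unrooting argument into the single line
\[
\sum_{\map\in\hat{\vec{\mathcal{M}}}_{g,n}^{(b)}(\ell_1,\ldots,\ell_n)}\frac{\ind_{\{\text{root degree }\geq 3\}}}{2|\mathcal{E}(\mathsf{Skel}(\map))|}
=\sum_{\map\in\hat{\vec{\mathcal{M}}}_{g,n}^{(b)}(\ell_1,\ldots,\ell_n)}\frac{1}{2|\mathcal{E}(\map)|},
\]
whose justification is precisely your observation that $2|\mathcal{E}(\mathsf{Skel}(M))|=D_{\geq 3}(M)$ together with the free action of $\operatorname{Aut}(M)$ on darts.
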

\begin{proof}
	For any $\ell_1,\ldots,\ell_n > b \geq 1$ we compute from the definition of $\mu_b$ that
	\begin{align*}
		\mathsf{Circ}_*\mu_b\Big(\Big\{\Big(\tfrac{\beta}{b} \ell_1, \ldots, \tfrac{\beta}{b} \ell_n\Big)\Big\}\Big) &= \sum_{\map\in\hat{\vec{\mathcal{M}}}_{g,n}^{(b)}(\ell_1,\ldots,\ell_n) } \frac{\ind_{\{\text{root degree of }\map\text{ is at least }3\}}}{2|\mathcal{E}(\mathsf{Skel}(\map))|}= \sum_{\map\in\hat{\vec{\mathcal{M}}}_{g,n}^{(b)}(\ell_1,\ldots,\ell_n) } \frac{1}{2|\mathcal{E}(\map)|}\\
		&= \|\hat{\mathcal{M}}_{g,n}^{(b)}(\ell_1,\ldots,\ell_n)\| = \hat{N}^{(b)}_{g,n}(\ell_1,\ldots,\ell_n).
	\end{align*}
\end{proof}

\noindent
We now have all ingredients to relate the volumes $V_{g,n}^{(\beta)}$ to the polynomials $\hat{N}_{g,n}^{(b)}$ counting essentially $2b$-irreducible maps.

\begin{proposition}\label{thm:polynomiallimit}
	For $g\geq 0$ and $n\geq 1$ (provided $n\geq 3$ if $g=0$), $V_{g,n}^{(\beta)}(\alpha_1,\ldots,\alpha_n)$ is a symmetric polynomial in $\alpha_1^2,\ldots,\alpha_n^2$ of $3g-3+n$ and a homogeneous polynomial of degree $3g-3+n$ in $\alpha_1^2,\ldots,\alpha_n^2,\beta^2$ and is explicitly given by
	\begin{equation}\label{eq:densitypol}
		V_{g,n}^{(\beta)}(\alpha_1,\ldots,\alpha_n) = \frac{1}{2}[t^{6g-6+2n}]
		\hat{N}_{g,n}^{(\tfrac12\beta t)}(\tfrac12t\alpha_1,\ldots,\tfrac12t\alpha_n).
		\end{equation}
\end{proposition}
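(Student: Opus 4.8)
The plan is to combine the two preceding lemmas and to read off the density of $\mathsf{Circ}_*\mu_{g,n}$ directly from a Riemann-sum limit. First I would push the vague convergence of Lemma~\ref{lem:vague} forward along the circumference map $\mathsf{Circ}$. Vague convergence is preserved under pushforward by a continuous proper map, and $\mathsf{Circ}$ is proper on each combinatorial stratum: since $\sum_{i=1}^n\alpha_i = 2\sum_{e}\theta_e$ (each edge occurs exactly twice among the face contours), bounding the circumferences bounds the edge lengths. This yields
\begin{equation*}
2^{n-1}\left(\tfrac{\beta}{2b}\right)^{6g-6+3n}\mathsf{Circ}_*\mu_b \xrightarrow[b\to\infty]{\text{vague}} \mathsf{Circ}_*\mu_{g,n}.
\end{equation*}
By the definition in Section~\ref{sec:metricmapdef} the right-hand side equals $V_{g,n}^{(\beta)}(\alpha_1,\ldots,\alpha_n)\,\rmd\alpha_1\cdots\rmd\alpha_n$, so it suffices to compute the vague limit of the left-hand side explicitly.

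For the left-hand side I would substitute the exact expression for $\mathsf{Circ}_*\mu_b$ from Lemma~\ref{lem:discretemeasure}, presenting the scaled measure as a sum of point masses on the lattice $\tfrac{\beta}{b}\Z_{>0}^n$ restricted to $(\beta,\infty)^n$, of spacing $\tfrac{\beta}{b}$, with mass $2^{n-1}(\tfrac{\beta}{2b})^{6g-6+3n}\hat N_{g,n}^{(b)}(\ell_1,\ldots,\ell_n)$ at the point $\alpha_i=\tfrac{\beta}{b}\ell_i$. The natural density proxy is the mass divided by the cell volume $(\tfrac{\beta}{b})^n$, and a short computation gives
\begin{equation*}
2^{n-1}\left(\tfrac{\beta}{2b}\right)^{6g-6+3n}\left(\tfrac{b}{\beta}\right)^{n}\hat N_{g,n}^{(b)}\!\left(\tfrac{b}{\beta}\alpha_1,\ldots,\tfrac{b}{\beta}\alpha_n\right) = 2^{5-6g-2n}\left(\tfrac{\beta}{b}\right)^{6g-6+2n}\hat N_{g,n}^{(b)}\!\left(\tfrac{b}{\beta}\alpha_1,\ldots,\tfrac{b}{\beta}\alpha_n\right).
\end{equation*}
Here I use crucially that $\hat N_{g,n}^{(b)}$ is a polynomial of total degree $6g-6+2n$ in $(b,\ell_1,\ldots,\ell_n)$: after substituting $\ell_i=\tfrac{b}{\beta}\alpha_i$ and multiplying by $(\beta/b)^{6g-6+2n}$, every monomial of total degree $j$ acquires a factor $b^{\,j-(6g-6+2n)}$, so as $b\to\infty$ only the homogeneous top-degree part $\hat N^{\mathrm{top}}$ survives while all lower terms vanish, uniformly on compacta. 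Regarding $\hat N^{\mathrm{top}}$ as homogeneous in $(b,\ell_1,\ldots,\ell_n)$ and using $(b,\tfrac{b}{\beta}\alpha_1,\ldots,\tfrac{b}{\beta}\alpha_n)=\tfrac{b}{\beta}(\beta,\alpha_1,\ldots,\alpha_n)$, the density proxy converges locally uniformly to $2^{5-6g-2n}\hat N^{\mathrm{top}}(\beta,\alpha_1,\ldots,\alpha_n)$.

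By the standard Riemann-sum principle a lattice sum of point masses whose density proxy converges locally uniformly to a continuous function $\rho$ converges vaguely to $\rho\,\rmd\alpha_1\cdots\rmd\alpha_n$. Uniqueness of vague limits then identifies $V_{g,n}^{(\beta)}$ with $2^{5-6g-2n}\hat N^{\mathrm{top}}(\beta,\alpha_1,\ldots,\alpha_n)$ on $(\beta,\infty)^n$, hence everywhere by continuity. It remains to match this with~\eqref{eq:densitypol}: since $\hat N_{g,n}^{(b)}$ has degree $6g-6+2n$, the substitution $b=\tfrac12\beta t$, $\ell_i=\tfrac12 t\alpha_i$ makes it a polynomial in $t$ of degree $\le 6g-6+2n$ whose top coefficient is exactly $\hat N^{\mathrm{top}}(\tfrac12\beta,\tfrac12\alpha_1,\ldots,\tfrac12\alpha_n)=2^{-(6g-6+2n)}\hat N^{\mathrm{top}}(\beta,\alpha_1,\ldots,\alpha_n)$ by homogeneity, so multiplying by $\tfrac12$ reproduces the prefactor $2^{5-6g-2n}$. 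Finally the asserted structural properties follow because $\hat N^{\mathrm{top}}(\beta,\alpha_1,\ldots,\alpha_n)$ is homogeneous of degree $6g-6+2n$ and even in each variable — evenness in each $\alpha_i$ is the evenness of $\hat N_{g,n}^{(b)}$ in the $\ell_i$ from \cite[Theorem~1]{Budd2020}, and evenness in $\beta$ then follows since the even $\alpha$-degrees together with the even total degree $6g-6+2n$ force the $\beta$-degree to be even — so it is a symmetric homogeneous polynomial of degree $3g-3+n$ in $\beta^2,\alpha_1^2,\ldots,\alpha_n^2$.

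The main obstacle I anticipate is the analytic step of upgrading vague convergence to an identity of Radon--Nikodym densities. Two points deserve care: the properness of $\mathsf{Circ}$, needed to transport the vague convergence to the pushforward (handled by the edge-length bound $\sum_i\alpha_i=2\sum_e\theta_e$), and the behaviour of the discrete-to-continuum limit near the boundary $\{\alpha_i=\beta\}$ where the lattice points accumulate. The latter is circumvented cleanly by testing only against functions with compact support in the open orthant $(\beta,\infty)^n$ and then invoking continuity of the limiting polynomial to extend the identity to the closure.
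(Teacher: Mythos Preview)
Your approach is essentially the paper's: push the vague convergence of Lemma~\ref{lem:vague} forward along $\mathsf{Circ}$, compute the limit explicitly from Lemma~\ref{lem:discretemeasure} and the polynomiality of $\hat N_{g,n}^{(b)}$, and read off the density; you are in fact more explicit than the paper on the Riemann-sum step, on matching the prefactor with~\eqref{eq:densitypol}, and on the parity argument for evenness in $\beta$.

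One correction on the analytic point you yourself flagged. The identity $\sum_i\alpha_i=2\sum_e\theta_e$ gives \emph{boundedness} of preimages, but not properness of $\mathsf{Circ}$ on a stratum: the strata $P_{\mathfrak s}$ are \emph{open} polytopes, and a bounded sequence in $P_{\mathfrak s}$ can escape to the boundary (some $\theta_e\to 0$, or an irreducibility inequality saturating) while the circumferences remain in a fixed compact set, so $\mathsf{Circ}^{-1}(K)\cap P_{\mathfrak s}$ need not be compact. The paper is equally casual here---it merely invokes continuity of $\mathsf{Circ}$, which in general does not transport vague convergence through a pushforward. The clean fix is to work on the closed polytope $\bar P_{\mathfrak s}\subset\mathbb{R}^{|\mathcal E(\mathfrak s)|}$: the linear map $L_{\mathfrak s}$ realizing $\mathsf{Circ}$ extends continuously and is proper there (preimages of compacts are closed and bounded in $\mathbb{R}^{|\mathcal E(\mathfrak s)|}$), the rescaled lattice counting measures converge vaguely on $\bar P_{\mathfrak s}$ to Lebesgue measure (the boundary has Lebesgue measure zero), and since there are only finitely many strata the pushforward convergence follows. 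After this adjustment your argument goes through verbatim.
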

\begin{proof} 
Since $\hat{N}_{g,n}^{(b)}(\ell_1,\ldots,\ell_n)$ is a polynomial of degree $6g-6+2n$ in $b,\ell_1,\ldots,\ell_n$, we have
\begin{align*}
	2^{n-1}\left(\frac{\beta}{2b}\right)^{6g-6+3n} \left(\frac{b}{\beta}\right)^{n}
	\hat{N}_{g,n}^{(b)}(\tfrac{b}{\beta}\alpha_1,\ldots,\tfrac{b}{\beta}\alpha_n) \xrightarrow{b\to\infty} \tfrac12 [t^{6g-6+2n}]\hat{N}_{g,n}^{(\tfrac12\beta t)}(\tfrac12\alpha_1t,\ldots,\tfrac12\alpha_nt).
	\end{align*}
	Together with Lemma \ref{lem:discretemeasure} this implies the vague convergence of measures on $[\beta,\infty)^n$,
	\begin{equation*}
		2^{n-1}\left(\frac{\beta}{2b}\right)^{6g-6+3n}\mathsf{Circ}_*\mu_b \xrightarrow[b\to\infty]{\text{\quad vague\quad}}  \tfrac12 [t^{6g-6+2n}]\hat{N}_{g,n}^{(\tfrac12\beta t)}(\tfrac12\alpha_1t,\ldots,\tfrac12\alpha_nt)\,\rmd\alpha_1\cdots\rmd\alpha_n. 
	\end{equation*}
	On the other hand the mapping $\mathsf{Circ} : \vec{\mathcal{R}}^{(\beta)}_{n,g} \to [\beta, \infty)^n$ is continuous, so it follows from Lemma \ref{lem:vague} that we also have the vague convergence
	\begin{equation*}
		2^{n-1}\left(\frac{\beta}{2b}\right)^{6g-6+3n}\mathsf{Circ}_*\mu_b \xrightarrow[b\to\infty]{\text{\quad vague\quad}} \mathsf{Circ}_*\mu_{g,n}.
	\end{equation*}
	So $\mathsf{Circ}_*\mu_{g,n}$ indeed has a density with respect to the Lebesgue measure $\rmd\alpha_1\cdots\rmd\alpha_n$ given by \eqref{eq:densitypol}.
	Since $\hat{N}_{g,n}^{(b)}(\ell_1,\ldots,\ell_n)$ is even in $\ell_1,\ldots,\ell_n$ and we are extracting a homogeneous part of even degree, $V_{g,n}^{(\beta)}(\alpha_1,\ldots,\alpha_n)$ is necessarily even in $\beta,\alpha_1,\ldots,\alpha_n$.
\end{proof}

\subsection{Proof of Theorem~\ref{thm:realmaps}}

The first statement of Theorem~\ref{thm:realmaps} is proved in the Proposition above.
The fact that $V_{g,1}^{(\beta)}(\alpha_1)$ is a rational multiple of $\alpha_1^{6g-4}$ is a direct consequence of the analogous statement in \cite[Theorem 2]{Budd2020} that $\hat{N}_{g,1}^{(b)}(\ell_1)$ is independent of $b$.

It remains to verify the string and dilaton equations.
According to \cite[Theorem 2]{Budd2020} the polynomials $\hat{N}_{g,n}^{(b)}$ satisfy the string equation
\begin{equation}\label{eq:discretestring} 
\hat{N}^{(b)}_{g,n+1}(\ell_1,\ldots,\ell_n,1) = \sum_{j=1}^n \left(\sum_{k=b+1}^{\ell_j} 2 k\, \hat{N}^{(b)}_{g,n}(\ell_1,\ldots,\ell_{j-1},k,\ell_{j+1},\ldots,\ell_n) - \ell_j \hat{N}^{(b)}_{g,n}(\ell_1,\ldots,\ell_n)\right)
\end{equation}
and the dilaton equation
\begin{equation}\label{eq:discretedilaton} 
\hat{N}_{g,n+1}^{(b)}(\ell_1,\ldots,\ell_n,1) - \hat{N}_{g,n+1}^{(b)}(\ell_1,\ldots,\ell_n,0) = (n+2g-2)\, \hat{N}_{g,n}^{(b)}(\ell_1,\ldots,\ell_n).
\end{equation}
For any integer $p\geq 0$,
\begin{equation*}
\left(-\ell^{2p+1}+\sum_{k=b+1}^\ell 2k\,k^{2p}\right)  - \int_{b}^\ell 2k\,k^{2p} \rmd k
\end{equation*}
is a polynomial in $\ell$ of degree less than $2p+2$.
Hence, the string equation implies that
\begin{align*}
V_{g,n+1}^{(\beta)}(\alpha_1,\ldots,\alpha_n,0) &= \tfrac12 [t^{6g-4+2n}] 
\hat{N}_{g,n+1}^{(\tfrac12\beta t)}(\tfrac12t\alpha_1,\ldots,\tfrac12t\alpha_n,1).\\
&= \tfrac12\sum_{j=1}^n [t^{6g-4+2n}]\int_{\tfrac12t\beta}^{\tfrac12t\alpha_j} 2k\, \hat{N}^{(\tfrac12t\beta)}_{g,n}(\tfrac12t\alpha_1,\ldots,\tfrac12t\alpha_{j-1},k,\tfrac12t\alpha_{j+1},\ldots,\tfrac12t\alpha_n)\rmd k\\
&=\tfrac14 \sum_{j=1}^n\int_{\beta}^{\alpha_j} \alpha\, [t^{6g-6+2n}] \hat{N}^{(\tfrac12t\beta)}_{g,n}(\tfrac12t\alpha_1,\ldots,\tfrac12t\alpha_{j-1},\tfrac12t\alpha,\tfrac12t\alpha_{j+1},\ldots,\tfrac12t\alpha_n)\rmd \alpha\\
&=\tfrac12 \sum_{j=1}^n\int_{\beta}^{\alpha_j} \alpha_j V_{g,n}^{(\beta)}(\alpha_1,\ldots,\alpha_n).
\end{align*}
Similarly, by Taylor expansion
\begin{equation*}
\hat{N}_{g,n+1}^{(b)}(\ell_1,\ldots,\ell_n,1) - \hat{N}_{g,n+1}^{(b)}(\ell_1,\ldots,\ell_n,0) - \frac{1}{2} \frac{\partial^2\hat{N}_{g,n+1}^{(b)}}{\partial \ell_{n+1}^2} (\ell_1,\ldots,\ell_n,0)
\end{equation*}
is a polynomial of degree less than $6g-6+2n$ in $b,\ell_1,\ldots,\ell_n$.
Hence, the dilaton equation implies
\begin{align*}
\frac{\partial^2 V_{g,n+1}^{(\beta)}}{\partial\alpha_{n+1}^2}(\alpha_1,\ldots,\alpha_n,0) &= \frac{1}{8} [t^{6g-4+2n}] t^2 \frac{\partial^2\hat{N}_{g,n+1}^{(\tfrac12t\beta)}}{\partial \ell_{n+1}^2} (\tfrac12t\alpha_1,\ldots,\tfrac12t\alpha_n,0) \\
&= \frac{1}{4}(2g-2+n) [t^{6g-6+2n}]\hat{N}^{(\tfrac12t\beta)}_{g,n}(\tfrac12t\alpha_1,\ldots,\tfrac12t\alpha_n) \\ 
&=\tfrac12(2g-2+n) V_{g,n}^{(\beta)}(\alpha_1,\ldots,\alpha_n).
\end{align*}
This finishes the proof.

\section{Generating functions for irreducible metric maps}\label{sec:genfunirrmaps}

Let us fix an integer $d\geq 0$.
The goal of this section will be to derive expressions for the formal power series 
\begin{equation}\label{eq:partitionfunctionbeta}
F^{(\beta)}_g(\alpha_1,\ldots,\alpha_d; x_0,\ldots,x_d) = \sum_{n=1}^\infty\frac{1}{n!} \sum_{i_1=0}^d x_{i_1} \cdots\! \sum_{i_n=0}^d x_{i_n} V^{(\beta)}_{g,n}(\alpha_{i_1}\ldots,\alpha_{i_n}),
\end{equation}
where we use the convention $\alpha_0 = \beta$.
We already know that its coefficients are related to the polynomials $\hat{N}_{g,n}^{(b)}(\ell_1,\ldots,\ell_n)$ through the extraction of the homogeneous part of top degree, i.e. 
\begin{equation}\label{eq:polynomialrelation}
	V_{g,n}^{(\beta)}(\alpha_1,\ldots,\alpha_n) = \frac{1}{2}[t^{6g-6+2n}] 
	\hat{N}_{g,n}^{(\tfrac12\beta t)}(\tfrac12t\alpha_1,\ldots,\tfrac12t\alpha_n).
\end{equation}
Explicit expressions for the formal generating functions of these polynomials were obtained in \cite{Budd2020}, so it remains to carefully remove all sub-leading contributions.
The main result of this section, whose proof appears in Section~\ref{sec:proofmetricmapgf} below, can be formulated as follows.

\begin{proposition}\label{thm:metricmapgf}
Let
\begin{equation}\label{eq:Zbeta}
	Z^{(\beta)}(r) = \frac{2\sqrt{r}}{\beta} J_1(\beta\sqrt{r}) - \sum_{i=0}^d I_0\left( \sqrt{(\alpha_i^2-\beta^2)r}\right)\,x_i,
\end{equation}
where $I_0$ and $J_1$ are (modified) Bessel functions of the first kind, and let $R^{(\beta)}$ be the formal power series solution to $Z^{(\beta)}(R^{(\beta)}) = 0$.
Define the moments $M^{(\beta)}_p$ recursively via
\begin{equation}\label{eq:Mbeta}
	M^{(\beta)}_0 = \frac{1}{\partial_{x_0} R^{(\beta)}}, \qquad M^{(\beta)}_p =  \left(\tfrac{1}{p!}\left(\tfrac{\beta}{2}\right)^{2p}+\partial_{x_0} M^{(\beta)}_{p-1}\right)M^{(\beta)}_0,\qquad p\geq 1.
\end{equation}
Then the partition functions $F_g^{(\beta)}$ are given by
\begin{align}
	F^{(\beta)}_0 &= \frac{1}{4}\int_{0}^{R^{(\beta)}} \!\!\!\rmd r\, \left(Z^{(\beta)}(r)\right)^2,\label{eq:genfun0beta}\\
	F^{(\beta)}_1 &= - \frac{1}{24} \log M^{(\beta)}_0,\label{eq:genfun1beta}\\
	F^{(\beta)}_g &= \left(\frac{2}{(M_0^{(\beta)})^2}\right)^{g-1}\bar{\mathcal{P}}_g\left(\frac{M^{(\beta)}_1}{M^{(\beta)}_0},\ldots, \frac{M^{(\beta)}_{3g-3}}{M^{(\beta)}_0}\right) \quad \text{for }g\geq 2,\label{eq:genfun2beta}
	\end{align}
where for each $g\geq 2$ the polynomial $\bar{\mathcal{P}}_g(m_1,\ldots,m_{3g-3})$ is such that $\bar{\mathcal{P}}_g(m_1 t,m_2 t^2,\ldots,m_{3g-3}t^{3g-3})$ is homogeneous of degree $3g-3$ in $t$.
\end{proposition}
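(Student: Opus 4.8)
The plan is to transport the explicit generating-function formulas for the discrete polynomials $\hat{N}^{(b)}_{g,n}$, obtained in \cite{Budd2020}, to the metric setting by means of the substitution underlying \eqref{eq:polynomialrelation}. Recall that \cite{Budd2020} expresses the exponential generating function of the $\hat{N}^{(b)}_{g,n}$ in exactly the shape of \eqref{eq:genfun0beta}--\eqref{eq:genfun2beta}, but with $Z^{(\beta)}$, $R^{(\beta)}$ and the moments $M_p^{(\beta)}$ replaced by discrete counterparts $\hat Z^{(b)}$, $\hat R^{(b)}$, $\hat M_p^{(b)}$, in which the Bessel functions are replaced by the hypergeometric/binomial kernels that govern the enumeration of $2b$-irreducible maps. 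The content of \eqref{eq:polynomialrelation} is that $V^{(\beta)}_{g,n}$ is the top-degree homogeneous part of $\hat N^{(b)}_{g,n}$ under $b\mapsto \tfrac12\beta t$, $\ell_i \mapsto \tfrac12 t\alpha_i$, extracted as $\tfrac12[t^{6g-6+2n}]$. I would therefore insert this scaling directly into the discrete generating functions and show that, order by order, the coefficient of the top power of $t$ reproduces the continuous formulas.

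Concretely, I would perform the substitution $b=\tfrac12\beta t$, $\ell_i = \tfrac12 t\alpha_i$ together with a compensating rescaling of the spectral variable (of the form $r\mapsto r/t^2$), so that the whole construction becomes graded by $t$. The central computation is then to identify the leading-in-$t$ behaviour of the discrete building block $\hat Z^{(b)}$: the $b$-dependent term should converge to $\tfrac{2\sqrt r}{\beta}J_1(\beta\sqrt r)$ and each source term to $I_0\bigl(\sqrt{(\alpha_i^2-\beta^2)r}\bigr)$, reproducing \eqref{eq:Zbeta}, where the non-trivial combination $\alpha_i^2-\beta^2$ emerges from the leading asymptotics of the discrete kernels as $b,\ell_i\to\infty$ at fixed ratio. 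Granting this, the defining relation $\hat Z^{(b)}(\hat R^{(b)})=0$ passes in the limit to $Z^{(\beta)}(R^{(\beta)})=0$, and the discrete moment recursion passes to \eqref{eq:Mbeta}; the inhomogeneous term $\tfrac1{p!}(\tfrac\beta2)^{2p}$ is precisely the surviving leading contribution of the $b$-dependent part of the discrete recursion.

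With the dictionary $\hat Z^{(b)}\to Z^{(\beta)}$, $\hat R^{(b)}\to R^{(\beta)}$, $\hat M_p^{(b)}\to M_p^{(\beta)}$ in hand, the genus-$g$ formulas follow almost formally: the right-hand sides of \eqref{eq:genfun0beta}--\eqref{eq:genfun2beta} are built from the moments and $R$ by algebraic operations (an integral of $Z^2$, a logarithm, and a polynomial divided by a power of $M_0$), each of which commutes with extracting the leading order in $t$. The existence of the polynomial $\bar{\mathcal P}_g$ and its stated homogeneity then drop out of degree counting: assigning weight $p$ to $M^{(\beta)}_p/M^{(\beta)}_0$, the prefactor $(2/(M_0^{(\beta)})^2)^{g-1}$ carries the remaining $t$-weight, and matching to the single monomial $[t^{6g-6+2n}]$ forces $\bar{\mathcal P}_g(m_1 t,\ldots,m_{3g-3}t^{3g-3})$ to be homogeneous of degree $3g-3$. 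Note that this argument establishes the formulas for \emph{some} such $\bar{\mathcal P}_g$ without identifying it, which is exactly what the proposition claims.

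The step I expect to be the main obstacle is making ``leading order in $t$ equals top-degree extraction'' rigorous at the level of the implicitly defined series $R^{(\beta)}$ and the moments, rather than only for the explicit polynomials. Since $\hat R^{(b)}$ is determined by a functional equation and is itself a formal power series in the $x_i$, one must check that solving that equation commutes with the scaling limit, i.e. that the sub-leading corrections to $\hat Z^{(b)}$ feed only into sub-leading orders of $\hat R^{(b)}$ and hence never pollute the coefficient of $t^{6g-6+2n}$. The clean way to control this is a homogeneity argument: I would verify that under the above rescaling every object ($\hat Z^{(b)}$, $\hat R^{(b)}$, $\hat M_p^{(b)}$, and the operator $\partial_{x_0}$ appearing in the recursion) carries a well-defined $t$-grading whose leading part satisfies exactly the continuous relations, so that the whole genus expansion is filtered by $t$ and its top piece is extracted consistently. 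Once this grading is in place, the passage from the discrete to the continuous formulas is automatic.
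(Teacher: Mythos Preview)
Your overall strategy is correct and coincides with the paper's: one introduces a $t$-grading via $b=\tfrac12\beta t$, $\ell_i=\tfrac12 t\alpha_i$, shows that the discrete kernels $I(b,\ell;r)$ and $J(b;r)$ have leading parts equal to $I_0(\sqrt{(\alpha^2-\beta^2)r})$ and $\tfrac{2\sqrt r}{\beta}J_1(\beta\sqrt r)$, and then propagates this through $\hat R^{(b)}$ and the genus formulas. The paper formalizes your ``filtered by $t$'' idea via a linear operator $\Omega_p$ on $\Q[b,\ell_1,\ldots][\![x_0,\ldots,x_d,r]\!]$ that extracts the appropriate homogeneous piece of each coefficient; your anticipated obstacle (that solving the implicit equation commutes with top-degree extraction) is handled exactly by checking that each building block has a well-defined $\Omega$-order. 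Two minor points: the discrete genus-$0$ input is a formula for $\partial_{x_1}\partial_{x_2}\hat F_0^{(b)}$, not for $\hat F_0^{(b)}$ in the form $\int \hat Z^2$, so an integration step is needed; and for $g\geq 2$ one uses the specific structure $P_g(m_0,\ldots)=\tilde P_g(0,\ldots)-m_0^{2g-2}\tilde P_g(m_1,\ldots)$ from \cite{Budd2020} to isolate the top order.

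There is, however, one place where your plan misjudges the work involved. You write that ``the discrete moment recursion passes to \eqref{eq:Mbeta}'', but in \cite{Budd2020} the moments $\hat M_p^{(b)}$ are \emph{not} defined recursively: they are given by
\[
\hat M_p^{(b)} \;=\; Q_p\bigl(b,(1+r)\partial_r\bigr)\,(1+r)^{-b}\hat Z^{(b)}(r)\Big|_{r=\hat R^{(b)}},
\]
where $Q_p(b,j)$ is a polynomial characterized by a binomial-sum identity. To reach \eqref{eq:Mbeta} one must first prove a separate lemma computing the top-degree part $[t^{2p+2}]Q_p(tb,t^2 j)=4^p\tfrac{p!}{(2p+1)!}\sum_{k=0}^p \tfrac{b^{2p-2k}}{(p-k)!}j^{k+1}$, which yields the closed form $M_p^{(\beta)}=\sum_{k=0}^p\tfrac{(\beta/2)^{2p-2k}}{(p-k)!}\,\partial_r^{k+1}Z^{(\beta)}(R^{(\beta)})$, and \emph{then} verify algebraically that this sum satisfies the recursion \eqref{eq:Mbeta}. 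The inhomogeneous term $\tfrac1{p!}(\beta/2)^{2p}$ does not come from any discrete recursion; it appears when one compares this sum with $\partial_{x_0}M_{p-1}^{(\beta)}$ using $\partial_{x_0}R^{(\beta)}\cdot\partial_r Z^{(\beta)}(R^{(\beta)})=1$ and the fact that $\partial_{x_0}$ annihilates the $x_0$-source (since $\alpha_0=\beta$ makes $I_0(\sqrt{(\alpha_0^2-\beta^2)r})\equiv 1$). So the $Q_p$-asymptotics and this algebraic check are the genuine missing ingredients in your outline.
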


Note that upon setting $\beta=2\pi$ this yields the statement of Theorem~\ref{thm:genfun} in the case of essentially $2\pi$-irreducible metric maps, except that the polynomials $\bar{\mathcal{P}}_g$ have not yet been identified with $\mathcal{P}_g$ in \eqref{eq:Ppol}.
This will be taken care of in Section~\ref{sec:proofgenfun} below. 

\subsection{Summary of generating functions for discrete maps}

Our derivation of the generating functions $F_g^{(\beta)}$ is based on the discrete analogues obtained in \cite{Budd2020}.
We start by reformulating the results of \cite{Budd2020} in a form that is convenient for our current purposes.
Consider the formal generating function $\hat{F}^{(b)}_g$ of the polynomials $\hat{N}_{g,n}^{(b)}$ defined via
\begin{align}
	\hat{F}^{(b)}_g(\ell_1,\ldots,\ell_d;\hat{x}_0,\ldots,\hat{x}_d) &= \sum_{n=1}^\infty\frac{1}{n!} \sum_{i_1=0}^d \hat{x}_{i_1} \cdots\! \sum_{i_n=0}^d \hat{x}_{i_n} \hat{N}_{g,n}^{(b)}(\ell_{i_1},\ldots,\ell_{i_n}),\label{eq:discretepartitionfunction}
\end{align}
where we use the convention $\ell_0 = b$. 
We may view it as an element of $\Q[b,\ell_1,\ldots,\ell_d][\![ \hat{x}_0, \ldots, \hat{x}_d]\!]$, i.e.\ as a formal power series in $\hat{x}_0,\ldots,\hat{x}_d$ with coefficients that are polynomials in $b,\ell_1,\ldots,\ell_d$.

Let $I(b,\ell;r) \in \Q[b,\ell][\![r]\!]$ and $J(b;r)\in \Q[b][\![r]\!]$ be the formal power series in $r$ defined by
\begin{align}
	I(b,\ell;r) &= \sum_{p=0}^\infty \frac{r^p}{(p!)^2}\prod_{m=0}^{p-1}(\ell^2-(b-m)^2) = {_2F_1}(\ell-b,-\ell-b;1;-r), \label{eq:Iseries}\\
	J(b;r) &= \sum_{p=1}^\infty \frac{(-1)^{p+1} r^p}{p!(p-1)!} \prod_{m=0}^{p-2} (b-m)(b-m-1) = r\,\,{_2F_1}(1-b,-b;2;-r),\label{eq:Jseries}
\end{align}
where ${_2F_1}$ is the hypergeometric function ${_2F_1}(a,b;c;z) = \sum_{n=0}^\infty \frac{(a)_n(b)_n}{(c)_n}\frac{z^n}{n!}$.

The expressions for $\hat{F}^{(b)}_g$ involve a formal power series $\hat{R}^{(b)}(\ell_1,\ldots,\ell_d;\hat{x}_0,\ldots,\hat{x}_d)$ that is defined to be the formal solution to
\begin{equation*}
\hat{Z}^{(b)}(\hat{R}^{(b)}) = 0, \qquad \hat{Z}^{(b)}(r)\equiv \hat{Z}^{(b)}(\ell_1,\ldots,\ell_d;\hat{x}_0,\ldots,\hat{x}_d; r) \coloneqq J(b,r) - \sum_{i=0}^d I(b,\ell_i;r)\,\hat{x}_i,
\end{equation*}
such that $R^{(b)}(0,\ldots,0;\ell_1,\ldots,\ell_d) = 0$.

It is shown in \cite[Lemma 7]{Budd2020} that for each $p\geq 0$ there exists a unique polynomial $Q_p(b,j)$ of degree $2p+1$ in $b$ and degree $p+1$ in $j$ such that
\begin{align}\label{eq:qdef}
\sum_{k=b}^{j-1} \binom{2k+1+p}{2p+1}\binom{2j-1}{j+k} &= \binom{2j-1}{j+b} Q_p(b,j) \qquad \text{for }j > b \geq 0.
\end{align}
With the help of these we can introduce the moments
\begin{equation}\label{eq:mhat}
	\hat{M}_p^{(b)}(\ell_1,\ldots,\ell_d;\hat{x}_1,\ldots,\hat{x}_d) = Q_p(b,(1+r)\, \partial_r)\,(1+r)^{-b} \, \hat{Z}^{(b)}(r) \Big|_{r=\hat{R}^{(b)}}.
\end{equation}
According to \cite[Lemma 13 \& Proposition 16]{Budd2020} both $\hat{R}^{(b)}$ and $\hat{M}^{(b)}_p$ for $p\geq 0$ may be interpreted as formal power series in $\hat{x}_1,\ldots,\hat{x}_d$ with coefficients that are polynomials in $b,\ell_1,\ldots,\ell_d$, just like $\hat{F}_g^{(b)}$.
Then \cite[Proposition 14-16]{Budd2020} states that 
\begin{align}
	\frac{\partial}{\partial \hat{x}_{1}}\frac{\partial}{\partial \hat{x}_{2}}\hat{F}_0^{(b)} &= \int_{0}^{\hat{R}^{(b)}} \!\!\!\rmd r\, \frac{I(b,\ell_1;r)I(b,\ell_2;r)}{(1+r)^{2b+1}}, \label{eq:Fhat0part}\\
	\hat{F}^{(b)}_1 &= - \frac{1}{12} \log \hat{M}^{(b)}_0, \label{eq:Fhat1part}\\
	\hat{F}^{(b)}_g &= P_g\left( \frac{1}{\hat{M}^{(b)}_0}, \frac{\hat{M}^{(b)}_1}{\hat{M}^{(b)}_0},\ldots, \frac{\hat{M}^{(b)}_{3g-3}}{\hat{M}^{(b)}_0}\right) \quad \text{for }g\geq 2.\label{eq:Fhatgpart}
	\end{align}
Note that the first expression uniquely determines the formal power series $\hat{F}_0^{(b)}$ since it is invariant under permutation of the pairs $(x_1,\ell_1), \ldots, (x_d,\ell_d)$ and its coefficients up to quadratic order in $x_i$ vanish.
Here  $P_g(m_0,\ldots,m_{3g-3})$ is a universal polynomial with rational coefficients for each $g\geq 2$.
More precisely, $P_g$ is of the form 
\begin{equation}\label{eq:Ptilde}
P_g(m_0,\ldots,m_{3g-3}) = \tilde{P}_g(0,\ldots,0) - m_0^{2g-2} \tilde{P}_g(m_1,\ldots,m_{3g-3})
\end{equation}
for some polynomial $\tilde{P}_g(m_1,\ldots,m_{3g-3})$ such that $\tilde{P}_g(\mu,\mu^2,\ldots,\mu^{3g-3})$ is of degree $3g-3$ in $\mu$.

\subsection{Extract leading orders in the coefficients}

Given a polynomial $f(X_1,\ldots,X_n)$, its homogeneous part of degree $k$ is
\begin{equation*}
	[t^k] f(tX_1,\ldots,tX_n).
\end{equation*}
It will be convenient for the exposition to introduce a linear operator that extracts the homogeneous part in the coefficients of a power series as follows.
For any $p\in\Z$ we let $\Omega_p$ be the linear operator on the ring of power series $\Q[X_1,\ldots,X_n][\![ Z_1,\ldots,Z_m ]\!]$ with polynomial coefficients extracting the homogeneous part of degree $p + 2(i_1 +\cdots i_m)$ from the coefficient of $Z_1^{i_1}\cdots Z_m^{i_m}$, i.e.
\begin{equation*}
	\Omega_p F = \sum_{i_1,\ldots,i_m} \big([t^{p+2(i_1+\cdots+i_m)}]\,[Z_1^{i_1}\cdots Z_m^{i_m}]F(t X_1, \ldots, t X_n; Z_1, \ldots, Z_m) \big)\,Z_1^{i_1}\cdots Z_m^{i_m}.
\end{equation*}
We say that such a power series is of \emph{order} $p$ if $\Omega_p F\neq 0$ and $\Omega_k F=0$ for all $k>p$.

From their definitions we directly verify that $I(b,\ell;r) \in \Q[b,\ell][\![r]\!]$ is of order $0$, $J(b;r) \in \Q[b][\![r]\!]$ is of order $-2$, and $\hat{Z}^{(b)} \in \Q[b,\ell_1,\ldots,\ell_d][\![\hat{x}_0,\ldots,\hat{x}_d,r]\!]$ is of order $-2$ as well.
Moreover, at leading order we have
\begin{align*}
	\Omega_0 I(b,\ell;r) &= \sum_{p=0}^\infty \frac{r^p}{(p!)^2} (\ell^2-b^2)^p = I_0\left(2\sqrt{(\ell^2-b^2)r}\right),\\ 
	\Omega_{-2} J(b,r) &= \sum_{p=1}^\infty \frac{(-1)^{p+1} r^p}{p!(p-1)!} b^{2p-2} = \frac{\sqrt{r}}{b}J_1(2b\sqrt{r}),\\
	\Omega_{-2}\hat{Z}^{(b)}(r)&= \frac{\sqrt{r}}{b}J_1(2b\sqrt{r}) - \sum_{i=0}^d I_0\left(2\sqrt{(\ell_i^2-b^2)r}\right)\, \hat{x}_i,
\end{align*}
where $I_0$ and $J_1$ are (modified) Bessel functions of the first kind.
It follows from \cite[Lemma 13]{Budd2020} that $\hat{R}^{(b)}\in \Q[b,\ell_1,\ldots,\ell_d][\![\hat{x}_0,\ldots,\hat{x}_d]\!]$ is of degree $-2$, so
\begin{equation*}
	\Omega_{-2} \hat{Z}^{(b)}( \hat{R}^{(b)}) = (\Omega_{-2} \hat{Z}^{(b)})( \Omega_{-2}\hat{R}^{(b)}) = 0.
\end{equation*}
Therefore $\Omega_{-2} \hat{R}^{(b)}$ is the unique formal power series solution to $\Omega_{-2} \hat{Z}^{(b)}(r) = 0$.

In order to find an expression for $M_p^{(\beta)}$ we need the following lemma.
\begin{lemma}
	For any $p \geq 0$ the polynomial $t\mapsto Q_p(t b,t^2 j)$ is of degree $2p+2$ and the top-degree coefficient is given by
	\begin{equation*}
	[t^{2p+2}] Q_p(t b,t^2 j) = 4^p\frac{p!}{(2p+1)!}\sum_{k=0}^p \frac{b^{2p-2k} j^{k+1}}{(p-k)!}.
	\end{equation*}
\end{lemma}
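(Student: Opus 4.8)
The plan is to isolate the weighted top-degree part of $Q_p$ under the grading $\deg b = 1$, $\deg j = 2$ (total weighted degree $2p+2$) by means of a scaling limit. Since $Q_p$ is a polynomial, a monomial $b^a j^c$ contributes $x^a\, j^{a/2+c}$ to $Q_p(x\sqrt{j},j)$, and $a+2c$ is even precisely when $a$ is even. A monomial of weighted degree $a+2c=2p+2$ therefore has $a=2(p-k)$, $c=k+1$ for some $0\le k\le p$ and contributes $x^{2(p-k)}j^{p+1}$, whereas every monomial of strictly smaller weighted degree contributes a strictly smaller power of $j$ (at most $j^{p+1/2}$). Hence it suffices to show that $Q_p(x\sqrt{j},j)=j^{p+1}g(x)\,(1+o(1))$ as $j\to\infty$ for fixed $x\ge 0$, with $g(x)=\tfrac{4^p p!}{(2p+1)!}\sum_{m=0}^p \tfrac{x^{2m}}{m!}$; matching the coefficient of $x^{2(p-k)}$ then yields $[t^{2p+2}]Q_p(tb,t^2 j)$ as claimed, and in particular shows the degree is exactly $2p+2$ since the top coefficient is nonzero.

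To compute this limit I use the defining identity \eqref{eq:qdef}, writing $Q_p(b,j)=\binom{2j-1}{j+b}^{-1}\sum_{k=b}^{j-1}\binom{2k+1+p}{2p+1}\binom{2j-1}{j+k}$ with $b=\lfloor x\sqrt{j}\rfloor$. Two elementary asymptotic inputs drive the computation. First, $\binom{2k+1+p}{2p+1}$ is $\tfrac{1}{(2p+1)!}$ times a product of $2p+1$ integers centred at $2k$, so $\binom{2k+1+p}{2p+1}=\tfrac{(2k)^{2p+1}}{(2p+1)!}\,(1+O(1/k))$. Second, the local central limit theorem for $\mathrm{Bin}(2j-1,\tfrac12)$ gives $\binom{2j-1}{j+k}=\binom{2j-1}{j}e^{-k^2/j}(1+o(1))$ uniformly on the relevant scale $k=O(\sqrt{j})$. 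Substituting $k=u\sqrt{j}$ and passing from the sum to an integral ($\rmd k=\sqrt{j}\,\rmd u$), the factor $\binom{2j-1}{j}$ cancels against the denominator $\binom{2j-1}{j+b}=\binom{2j-1}{j}e^{-x^2}(1+o(1))$, yielding
\[
Q_p(x\sqrt{j},j)=e^{x^2}\,\frac{2^{2p+1}}{(2p+1)!}\, j^{p+1}\!\int_x^\infty u^{2p+1}e^{-u^2}\,\rmd u\;(1+o(1)).
\]
It remains to evaluate the integral: the substitution $v=u^2$ turns it into an incomplete Gamma integral, $\int_x^\infty u^{2p+1}e^{-u^2}\rmd u=\tfrac12\int_{x^2}^\infty v^p e^{-v}\rmd v=\tfrac12\Gamma(p+1,x^2)=\tfrac{p!}{2}e^{-x^2}\sum_{m=0}^p \tfrac{x^{2m}}{m!}$, using the closed form of the incomplete Gamma function at integer order. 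Hence the coefficient of $j^{p+1}$ equals $g(x)=e^{x^2}\tfrac{2^{2p+1}}{(2p+1)!}\cdot\tfrac{p!}{2}e^{-x^2}\sum_{m=0}^p \tfrac{x^{2m}}{m!}=\tfrac{4^p p!}{(2p+1)!}\sum_{m=0}^p \tfrac{x^{2m}}{m!}$. Matching the coefficient of $x^{2(p-k)}$, i.e. setting $m=p-k$, gives $[b^{2(p-k)}j^{k+1}]Q_p=4^p\tfrac{p!}{(2p+1)!(p-k)!}$, which is exactly the asserted formula.

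The main obstacle is the rigour of the two asymptotic inputs, specifically controlling the part of the sum outside the window $k=O(\sqrt{j})$ so that it contributes only $o(j^{p+1})$. I would handle this with the elementary uniform bound $\binom{2j}{j+k}\le\binom{2j}{j}\,e^{-k^2/(2j)}$ for $0\le k\le j$, which follows from $\binom{2j}{j+k}/\binom{2j}{j}=\prod_{i=1}^k \tfrac{j-i+1}{j+i}\le \exp\!\big(-\sum_{i=1}^k \tfrac{2i-1}{2j}\big)=e^{-k^2/(2j)}$, and analogously for $\binom{2j-1}{j+k}$. Combined with the crude estimate $\binom{2k+1+p}{2p+1}\le (2k+1+p)^{2p+1}$, this shows that the tail $k\ge\sqrt{j}\log j$ contributes $\binom{2j-1}{j}\,o(j^{p+1})$, negligible after dividing by $\binom{2j-1}{j+b}$, while on the window $k=O(\sqrt{j}\log j)$ the Gaussian approximation and the replacement of $\binom{2k+1+p}{2p+1}$ by its leading monomial each introduce only multiplicative $(1+o(1))$ errors. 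Since $Q_p$ is already known to be a polynomial, once the leading $j^{p+1}$ asymptotics is established the extraction of the top weighted-degree coefficients is immediate, with no need for a finer expansion.
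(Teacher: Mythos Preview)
Your proof is correct and follows essentially the same route as the paper: both extract the top weighted-degree part of $Q_p$ via a scaling limit in the defining identity \eqref{eq:qdef}, replace the binomial ratios by their Gaussian approximation, and evaluate the resulting incomplete-Gamma-type integral. The only cosmetic difference is the parametrization of the limit --- the paper sends $b\to\infty$ with $j/b^2\to x$ and $k/b\to y$, whereas you send $j\to\infty$ with $b/\sqrt{j}\to x$ and $k/\sqrt{j}\to u$ --- and you are more explicit about the tail control (via the bound $\binom{2j}{j+k}\le\binom{2j}{j}e^{-k^2/(2j)}$) where the paper simply appeals to integrability and Stirling.
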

\begin{proof}
	Let us inspect the definition \eqref{eq:qdef} of $Q_p(b,j)$ that is valid for integers $j > b\geq 0$. 
	For positive real numbers $x,y$ we consider the summand in the limit $b,j,k\to \infty$ while maintaining the asymptotic ratios $j / b^2 \to x$ and $k / b \to y$.
	Stirling's formula easily gives
	\begin{equation*}
	\frac{\binom{2k+1+p}{2p+1}\binom{2j-1}{j+k}}{\binom{2j-1}{j+b}} = \frac{2^{2p+1}}{(2p+1)!} y^{2p+1} e^{\frac{1-y^2}{x}} b^{2p+1} (1 + O(b^{-1})).
	\end{equation*}
	Since this is integrable as $y\to\infty$, approximating the sum in \eqref{eq:qdef} by an integral we find that $Q_p(b,b^2 x)$ grows as $b^{2p+2}$ and 
	\begin{equation*}
	[b^{2p+2}]Q_p(b,b^2 x) = \frac{2^{2p+1}}{(2p+1)!} \int_1^\infty \rmd y\,  e^{\frac{1-y^2}{x}}y^{2p+1} =  4^p\frac{p!}{(2p+1)!}\sum_{k=0}^p \frac{x^{k+1}}{(p-k)!}.
	\end{equation*}
	Substituting $b \to tb$ and $x \to j / b^2$ gives the desired formula.
\end{proof}

With the help of this lemma and $\Omega_{-2}\,(1+r)^{-b} \, \hat{Z}^{(b)}(r) = \Omega_{-2}\,\hat{Z}^{(b)}(r)$ we find
\begin{align*}
	\Omega_{2p} Q_p(b,(1+r)\partial_r) \,(1+r)^{-b} \, \hat{Z}^{(b)}(r) &= ([t^{2p+2}]Q_p(t b,t^2 \partial_r))\Omega_{-2}\,(1+r)^{-b} \, \hat{Z}^{(b)}(r)\\
	&= 4^p\frac{p!}{(2p+1)!}\sum_{k=0}^p \frac{b^{2p-2k}}{(p-k)!} \partial_r^{k+1} \Omega_{-2}\hat{Z}^{(b)}(r).
\end{align*}
Hence $\hat{M}_p^{(b)}$ is of order $2p$ and satisfies
\begin{equation}\label{eq:mtop}
	\Omega_{2p}\hat{M}_p^{(b)} = 4^p\frac{p!}{(2p+1)!}\sum_{k=0}^p \frac{b^{2p-2k}}{(p-k)!} \left(\partial_r^{k+1} \Omega_{-2}\hat{Z}^{(b)}\right)(\Omega_{-2}\hat{R}^{(b)}).
\end{equation}
We are now in a position to extract the leading orders in the partition functions $\hat{F}_g^{(b)}$. 
Note that $\int \rmd r\, I(b,\ell_1;r)I(b,\ell_2;r) / (1+r)^{2b+1} \in \Q[b,\ell_1,\ell_2][\![r]\!]$ is of order $-2$ and 
\begin{align*}
	\Omega_{-2} \,\int \!\!\!\rmd r\, \frac{I(b,\ell_1;r)I(b,\ell_2;r)}{(1+r)^{2b+1}} &= \int \!\!\!\rmd r\, \Omega_0I(b,\ell_1;r)\,\Omega_0I(b,\ell_2;r).
\end{align*}
Therefore $\hat{F}_0^{(b)}$ is of order $-6$ and 
\begin{align}
	\frac{\partial}{\partial \hat{x}_{1}}\frac{\partial}{\partial \hat{x}_{2}} \Omega_{-6}\hat{F}_0^{(b)}= \Omega_{-2}\frac{\partial}{\partial \hat{x}_{1}}\frac{\partial}{\partial \hat{x}_{2}} \hat{F}_0^{(b)} &= \int_{0}^{\Omega_{-2}\hat{R}^{(b)}} \!\!\!\rmd r\, \Omega_0I(b,\ell_1;r)\,\Omega_0I(b,\ell_2;r).\label{eq:genfun0top}
\end{align}
Clearly $\hat{F}_1^{(b)}$ is of order $0$ and satisfies
\begin{equation}
	\Omega_0 \,\hat{F}_1^{(b)} = -\frac{1}{12}\log \Omega_0 \,\hat{M}_0^{(b)}.\label{eq:genfun1top}
\end{equation}
For $g\geq 2$ we find with the help of \eqref{eq:Ptilde} that $\hat{F}_g^{(b)}$ is of order $6g-6$ and
\begin{align}
	\Omega_{6g-6} \hat{F}_g^{(b)} &= - \Omega_{6g-6} \, (\hat{M}_0^{(b)})^{2-2g} \tilde{P}_g \left(\frac{\hat{M}_1^{(b)}}{\hat{M}_0^{(b)}}, \ldots,\frac{\hat{M}_{3g-3}^{(b)}}{\hat{M}_0^{(b)}}\right)\nonumber\\
	&= (\Omega_0\hat{M}_0^{(b)})^{2-2g} \bar{P}_g \left(\frac{\Omega_2\hat{M}_1^{(b)}}{\Omega_0\hat{M}_0^{(b)}}, \ldots,\frac{\Omega_{6g-6}\hat{M}_{3g-3}^{(b)}}{\Omega_0\hat{M}_0^{(b)}}\right),\label{eq:genfun2top}
\end{align}
where 
\begin{equation*}
\bar{P}_g(m_1,\ldots,m_{3g-3}) = -[t^{3g-3}] \tilde{P}_g(m_1 t,m_2t^2,\ldots,m_{3g-3}t^{3g-3}).	
\end{equation*}

\subsection{Proof of Proposition \ref{thm:metricmapgf}}\label{sec:proofmetricmapgf}
The combination of \eqref{eq:partitionfunctionbeta}, \eqref{eq:polynomialrelation} and \eqref{eq:discretepartitionfunction} translates into
\begin{equation}\label{eq:FgmetricFgtop}
	F_g^{(\beta)}(\alpha_1,\ldots,\alpha_d;x_0,\ldots,x_d) = \frac{1}{2} \Omega_{6g-6} \hat{F}_g^{(b)}\Big|_{b=\frac12\beta,\,\ell_i=\frac12\alpha_i,\,\hat{x}_i,=x_i}, \qquad g\geq 0.
\end{equation}
It therefore makes sense to introduce the following power series,
\begin{align*}
	I^{(\beta)}(\alpha; r) &= \Omega_0 I(b,\ell;r) \Big|_{b=\frac12\beta,\,\ell=\frac12\alpha} = I_0( \sqrt{(\alpha^2-\beta^2)r}),\\
	J^{(\beta)}(r) &= \Omega_{-2} J(b;r)\Big|_{b=\frac12\beta} = \frac{2\sqrt{r}}{\beta} J_1(\beta\sqrt{r}).
\end{align*}
Then $Z^{(\beta)}(r)$ and $R^{(\beta)}$ defined in \eqref{eq:Zbeta} are related to $\hat{Z}^{(b)}$ and $\hat{R}^{(b)}$ via 
\begin{align*}
	Z^{(\beta)}(r) &= \Omega_{-2} \hat{Z}^{(b)}\Big|_{b=\frac12\beta,\,\ell_i=\frac12\alpha_i,\, \hat{x}_i = x_i} = J^{(\beta)}(r) - \sum_{i=0}^d I^{(\beta)}(\alpha_i; r)\,x_i,\\
	R^{(\beta)} &= \Omega_{-2} \hat{R}^{(b)}\Big|_{b=\frac12\beta,\,\ell_i=\frac12\alpha_i,\, \hat{x}_i = x_i}.
\end{align*}
Next we verify that 
\begin{align*}
M^{(\beta)}_p &= 4^{-p} \frac{(2p+1)!}{p!} \Omega_{2p} \hat{M}_p^{(b)}\Big|_{b=\frac12\beta,\,\ell_i=\frac12\alpha_i,\, \hat{x}_i = x_i}
\end{align*}
satisfy the relations \eqref{eq:Mbeta}.
Rewriting \eqref{eq:mtop} in terms $Z^{(\beta)}$ and $R^{(\beta)}$ yields the expression
\begin{equation*}
	M^{(\beta)}_p = \sum_{k=0}^p \frac{(\beta/2)^{2p-2k}}{(p-k)!} \,\frac{\partial^{k+1} Z^{(\beta)}}{\partial r^{k+1}}( R^{(\beta)}).
\end{equation*}
Taking the $x_0$-derivative of $Z^{(\beta)}(R^{(\beta)})= 0$, and recalling that $\alpha_0=\beta$, we observe that $\frac{\partial Z^{(\beta)}}{\partial r}( R^{(\beta)}) \partial_{x_0} R^{(\beta)} = 1$.
Hence, 
\begin{equation*}
	M_0^{(\beta)} = \frac{\partial Z^{(\beta)}}{\partial r}( R^{(\beta)}) = \frac{1}{\partial_{x_0} R^{(\beta)}},
\end{equation*}
while for $p\geq 1$ we find 
\begin{align*}
\partial_{x_0} M^{(\beta)}_{p-1} &= \sum_{k=1}^{p} \frac{(\beta/2)^{2p-2k}}{(p-k)!} \,\partial_{x_0}\frac{\partial^{k} Z^{(\beta)}}{\partial r^{k}}( R^{(\beta)})\\
&= \sum_{k=1}^{p} \frac{(\beta/2)^{2p-2k}}{(p-k)!} \,\frac{\partial^{k+1} Z^{(\beta)}}{\partial r^{k+1}}( R^{(\beta)})\, \partial_{x_0} R^{(\beta)}.\\
&= M^{(\beta)}_p \partial_{x_0} R^{(\beta)} - \frac{1}{p!}\left(\frac{\beta}{2}\right)^{2p},
\end{align*}
which is in agreement with \eqref{eq:Mbeta}.

We have all ingredients to evaluate \eqref{eq:FgmetricFgtop}. 
For $g=0$, \eqref{eq:genfun0top} implies
\begin{equation*}
	\frac{\partial}{\partial x_{1}}\frac{\partial}{\partial x_{2}} F^{(\beta)}_0= \frac{1}{2} \int_{0}^{R^{(\beta)}} \!\!\!\rmd r\, I^{(\beta)}(\alpha_1;r)\,I^{(\beta)}(\alpha_2;r).
\end{equation*}
Using that $Z^{(\beta)}(R^{(\beta)}) = 0$ and that the coefficients of the linear and quadratic terms of $F_0^{(\beta)}$ vanish, it is easily seen that this expression integrates to 
\begin{equation*}
	F^{(\beta)}_0= \frac{1}{4} \int_{0}^{R^{(\beta)}} \!\!\!\rmd r\, \left(Z^{(\beta)}(r)\right)^2.
\end{equation*}
The expressions for $g=1$ follows directly from \eqref{eq:genfun1top}.
For $g\geq 2$ we introduce the polynomial $\bar{\mathcal{P}}_g$ via
\begin{equation*}
	\bar{\mathcal{P}}_g(m_1,\ldots,m_{3g-3}) = 2^{-g} \bar{P}_g(\tilde{m}_1,\ldots,\tilde{m}_{3g-3}), \qquad \tilde{m}_p = 4^p \frac{p!}{(2p+1)!} m_p.
\end{equation*}
Then the claimed expression for $g\geq 2$ follows from \eqref{eq:genfun2top}.
This finishes the proof.

\subsection{Expansion of the moments}
For reference we compute the initial terms of the power series $M_p^{(\beta)}$.
\begin{lemma}\label{lem:Mexpansion}
Up to linear order the moments $M_p^{(\beta)}$ are independent of $\beta$ and given by 
\begin{equation*}
	M_p^{(\beta)} = \ind_{p=0} - \frac{4^{-p-1}}{(p+1)!} \sum_{i=1}^d x_i\, \alpha_i^{2p+2} + \cdots.
\end{equation*}
\end{lemma}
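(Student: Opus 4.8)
The plan is to substitute the linear-order expansion of $R^{(\beta)}$ into the closed form
\[
M^{(\beta)}_p = \sum_{k=0}^p \frac{(\beta/2)^{2p-2k}}{(p-k)!}\,\frac{\partial^{k+1} Z^{(\beta)}}{\partial r^{k+1}}\big(R^{(\beta)}\big)
\]
for the moments obtained in the proof of Proposition~\ref{thm:metricmapgf}. The key simplification is that $Z^{(\beta)}$ is affine in $x_0,\ldots,x_d$ and $R^{(\beta)}$ vanishes when all $x_i=0$, so only the part of $R^{(\beta)}$ that is linear in the $x_i$ is needed to determine $M^{(\beta)}_p$ to linear order. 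First I would compute this linear part of $R^{(\beta)}$.

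To that end, write $Z^{(\beta)}=Z_0+Z_1$, where from the Bessel series $Z_0(r)=\tfrac{2\sqrt r}{\beta}J_1(\beta\sqrt r)=\sum_{m\ge0}\frac{(-1)^m(\beta/2)^{2m}}{m!(m+1)!}r^{m+1}$ is the $x$-independent part and $Z_1(r)=-\sum_{i=0}^d x_i\,I_0(\sqrt{(\alpha_i^2-\beta^2)r})=-\sum_{i=0}^d x_i\sum_{m\ge0}\frac{(\alpha_i^2-\beta^2)^m}{(m!)^2 4^m}r^m$ is exactly linear in $x$. Since $Z_0(r)=r+O(r^2)$ and $Z_1(0)=-\sum_{i=0}^d x_i$ (using $I_0(0)=1$), solving $Z^{(\beta)}(R^{(\beta)})=0$ at zeroth and first order in $x$ gives $R^{(\beta)}=\sum_{i=0}^d x_i+O(x^2)$.

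Next I would insert this into the closed form and Taylor expand each $\partial_r^{k+1}Z^{(\beta)}$ about $r=0$. Because $R^{(\beta)}=O(x)$, the only contributions surviving to linear order are the constant $Z_0^{(k+1)}(0)$, the shift $Z_0^{(k+2)}(0)\sum_{i}x_i$ produced by the linear part of $R^{(\beta)}$, and the term $Z_1^{(k+1)}(0)$; all remaining terms are $O(x^2)$. Reading off the above series gives $Z_0^{(k+1)}(0)=\frac{(-1)^k(\beta/2)^{2k}}{k!}$, $Z_0^{(k+2)}(0)=\frac{(-1)^{k+1}(\beta/2)^{2k+2}}{(k+1)!}$ and $Z_1^{(k+1)}(0)=-\sum_{i}x_i\frac{(\alpha_i^2-\beta^2)^{k+1}}{(k+1)!\,4^{k+1}}$.

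Finally I would carry out the sum over $k$. The $x$-independent part is $(\beta/2)^{2p}\sum_{k=0}^p\frac{(-1)^k}{(p-k)!\,k!}=\frac{(\beta/2)^{2p}}{p!}(1-1)^p=\ind_{p=0}$. For the coefficient of a fixed $x_i$ the shift term and the $Z_1$ term together give, after setting $u=(\beta/2)^2$, $v=\tfrac{\alpha_i^2-\beta^2}{4}$ and reindexing $j=k+1$, the expression $-\frac{1}{(p+1)!}\sum_{j=0}^{p+1}\binom{p+1}{j}u^{\,p+1-j}v^{\,j}=-\frac{(u+v)^{p+1}}{(p+1)!}=-\frac{4^{-p-1}}{(p+1)!}\alpha_i^{2p+2}$, since $u+v=\alpha_i^2/4$. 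I expect the recombination of these two separately $\beta$-dependent pieces into the single power $(\alpha_i^2/4)^{p+1}$ to be the crux of the argument: it is precisely here that the $\beta$-dependence cancels, leaving the coefficients of $x_1,\ldots,x_d$ equal to the claimed $\beta$-free values. (The same computation assigns the coefficient $-\frac{4^{-p-1}}{(p+1)!}\beta^{2p+2}$ to $x_0$; keeping it, one can check through the recurrence \eqref{eq:Mbeta} that the constant parts of all moments $M^{(\beta)}_p$ with $p\ge1$ vanish, a convenient consistency test.)
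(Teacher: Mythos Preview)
Your proof is correct and follows essentially the same route as the paper: both start from the closed form $M^{(\beta)}_p = \sum_{k=0}^p \frac{(\beta/2)^{2p-2k}}{(p-k)!}\,\partial_r^{k+1}Z^{(\beta)}(R^{(\beta)})$, expand $R^{(\beta)}=\sum_i x_i + O(x^2)$, Taylor-expand the derivatives of $Z^{(\beta)}$ to linear order, and finish with a binomial sum. Your version is in fact more explicit than the paper's, which simply records the intermediate expression for $\partial_r^{k+1}Z^{(\beta)}(R^{(\beta)})$ and then appeals to ``performing the simple binomial sums''; your identification of the shift term with the missing $j=0$ summand, yielding the clean $(u+v)^{p+1}=(\alpha_i^2/4)^{p+1}$, is exactly the mechanism behind that phrase.
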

\begin{proof}
Up to linear order
\begin{align*}
	\frac{\partial^{k+1} Z^{(\beta)}}{\partial r^{k+1}}( R^{(\beta)}) &= \frac{\partial^{k+1} J^{(\beta)}}{\partial r^{k+1}}(0) + \sum_{i=0}^d x_i\left(\frac{\partial^{k+2} J^{(\beta)}}{\partial r^{k+2}}(0)  - \frac{\partial^{k+1} I^{(\beta)}}{\partial r^{k+1}}(\alpha_i;0)\right)  + \cdots\\
	&= (-1)^{k} \frac{(\beta/2)^{2k}}{k!} - (-1)^{k+1} \frac{(\beta/2)^{2k+2}}{(k+1)!}\sum_{i=0}^d x_i\left(1 - \left(\frac{\alpha_i^2}{\beta^2}-1\right)^{k+1}\right) + \cdots.
\end{align*}
Plugging this into \eqref{eq:Mbeta} and performing the simple binomial sums we obtain the claimed expansion.
\end{proof}

\section{Weil--Petersson volumes and intersection numbers}\label{sec:wpintersection}

\subsection{Relation to generating function of intersection numbers}\label{sec:intersection}
Since the work of Kontsevich proving Witten's conjecture \cite{Kontsevich1992} it is well-known that in the absence of irreducibility constraints volumes of genus-$g$ metric maps are closely related to intersection theory on the moduli space of genus $g$ curves.
To be precise, let $\overline{\mathcal{M}}_{g,n}$ be the Deligne--Mumford compactification of the moduli space $\mathcal{M}_{g,n}$ of genus-$g$ curves with $n$ marked points.
It comes with $n$ tautological line bundles, arising naturally from the cotangent spaces at each of the $n$ marked points.
The Chern classes of these line bundles are denoted $\psi_1,\ldots,\psi_n$. 
The \emph{$\psi$-class intersection numbers} correspond to integrals of products of these over $\overline{\mathcal{M}}_{g,n}$,
\begin{equation*}
	\langle \tau_{d_1}\tau_{d_2}\cdots \tau_{d_n} \rangle_{g,n} \coloneqq \int_{\overline{\mathcal{M}}_{g,n}} \psi_1^{d_1}\psi_2^{d_2}\cdots \psi_n^{d_n}, \qquad d_1,\ldots,d_n \in \Z_{\geq 0}, \quad d_1+\cdots+d_n = 3g-3+n,
\end{equation*}
with the latter condition entering because $\overline{\mathcal{M}}_{g,n}$ has dimension $6g-6+2n$.
Kontsevich showed in \cite[Section 3]{Kontsevich1992} (see also \cite{Norbury2010}) that the volume of genus-$g$ metric maps with $n$ faces of circumference $\alpha_1,\ldots,\alpha_n$ is given by
\begin{equation*}
V_{g,n}^{(0)}(\alpha_1,\ldots,\alpha_n) = 2^{5-5g-2n} \sum_{\substack{d_1,\ldots,d_n\geq 0\\ d_1+\cdots+d_n = 3g-3+n}} \langle \tau_{d_1}\tau_{d_2}\cdots \tau_{d_n} \rangle_{g,n} \prod_{i=1}^n \frac{\alpha_i^{2d_i}}{\,d_i!}.
\end{equation*}
Besides the $\psi_1,\ldots,\psi_n$ there is another important class $\kappa_1$, the \emph{first Mumford--Morita--Miller class}, which is also the cohomology class of the Weil--Petersson symplectic form (up to a factor $2\pi^2$).
One may then consider mixed $\psi,\kappa_1$-intersection numbers
\begin{equation*}
	\langle \kappa_1^m\tau_{d_1}\cdots \tau_{d_n} \rangle_{g,n} \coloneqq \int_{\overline{\mathcal{M}}_{g,n}} \kappa_1^m\psi_1^{d_1}\cdots \psi_n^{d_n}, \qquad m,d_1,\ldots,d_n \in \Z_{\geq 0}, \quad m+d_1+\cdots+d_n = 3g-3+n,
\end{equation*}
Mirzakhani has shown \cite{Mirzakhani2007a} that the Weil--Petersson volumes of genus-$g$ hyperbolic surfaces with $n$ boundary components of lengths $L_1, \ldots, L_n$ are related to these via
\begin{equation*}
V_{g,n}^{\mathrm{WP}}(L_1,\ldots,L_n) = 2^{3-3g-n} \sum_{\substack{d_1,\ldots,d_n,m\geq 0\\ d_1+\cdots+d_n+m = 3g-3+n}} \langle \kappa_1^{m} \tau_{d_1}\tau_{d_2}\cdots \tau_{d_n} \rangle_{g,n}  \frac{(2\pi)^{2m}}{\,m!} \prod_{i=1}^n \frac{L_i^{2d_i}}{\,d_i!}.
\end{equation*}
Let us start by reformulating our partition functions $F_{g}^{(0)}$ and $F_g^{\mathrm{WP}}$ in terms of the conventional formal power series
\begin{align*}
G_g(s,t_0,t_1,\ldots) &= \sum_{n=0}^\infty\sum_{\substack{m,n_0,n_1,\ldots\geq 0\\ m+\sum_i i n_i = 3g-3+n\\
		\sum_in_i=n}} \langle \kappa_1^m\tau_0^{n_0}\tau_1^{n_1}\cdots\rangle_{g,n} \frac{s^m}{m!}\prod_{i\geq 0} \frac{t_i^{n_i}}{n_i!}. 
\end{align*}

\begin{lemma}\label{lem:FG} For $g\geq 0$ we have
	\begin{align*}
	F_g^{(0)}(x_0,\ldots,x_d) &= 2^{g-1} G_g(0, t_0, t_1,\ldots)\;\,\qquad\text{with}\qquad t_k = \frac{4^{-k}}{k!} \sum_{i=0}^d x_i\, \alpha_i^{2k},\\
	F_g^{\mathrm{WP}}(x_0,\ldots,x_d) &= 2^{g-1} G_g(\pi^2, t_0, t_1,\ldots) \qquad\text{with}\qquad  t_k = \frac{4^{-k}}{k!} \sum_{i=0}^d x_i\, L_i^{2k}.
	\end{align*}
\end{lemma}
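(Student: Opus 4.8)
The plan is to substitute the two intersection-number expressions recalled just above—Kontsevich's formula for $V_{g,n}^{(0)}$ and Mirzakhani's formula for $V_{g,n}^{\mathrm{WP}}$—directly into the definitions of the partition functions $F_g^{(0)}$ and $F_g^{\mathrm{WP}}$, and then to reorganize the resulting multiple sums until they match the definition of $G_g$ term by term. Both identities are proved by the same manipulation, so I would carry out the $F_g^{(0)}$ case in detail and then indicate the minor modifications needed for the Weil--Petersson case.

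First I would plug Kontsevich's formula into $F_g^{(0)} = \sum_{n\geq 1}\frac{1}{n!}\sum_{i_1,\ldots,i_n} x_{i_1}\cdots x_{i_n}\,V_{g,n}^{(0)}(\alpha_{i_1},\ldots,\alpha_{i_n})$ and interchange the order of summation so that the sum over the indices $i_1,\ldots,i_n$ factorizes across the $n$ marked points. The key algebraic observation is that for a marked point carrying $\psi$-degree $d_j$ one has $\frac{1}{d_j!}\sum_{i=0}^d x_i\,\alpha_i^{2d_j} = 4^{d_j} t_{d_j}$, directly from the definition $t_k = \frac{4^{-k}}{k!}\sum_i x_i\alpha_i^{2k}$. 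Hence the product over marked points becomes $4^{\sum_j d_j}\prod_j t_{d_j} = 4^{3g-3+n}\prod_j t_{d_j}$, using the dimension constraint $\sum_j d_j = 3g-3+n$. Collecting powers of two, $2^{5-5g-2n}\cdot 4^{3g-3+n} = 2^{g-1}$, which produces the claimed prefactor.

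The remaining step is purely combinatorial: I would pass from the sum over ordered tuples $(d_1,\ldots,d_n)$ weighted by $\tfrac{1}{n!}$ to a sum over occupation numbers $(n_0,n_1,\ldots)$, where $n_i$ records how many of the $d_j$ equal $i$. Since $\langle\tau_{d_1}\cdots\tau_{d_n}\rangle_{g,n}$ depends only on the multiset of indices, each occupation profile is realized by $n!/\prod_i n_i!$ ordered tuples, so the weight $\tfrac{1}{n!}$ becomes $1/\prod_i n_i!$ and $\prod_j t_{d_j}$ becomes $\prod_i t_i^{n_i}$. Summing over $n$ and specializing to $s=0$ (which forces $m=0$ in $G_g$), this is exactly $G_g(0,t_0,t_1,\ldots)$, giving the first identity. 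For the Weil--Petersson case the only changes are that Mirzakhani's formula carries an extra factor $\langle\kappa_1^m\tau_{d_1}\cdots\tau_{d_n}\rangle_{g,n}\,(2\pi)^{2m}/m!$ together with the prefactor $2^{3-3g-n}$; combining the latter with the $2^{2-2g-n}$ in the definition of $F_g^{\mathrm{WP}}$ and with $4^{3g-3+n-m}$ now yields $2^{g-1-2m}$, and the identity $(2\pi)^{2m}\,2^{-2m} = \pi^{2m}$ converts the power of $2\pi$ into $\pi^{2m}$. The factor $\pi^{2m}/m!$ is precisely the coefficient of $s^m$ in the expansion of $G_g(\pi^2,t_0,t_1,\ldots)$, so after the same occupation-number rearrangement one arrives at $F_g^{\mathrm{WP}} = 2^{g-1}G_g(\pi^2,t_0,t_1,\ldots)$.

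The only genuine obstacle is the bookkeeping: the powers of two coming from the two normalizations and from the $4^{d_j}$ factors must combine cleanly to the universal $2^{g-1}$, and in the Weil--Petersson case the $m$-dependent powers must recombine into $\pi^{2m}$. A secondary point to address is the mismatch in the lower limit of the outer sum ($n\geq 1$ for $F_g^{(0)}$ versus $n\geq 0$ in $G_g$), but this is harmless: for $s=0$ the $n=0$ term of $G_g$ contributes only $\langle\kappa_1^{3g-3}\rangle_{g,0}\,0^{3g-3}/(3g-3)!$, which vanishes for $g\geq 2$ and lies outside the stable range for $g\leq 1$, so it does not affect the identity.
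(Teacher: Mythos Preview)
Your proposal is correct and follows essentially the same route as the paper: substitute the intersection-number formulas into the definitions, factor the sums over $i_1,\ldots,i_n$ into the variables $t_k$, collect the powers of two via the dimension constraint, and regroup by occupation numbers. The only cosmetic difference is that the paper writes out the Weil--Petersson case in full and declares the $\beta=0$ case analogous, whereas you do the reverse; your handling of the $n=0$ term is a small point the paper leaves implicit.
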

\begin{proof}
	For the case of Weil--Petersson volumes we find
\begin{align*}
	F_g^{\mathrm{WP}}(x_1,\ldots,x_d) & =  \sum_{n=1}^\infty\frac{1}{n!} \sum_{i_1=0}^d x_{i_1} \cdots\! \sum_{i_n=0}^d x_{i_n} 2^{2-2g-n} V^{\mathrm{WP}}_{g,n}(L_{i_1}\ldots,L_{i_n})     \\
	                                  & =\sum_{n=1}^\infty\frac{1}{n!} \sum_{i_1=0}^d x_{i_1} \cdots\! \sum_{i_n=0}^d x_{i_n} 2^{5-5g-2n}\!\!\!\!\!\!\sum_{\substack{d_1,\ldots,d_n,m\geq 0 \\ d_1+\cdots+d_n+m = 3g-3+n}} \!\!\!\langle \kappa_1^{m} \tau_{d_1}\tau_{d_2}\cdots \tau_{d_n} \rangle_{g,n}  \frac{(2\pi)^{2m}}{\,m!} \prod_{k=1}^n \frac{L_{i_k}^{2d_k}}{\,d_k!}\\
	                                  & =2^{g-1}\sum_{n=1}^\infty\frac{1}{n!} \sum_{i_1=0}^d x_{i_1} \cdots\! \sum_{i_n=0}^d x_{i_n} \!\!\!\!\!\!\sum_{\substack{d_1,\ldots,d_n,m\geq 0     \\ d_1+\cdots+d_n+m = 3g-3+n}} \!\!\!\langle \kappa_1^{m} \tau_{d_1}\tau_{d_2}\cdots \tau_{d_n} \rangle_{g,n}  \frac{\pi^{2m}}{m!} \prod_{k=1}^n \frac{L_{i_k}^{2d_k}}{4^{d_k}\,d_k!}
\end{align*}
and by interchanging the order of summations
\begin{align*}
	F_g^{\mathrm{WP}}(x_1,\ldots,x_d) & =2^{g-1}\sum_{n=1}^\infty\frac{1}{n!} \!\!\!\sum_{\substack{d_1,\ldots,d_n,m\geq 0                                            \\ d_1+\cdots+d_n+m = 3g-3+n}} \!\!\!\langle \kappa_1^{m} \tau_{d_1}\tau_{d_2}\cdots \tau_{d_n} \rangle_{g,n}  \frac{\pi^{2m}}{m!} \prod_{k=1}^n \frac{4^{-d_k}}{d_k!}\sum_{i=0}^d x_{i}\,L_{i}^{2d_k}\\
	 & =2^{g-1}\sum_{n=1}^\infty\sum_{\substack{m,n_0,n_1,\ldots\geq 0                                                               \\ m+\sum_i i n_i = 3g-3+n\\
			\sum_in_i=n}} \langle \kappa_1^m\tau_0^{n_0}\tau_1^{n_1}\cdots\rangle_{g,n} \frac{\pi^{2m}}{m!}\prod_{j\geq 0} \frac{t_j^{n_j}}{n_j!}. \\
	 & = 2^{g-1} G_g(\pi^2, t_0,t_1,\ldots).
\end{align*}
The computation in the case of $F_g^{(0)}$ is entirely analogous.	
\end{proof}

\subsection{Substitution relations for $g\geq 2$}
Note that we have already established
\begin{equation*}
	F^{\mathrm{WP}}_0 = \frac{1}{4}\int_{0}^{R} \!\!\!\rmd r\, Z(r)^2,\qquad
	F^{\mathrm{WP}}_1 = - \frac{1}{24} \log M^{\mathrm{WP}}_0
\end{equation*}  
as a consequence of Proposition \ref{thm:metricmapgf} and Corollary \ref{thm:metricwpequivalence}.
Therefore we focus on the case $g\geq 2$.

Expression for the generating functions $G_g(s,t_0,t_1,\ldots)$ have been obtained in many places in the literature (although often for special cases like $s=0$), see \cite[Section 5]{Itzykson_Combinatorics_1992} for an early reference and \cite{Itzykson_Combinatorics_1992,Zograf_Weil_1998,Dubrovin_Normal_2001,Okuyama_JT_2020}.
The general idea is that one can relate $G_g(s,t_0,t_1,\ldots)$ to the corresponding generating function with its first one, two or three arguments set to zero via appropriate argument substitutions. 
Schematically,
\begin{equation*}
	G_g(s,t_0,t_1,\ldots) \xrightarrow{\kappa_1\text{ to }\psi\text{-class}}  G_g(0,t_0,t_1,\ldots) \xrightarrow{\text{string equation}} G_g(0,0,t_1,t_2,\ldots)\xrightarrow{\text{dilaton equation}} G_g(0,0,0,t_2,\ldots).
\end{equation*}

The first substitution relies on a relation between intersection numbers involving $\kappa_1$ and pure $\psi$-class intersection numbers \cite[Section 2]{Witten_Two_1991}.
It implies that \cite[Theorem 4.1]{Manin_Invertible_2000} (see also \cite{Kaufmann_Higher_1996,Mulase_Mirzakhanis_2008,Liu_Recursion_2009,Bertola_Correlation_2016})
\begin{equation}\label{eq:wpsubstitution}
G_g(s,t_0,t_1,t_2,t_3\ldots) = G_g(0,t_0,t_1,t_2+\gamma_2,t_3+\gamma_3, \ldots), \qquad \gamma_k = \frac{(-1)^k}{(k-1)!} s^{k-1} \,\ind_{k\geq 2}.
\end{equation}

We will include derivations of the other two substitutions, based on the approach of \cite{Itzykson_Combinatorics_1992,Itzykson_Combinatorics_1992}, since they admit short proofs that are not quite spelled out in the literature (in the required form).
The starting points are the \emph{string equation} \cite[(2.22)]{Witten_Two_1991}
\begin{equation}
	\frac{\partial G_g}{\partial t_0}(0,t_0,t_1,\ldots) = \sum_{i=0}^\infty t_{i+1} \frac{\partial G_g}{\partial t_i}(0,t_0,t_1,\ldots)
\end{equation}
and \emph{dilaton equation} \cite[(2.47)]{Witten_Two_1991}
\begin{equation}\label{eq:Gdilaton}
	\frac{\partial G_g}{\partial t_1}(0,t_0,t_1,\ldots) = \left(2g-2+\sum_{i=0}^\infty t_i \frac{\partial }{\partial t_i}\right)G_g(0,t_0,t_1,\ldots)
\end{equation}
satisfied by the generating function of $\psi$-class intersection numbers.
Note that these are closely related to (or rather special cases of) the generalized string and dilaton equations \eqref{eq:stringwp} and \eqref{eq:dilatonwp}.
\begin{lemma}
For genus $g\geq 2$ we have
\begin{align}
		G_g(0,t_0,t_1,t_2,\ldots) &= G_g(0,0,f_1,f_2,\ldots),\label{eq:stringcons}\\
		G_g(0,0,t_1,t_2,\ldots) &= (1-t_1)^{2-2g} G_g\left(0,0,0,\frac{t_2}{1-t_1},\frac{t_3}{1-t_1},\ldots\right),\label{eq:dilatoncons}
\end{align}
where $f_p(t_0,t_1,\ldots) = \sum_{k=0}^\infty \frac{t_{k+p}}{k!} u^k$ and $u(t_0,t_1,\ldots)$ is the formal power series solution to $u = \sum_{k=0}^\infty \frac{t_k}{k!} u^k$.
\end{lemma}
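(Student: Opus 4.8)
The plan is to recognize both identities as the integrated forms of the two first-order linear PDEs that precede them, and to obtain them by the method of characteristics. Throughout I would introduce an auxiliary flow variable $\epsilon$ and work in the ring $\mathbb{Q}[[\epsilon,t_0,t_1,\ldots]]$, writing $\Phi(t_0,t_1,\ldots):=G_g(0,t_0,t_1,\ldots)$. The string equation exhibits $\Phi$ as invariant under a transport flow, whereas the dilaton equation (once restricted to $t_0=0$) carries a linear source term $2g-2$, so the associated flow rescales the solution by $e^{(2g-2)\epsilon}$. Integrating each flow from a generic point down to the locus $t_0=0$ (respectively $t_1=0$) produces the two substitutions.

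For \eqref{eq:stringcons} I would rewrite the string equation as the transport equation $L\Phi=0$ with $L=\partial_{t_0}-\sum_{i\geq0}t_{i+1}\partial_{t_i}$ and solve the characteristic system $\dot t_0=1-t_1$, $\dot t_i=-t_{i+1}$ for $i\geq1$. The subsystem in $t_1,t_2,\ldots$ is a pure shift, solved by $t_i(\epsilon)=\sum_{k\geq0}\frac{(-\epsilon)^k}{k!}t_{i+k}$, and integrating the $t_0$-equation gives $t_0(\epsilon)=t_0+\epsilon+\sum_{j\geq1}\frac{(-\epsilon)^j}{j!}t_j$. The key step is to identify the flow time reaching $t_0=0$: setting $t_0(\epsilon)=0$ and substituting $w=-\epsilon$ turns the condition into $w=\sum_{j\geq0}\frac{t_j}{j!}w^j$, which is exactly the defining equation of $u$, so $\epsilon=-u$. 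At this time $t_i(-u)=\sum_{k\geq0}\frac{u^k}{k!}t_{i+k}=f_i$, so the flow sends $(t_0,t_1,t_2,\ldots)\mapsto(0,f_1,f_2,\ldots)$, and invariance of $\Phi$ along the flow yields $\Phi(t_0,t_1,\ldots)=\Phi(0,f_1,f_2,\ldots)$.

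For \eqref{eq:dilatoncons} I would first specialize the dilaton equation to $t_0=0$, where the term $t_0\partial_{t_0}$ drops, so that $\Psi(t_1,t_2,\ldots):=G_g(0,0,t_1,\ldots)$ satisfies $(1-t_1)\partial_{t_1}\Psi-\sum_{i\geq2}t_i\partial_{t_i}\Psi=(2g-2)\Psi$. The characteristics $\dot t_1=1-t_1$, $\dot t_i=-t_i$ ($i\geq2$) integrate to $t_1(\epsilon)=1-(1-t_1)e^{-\epsilon}$ and $t_i(\epsilon)=t_ie^{-\epsilon}$, while the source gives $\Psi(t(\epsilon))=e^{(2g-2)\epsilon}\Psi(t(0))$. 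Choosing $\epsilon=\log(1-t_1)$ sends $t_1\mapsto0$ and $t_i\mapsto t_i/(1-t_1)$, and the transport factor becomes $(1-t_1)^{2g-2}$; rearranging gives $\Psi(t_1,t_2,\ldots)=(1-t_1)^{2-2g}\Psi(0,\tfrac{t_2}{1-t_1},\tfrac{t_3}{1-t_1},\ldots)$, which is \eqref{eq:dilatoncons}.

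The main obstacle is not the algebra but making ``flowing for a power-series time'' rigorous. I would treat $\epsilon$ as a genuine formal variable, so that the $t_i(\epsilon)$ and hence $\Phi(t_0(\epsilon),t_1(\epsilon),\ldots)$ lie in $\mathbb{Q}[[\epsilon,t_0,t_1,\ldots]]$ (the inner series have vanishing constant term, so the compositions are defined), and then use the PDE to show that the $\epsilon$-derivative of $\Phi(t(\epsilon))$, respectively of $e^{-(2g-2)\epsilon}\Psi(t(\epsilon))$, is identically zero, so each equals its value at $\epsilon=0$. One may then specialize $\epsilon$ to the power series $-u$ and $\log(1-t_1)$, both of which have zero constant term, so the map $\mathbb{Q}[[\epsilon,t]]\to\mathbb{Q}[[t]]$ is well-defined and delivers the stated identities. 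As a cross-check (and an alternative to the flow argument) I would verify that each substituted right-hand side matches the left-hand side on the boundary locus ($t_0=0$ forces $u=0$ hence $f_p=t_p$; $t_1=0$ makes the prefactor $1$) and satisfies the same PDE, and that the $t_0$-expansion (respectively $t_1$-expansion) of any solution is determined recursively by its boundary value, which gives the required uniqueness.
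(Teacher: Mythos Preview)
Your proposal is correct and follows essentially the same approach as the paper: both integrate the string and dilaton PDEs along their characteristic curves and then specialize the flow parameter to land on the locus $t_0=0$ (respectively $t_1=0$). The only cosmetic difference is the parametrization of the flow---the paper picks the parameter so that the target coordinate decreases linearly (e.g.\ $t_0\mapsto t_0-s$, so no hitting-time computation is needed), whereas you solve the standard characteristic ODEs in an auxiliary variable $\epsilon$ and then identify the hitting time $\epsilon=-u$ (respectively $\epsilon=\log(1-t_1)$); both routes produce the same substitution.
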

\begin{proof}
By taking a $t_0$-derivative of $u = \sum_{k=0}^\infty \frac{t_k}{k!} u^k$ we observe that $f_1 = 1 - 1 / \partial_{t_0} u$ and therefore for $p\geq 1$ we have
\begin{equation*}
	f_{p+1} = (1-f_1)\,\partial_{t_0} f_p.
\end{equation*}
Note that the string equation can be rewritten as
\begin{align*}
	\frac{\partial G_g}{\partial t_0} = \frac{1}{1-t_1} \sum_{i=1}^\infty t_{i+1} \frac{\partial G_g}{\partial t_i}.
\end{align*}
Using these identities we observe that
\begin{align*}
	\frac{\partial }{\partial s}G_g(0,t_0-s,f_1(s,t_1,\ldots),f_2(s,t_1,\ldots),\ldots) &= -\frac{\partial G_g}{\partial t_0} + \sum_{p=1}^\infty\frac{\partial f_p}{\partial t_0}  \frac{\partial G_g}{\partial t_{p}}\\
	&= -\frac{\partial G_g}{\partial t_0} - \frac{1}{1-f_1}\sum_{p=0}^\infty f_{p+1}\frac{\partial G_g}{\partial t_{p}}=0.
\end{align*}
Integrating the left-hand side from $s=0$ to $s = t_0$ we obtain the identity \eqref{eq:stringcons}.

Similarly the dilaton equation implies that
\begin{align*}
	\frac{\partial}{\partial s}\left( s^{2g-2} G_g(0,0,1-s,s t_2,s t_3, \ldots)\right) &= - s^{2g-2}\frac{\partial G_g}{\partial t_1}+(2g-2) s^{2g-3} G_g  + s^{2g-2}\sum_{i=2}^\infty t_i \frac{\partial G_g}{\partial t_i} \\
	&= s^{2g-3} \left( -\frac{\partial G_g}{\partial t_1} + (2g-2) G_g+ (1-s)\frac{\partial G_g}{\partial t_1}+ \sum_{i=2}^\infty st_i \frac{\partial G_g}{\partial t_i}\right) \stackrel{\eqref{eq:Gdilaton}}{=} 0.
\end{align*}
Integrating the left-hand side from $s=1-t_1$ to $s=1$ we obtain the relation \eqref{eq:dilatoncons}.
\end{proof}

\subsection{Proof of Theorem~\ref{thm:genfun}}\label{sec:proofgenfun}

Combining Lemma \ref{lem:FG} with \eqref{eq:wpsubstitution} we see that
\begin{align*}
	&F_g^{\textrm{WP}}(x_0,\ldots,x_d) = 2^{g-1}G_g(0,0,f_1,f_2,\ldots),\\
	&f_p = \sum_{k=0}^\infty \frac{t_{k+p}+\gamma_{k+p}}{k!} u^k, \qquad t_k=\frac{2^{-2k}}{k!} \sum_{i=0}^d x_i\, L_i^{2k}, \qquad \gamma_k =\frac{(-1)^k}{(k-1)!} \pi^{2k-2} \,\ind_{k\geq 2},
\end{align*}
where $u$ is the formal power series solution to
\begin{align*}
	0 = u - \sum_{k=0}^\infty \frac{t_k + \gamma_k}{k!} u^k = \frac{\sqrt{u}}{\pi}J_1(2\pi\sqrt{u}) - \sum_{i=0}^d x_i \,I_0(L_i\sqrt{u}) \stackrel{\eqref{eq:Rdef}}{=} Z(u).
\end{align*}
This means that $u$ expressed in terms of $x_0,\ldots,x_d$ is nothing but our formal power series $R$.
Moreover, comparing $f_1 = 1 - 1 / \partial_{t_0} u$ and $f_{p+1} = (1-f_1)\,\partial_{t_0} f_p$ (see the proof of the lemma above) to \eqref{eq:Mrecurrence}, we conclude that
\begin{equation*}
	f_k = \ind_{\{k=1\}} - M_{k-1}^{\mathrm{WP}}.
\end{equation*}
Applying the further substitution \eqref{eq:dilatoncons} we thus arrive at 
\begin{align*}
	F_g^{\textrm{WP}}(x_0,\ldots,x_d) &= 2^{g-1} (M_0^{\mathrm{WP}})^{2-2g}G_g\left(0,0,0,\frac{-M_1^{\mathrm{WP}}}{M_0^{\mathrm{WP}}},\frac{-M_2^{\mathrm{WP}}}{M_0^{\mathrm{WP}}},\ldots\right) \\
	&\stackrel{\eqref{eq:Ppol}}{=}2^{g-1} (M_0^{\mathrm{WP}})^{2-2g}\,\mathcal{P}_g\left(\frac{M_1^{\mathrm{WP}}}{M_0^{\mathrm{WP}}},\frac{M_2^{\mathrm{WP}}}{M_0^{\mathrm{WP}}},\ldots\right).
\end{align*}
This proves the formula for $F_g^{\mathrm{WP}}$ of Theorem~\ref{thm:genfun}.

The analogous computation starting from $F_g^{(0)}$ of Lemma \ref{lem:FG} yields 
\begin{align*}
	F_g^{(0)}(x_0,\ldots,x_d) &= 2^{g-1} (M_0^{(0)})^{2-2g}\,\mathcal{P}_g\left(\frac{M_1^{(0)}}{M_0^{(0)}},\frac{M_2^{(0)}}{M_0^{(0)}},\ldots\right),
\end{align*}
while Proposition \ref{thm:metricmapgf} implies
\begin{align*}
	F_g^{(0)}(x_0,\ldots,x_d) &= 2^{g-1} (M_0^{(0)})^{2-2g}\,\bar{\mathcal{P}}_g\left(\frac{M_1^{(0)}}{M_0^{(0)}},\frac{M_2^{(0)}}{M_0^{(0)}},\ldots\right).
\end{align*}
For proper choice of $d$ and $\alpha_1,\ldots,\alpha_d$ the formal power series $M_p^{(0)}$, $p\geq 0$, are algebraically independent, so we may conclude that the polynomials agree, $\bar{\mathcal{P}}_g = \mathcal{P}_g$ for every $g\geq 2$.
This finishes the proof of Theorem~\ref{thm:genfun}.

\appendix
\section{Proof of Proposition \ref{thm:discretepolyhedra}}
According to Rivin's theorem, ideal convex polyhedra with $n$ labeled vertices and dihedral angles in $\frac{\pi}{b}\Z$ are in bijection with (unrooted) $2\pi$-irreducible planar maps with $n$ labeled faces of circumference $2\pi$ and edges of length in $\frac{\pi}{b}\Z$.
After splitting each edge of length $\frac{\pi}{b}k$ into $k$ consecutive edges and forgetting the length assignments, one obtains precisely the set of (unrooted) $2b$-irreducible planar maps with $n$ labeled faces of degree $2b$.
These were enumerated in \cite[Section 9.1]{Bouttier2014}.
According to \cite[Theorem 1]{Budd2020} one may express
\begin{equation*}
	P_n^{(b)} = \hat{N}_{0,n}^{(b)}(b,b\ldots) + \frac{(n-1)!}{2}(-1)^n,
\end{equation*}
which is indeed polynomial in $b$ of degree $2n-6$.
By \cite[Proposition 15]{Budd2020} we have
\begin{align*}
	\hat{N}_{0,n}^{(b)}(b,b\ldots) = (n-2)! [z^{n-2}] \int_0^{J^{-1}(b;z)} (1+r)^{-2b-1} = -\frac{1}{2b}(n-2)! [z^{n-2}](1+J^{-1}(b;z))^{-2b},
\end{align*}
where we used that $I(b,b;r) = 1$. 
Together with the expansion 
\begin{equation*}
	-\frac{1}{2b}(1+J^{-1}(b;z))^{-2b} = -\frac{1}{2b} + z + \cdots
\end{equation*}
this implies that
\begin{equation*}
	\sum_{n=4}^\infty P_n^{(b)} \frac{z^{n-2}}{(n-2)!} = \frac{1}{2(1+z)^2}- \frac{1}{2} + \frac{1}{2b}-\frac{1}{2b} \left(1+J^{-1}(b;z) \right)^{-2b}.
\end{equation*}
Note that an expression like this was already pretty much contained in \cite[(9.21)]{Bouttier2014} (see also \cite[(2.7)]{Bouttier_Planar_2019}).

\end{document}